\pdfoutput=1
\RequirePackage{ifpdf}
\ifpdf
\documentclass[pdftex]{sigma}
\else
\documentclass{sigma}
\fi

\usepackage[cmtip,arrow]{xy}
\usepackage{pb-diagram,pb-xy}

\DeclareMathOperator{\Ker}{Ker}
\DeclareMathOperator{\corank}{corank}
\DeclareMathOperator{\Lep}{Lep}
\DeclareMathOperator{\id}{id}

\newcommand{\C}{\mathbb{C}}

\newcommand{\R}{{\mathbb R}}

%-----------------------------------------------------------------------------%
\newcommand{\der}{\partial}

\newcommand{\cA}{\mathcal{A}}

\newcommand{\cC}{\mathcal{C}}

\newcommand{\cE}{\mathcal{E}}

\newcommand{\cI}{\mathcal{I}}

\newcommand{\cL}{\mathcal{L}}

\newcommand{\cP}{\mathcal{P}}

\newcommand{\cR}{\mathcal{R}}

\newcommand{\cV}{\mathcal{V}}

\newcommand{\bG}{\boldsymbol{G}}

\newcommand{\bM}{\boldsymbol{M}}

\newcommand{\bP}{\boldsymbol{P}}

\newcommand{\bU}{\boldsymbol{U}}

\newcommand{\bX}{\boldsymbol{X}}
\newcommand{\bY}{\boldsymbol{Y}}
\newcommand{\bZ}{\boldsymbol{Z}}

\newcommand{\wed}{\wedge}

\newcommand{\alp}{\alpha}
\newcommand{\bet}{\beta}
\newcommand{\gam}{\gamma}
\newcommand{\del}{\delta}
\newcommand{\eps}{\epsilon}
\newcommand{\zet}{\zeta}
\newcommand{\tht}{\theta}
\newcommand{\iot}{\iota}

\newcommand{\lam}{\lambda}
\newcommand{\sig}{\sigma}

\newcommand{\ome}{\omega}

\newcommand{\For}{{\Lambda}}

\newcommand{\Thd}{{\Theta}}

\numberwithin{equation}{section}

\newtheorem{Theorem}{Theorem}[section]
\newtheorem{Corollary}[Theorem]{Corollary}
\newtheorem{Lemma}[Theorem]{Lemma}
\newtheorem{Proposition}[Theorem]{Proposition}
 { \theoremstyle{definition}
\newtheorem{Definition}[Theorem]{Definition}
\newtheorem{Example}[Theorem]{Example}
 }

\begin{document}

\allowdisplaybreaks

\renewcommand{\thefootnote}{$\star$}

\newcommand{\arXivNumber}{1508.01752}

\renewcommand{\PaperNumber}{045}

\FirstPageHeading

\ShortArticleName{Variational Sequences, Representation Sequences and Applications in Physics}

\ArticleName{Variational Sequences, Representation Sequences\\ and Applications in Physics\footnote{This paper is a~contribution to the Special Issue on Analytical Mechanics and Dif\/ferential Geometry in honour of Sergio Benenti.
The full collection is available at \href{http://www.emis.de/journals/SIGMA/Benenti.html}{http://www.emis.de/journals/SIGMA/Benenti.html}}}

\Author{Marcella PALESE~$^{\dag^1}$, Olga ROSSI~$^{\dag^2\dag^3}$, Ekkehart WINTERROTH~$^{\dag^1\dag^4}$ and Jana MUSILOV\'A~$^{\dag^5}$}

\AuthorNameForHeading{M.~Palese, O.~Rossi, E.~Winterroth and J.~Musilov\'a}

\Address{$^{\dag^1}$~Department of Mathematics, University of Torino, via C. Alberto 10, 10123 Torino, Italy}
\EmailDD{\href{mailto:marcella.palese@unito.it}{marcella.palese@unito.it},
 \href{mailto:ekkehart.winterroth@unito.it}{ekkehart.winterroth@unito.it}}

\Address{$^{\dag^2}$~Department of Mathematics, Faculty of Science, University of Ostrava,\\
\hphantom{$^{\dag^2}$}~Ostrava, Czech Republic}
\EmailDD{\href{mailto:olga.rossi@osu.cz}{olga.rossi@osu.cz}}

\Address{$^{\dag^3}$~Department of Mathematics and Statistics, La Trobe University, Melbourne, Australia}

\Address{$^{\dag^4}$~Lepage Research Institute, 783 42 Slatinice, Czech Republic}

\Address{$^{\dag^5}$~Institute of Theoretical Physics and Astrophysics, Masaryk University Brno, Czech Republic}
\EmailDD{\href{mailto:janam@physics.muni.cz}{janam@physics.muni.cz}}

\ArticleDates{Received September 22, 2015, in f\/inal form April 26, 2016; Published online May 02, 2016}

\Abstract{This paper is a review containing new original results on the f\/inite order variational sequence and its dif\/ferent representations with emphasis on applications in the theory of variational symmetries and conservation laws in physics.}

\Keywords{f\/ibered manifold; jet space; Lagrangian formalism; variational sequence; variational derivative; cohomology; symmetry; conservation law}

\Classification{55N30; 55R10; 58A12; 58A20; 58E30; 70S10}

\renewcommand{\thefootnote}{\arabic{footnote}}
\setcounter{footnote}{0}

\section{Introduction}\label{intro}

\looseness=1
After \'Elie Cartan, who in 1922 introduced dif\/ferential geometric methods into the calculus of variations using the celebrated Cartan form to formulate the variational principle \cite{Cart}, Th\'eophile Lepage in his pioneer work \cite{Lep36-I,Lep36-II} initiated the study of the calculus of variations for f\/ield theo\-ries (multiple integrals) as a theory of dif\/ferential forms and their exterior dif\/ferential modulo contact forms (congruences). The main advantage of such an approach evidently is a possi\-bi\-li\-ty of framing the calculus of variations within the de Rham sequence of dif\/ferential forms on a~mani\-fold and thus using the de Rham cohomology as an ef\/fective tool to study relations between the image and kernel of various dif\/ferential operators appearing in the calculus of variations.

The attempt to understand global structures of the calculus variations motivated the use of $C$-spectral sequences, and gave rise to the constructions of variational sequences and bicomplexes.
This direction was initiated in 1950's by Dedecker~\cite{Ded53,Ded55,Ded56,Ded57,Ded}, and continued in the work of Horndeski~\cite{Hor}, Gelfand and Dikii~\cite{GeDi}, Olver and Shakiban~\cite{OSh}, Dedecker and Tulczyjew~\cite{DT80}, Tulczyjew~\cite{Tul77,Tu80,Tu88}, Takens~\cite{Tak79}, Anderson~\cite{An}, Anderson and Duchamp~\cite{AnDu80}, Dickey~\cite{Dic}, Vinogradov~\cite{Vin77,Vin84-I,Vin84-II}, and others. A main motivation for this research was
\begin{itemize}\itemsep=0pt
\item \looseness=1 to f\/ind solution of {\em the inverse problem of the calculus of variations} which means to f\/ind local and global constraints for a system of ordinary or partial dif\/ferential equations to come from a variational principle as Euler--Lagrange equations, and to construct a Lag\-rangian,

\item to determine the local and global structure of {\em variationally trivial Lagrangians}, i.e., Lag\-rangians giving rise to identically zero Euler--Lagrange expressions (the kernel of the Euler--Lagrange operator),

\item to understand the form and origin of dif\/ferent ``canonically looking'' expressions appearing in the calculus of variations, as for example
\begin{gather*}
\frac{\partial L}{\partial y^\sigma} - d_j \frac{\partial L}{\partial y^\sigma_j} + d_jd_k \frac{\partial L}{\partial y^\sigma_{jk}} - \cdots ,
\qquad
\frac{\partial L}{\partial y^\sigma_p} - d_j \frac{\partial L}{\partial y^\sigma_{jp}} + d_jd_k \frac{\partial L}{\partial y^\sigma_{jkp}} - \cdots,
\\
\frac{\partial E_\sigma}{\partial q^\nu_j} -
\sum_{k=j}^{s} (-1)^k \binom{k}{j} \frac{d^{k-j}}{dt^{k-j}}\frac{\partial E_\nu}
{\partial q^\sigma_k},
\qquad
\sum_{l=0}^{s-j-1} (-1)^{j+l}\binom{j+l}{l}
 \frac{d^l}{dt^l} \frac{\partial E_\sigma}{\partial q^\nu_{j+l+1}},
\end{gather*}

\item to understand local and global phenomena typical for variational problems given by local data, as, for instance examples of global Euler--Lagrange forms which do not admit a~global Lagrangian; the best known examples are the Galilean relativistic mecha\-nics, or Chern--Simons f\/ield theories (where a~global Lagrangian exists but is not gauge-invariant).
\end{itemize}

The theory was f\/irst formulated on {\em infinite} order jets. After several attempts to obtain an alternative construction using f\/inite order structures (see, e.g.,~\cite{AnDu80}), the {\em finite order variational sequence} was discovered by Krupka 25~years ago~\cite{Kru90}. The inspiration originated from the Lepage's idea of a~`congruence', indicating that there should be a {\em close relationship between the Euler--Lagrange operator and the exterior derivative of differential forms}, and resulted in the concept of a f\/inite order (cohomological) exact sequence of forms, in which the Euler--Lagrange mapping is included as one arrow.

Compared with the other approaches, the concept of Krupka's variational sequence is conceptually very simple due to the following reasons:
\begin{itemize}\itemsep=0pt
\item it uses only f\/inite order jets,
\item it appears as a {\em quotient sequence of the de Rham sequence}.
\end{itemize}

This makes it quite easily accessible to potential users, since factorization is a natural, straightforward, commonly understood, and well-mastered technique. Moreover, the facto\-ri\-za\-tion procedure can be incorporated into intrinsic operators which allow to avoid tedious and technically dif\/f\/icult calculations.

It should be stressed, however, that here is an additional benef\/it of the setting on f\/inite order jets: Namely, it provides precise results on the {\em order} of the local and global objects of interest.

The aim of this paper is to provide the reader a comprehensive source of the current theory of the f\/inite order variational sequences as it stands 25 years after its beginnings. While for
the variational bicomplex theory there is a standard reference the (unpublished) book by Anderson~\cite{An}, the results on the f\/inite order variational sequence up to now have not been collected, revised and completed in a self-contained article. However, our aim is not merely to provide a~revised exposition of known results. Apart from putting known results into a unif\/ied treatment, we add a~number of original results in order to develop the theory and make it complete. Our aim is also to stimulate the use and applications of the variational sequence in physics. As a~motivation we mention a few applications, some of which are interesting and not widely known or even are original here.

\looseness=1
The paper is organized as follows:
Section~\ref{section2} introduces the variational sequence as a quotient sequence of the de Rham sequence, developed by Krupka~\cite{Kru90,Kru97,Kru99,Kru04} (see also \mbox{\cite{Gr99,Gr08, Kru08,Kru15}}).
Section~\ref{section3} is the core of the paper, and deals with the problem of representation of the va\-riatio\-nal sequence by dif\/ferential forms.
In the variational sequence, the objects are classes of local dif\/ferential forms. However (as pointed out already by Krupka) the classes can be represented by (global) dif\/ferential forms: Lagrangians or Lepage $n$-forms (where $n$ is the number of independent variables in the variational problem), Euler--Lagrange forms, Helmholtz forms. This property initiated the search of representation operators and representations with help of dif\/ferential forms. We present two dif\/ferent representation sequences: the {\em Takens representation sequence} based on the so-called {\em interior Euler operator}~\cite{An}
which assigns to every class in the $q$-th column of the variational sequence a source form of degree~$q$ (a Lagrangian if $q=n$, a~dynamical form if $q=n+1$, a Helmholtz-like form if $q = n+2$), and the {\em Lepage representation sequence}. The latter is a sequence of Lepage forms (Lepage equivalents of source forms), such that the morphisms are just the {\em exterior derivatives}. The existence of such a~sequence clarif\/ies the relationship of the variational morphisms to the exterior derivative ope\-ra\-tor. Moreover, it transfers the questions on the structure of the kernel and image of all the variational morphisms just to application of the Poincar\'e lemma. A~complete representation of the
variational sequence by source forms (of any degree) was given by Musilov\'a and Krbek \cite{Kr02,KrMu03,KrMu05} and independently by Krupka and his collaborators \cite{KrSe05,KUV13,VoUr14}. Next, in Section~\ref{section3} we recall the concept of Lepage form (\cite{Kru73,KrSe05,olga86,KS09} and introduce the Lepage representation of the va\-ria\-tio\-nal sequence. As a~new result concerning Lepage $n$-forms, we present an intrinsic formula for the Cartan form (Theo\-rem~\ref{thmCartanform}, formula~(\ref{Cartan})). Remarkably, our formula explicitly shows that (and explains why) the higher-order Cartan form generically is not global (this fact was a surprise when discovered and demonstrated by a counterexample in 1980's~\cite{HoKo}, and up to now has not been well understood). In the subsection on Lepage forms {\em of higher degrees} (and any order) we present for the f\/irst time properties of general Lepage forms, and namely we give a general intrinsic formula for a Lepage $(n+k)$-form of any order. We also f\/ind a~higher degree generalization of the famous Cartan form. The techniques based on the use of the interior Euler operator and the residual operator not only lead to elegant formulas and easy proofs; they also enable to avoid lengthy, tedious, and sometimes even impossible calculations in coordinates. The results then help us to f\/ind a~complete correct Lepage representation of the variational sequence. The last remark in Section~\ref{section3} concerns the variational order of objects in the variational sequence. We explain how due to the f\/inite order setting it is possible to make (local and global) conclusions on a~precise order of Lagrangians and other objects in the variational sequence, distinguishing it from alternative constructions on inf\/inite order jets.

In Section~\ref{section4} we are concerned with the Lie derivative in the variational sequence. We explore the important property that the Lie derivative of dif\/ferential forms with respect to prolongations of projectable vector f\/ields preserves the contact structure. This suggests that a Lie derivative of classes of forms, a {\em variational Lie derivative}, can be correctly def\/ined as the equivalence class of the standard Lie derivative of forms and represented by forms. New results in this section are then contained in subsequent theorems generalizing the f\/irst variation formula to forms of any degree.

\looseness=1 The last section of the paper is devoted to selected applications of the variational sequence theory. We recall the fundamental well-known applications to the inverse variational problem and to variationally trivial Lagrangians. Apart from that we emphasize some other possible applications which have not yet been fully explored~-- like, for instance, properties of Helmholtz forms. We include also examples of variational problems def\/ined by local data, and on symmetries and conservation laws, emphasizing cohomological questions and local and global phe\-nomena.

For a reader interested in a wider context of the theory of variational sequences, we refer to the survey works~\cite{Kru08,Vi08}, and to the book~\cite{Kru15}.

\section{The variational sequence}\label{section2}

\subsection{Jet bundles and contact forms}

Fibred manifolds and their jet prolongations represent a convenient mathematical framework for mechanics and f\/ield theories, suitable for investigation of Lagrangian systems of dif\/ferent orders, and one or many independent variables within a suf\/f\/iciently general and unif\/ied geometric framework. For an excellent exposition of the jet bundle geometry we refer to the book~\cite{Sa89}.

In what follows, we shall consider a smooth f\/ibred manifold $\pi\colon \bY \to \bX$, with $\dim \bX = n$ and $\dim \bY = n+m$, and its $r$-jet prolongations $\pi_r\colon J^r \bY \to \bX$, where $r \geq 1$. For $n =1$ the mani\-fold~$\bY$ is a space of events for mechanical systems of~$m$ degrees of freedom, and local sections of~$\pi$ are graphs of curves, so usually $\bX = \R$ (meaning time, or just a parameter for the curves). If $n > 1$ then $\bX$ has the meaning of a physical space or spacetime, and local sections of~$\pi$ are physical f\/ields over the mani\-fold~$\bX$. Then $J^r\bY$ is a manifold of points $j^r_x\gamma$, the equivalence classes of local sections $\gamma$ of $\pi$ with the same value at~$x$ and the same partial derivatives at~$x$ up to the order~$r$. Thus $J^r\bY$ serves as a domain for functions, dif\/ferential forms, vector f\/ields, and other objects depending on higher derivatives (up to the $r$-th order).

Due to the af\/f\/ine bundle structure of $\pi_{r+1,r} \colon J^{r+1} \bY \to J^{r} \bY$, we have a natural splitting $ J^r\bY\times_{J^{r-1}\bY}T^*J^{r-1}\bY = J^r\bY\times_{J^{r-1}\bY}(T^*\bX \oplus V^*J^{r-1}\bY)$ which induces natural splittings into horizontal and vertical parts of projectable vector f\/ields, forms, and of the exterior dif\/ferential on $J^r\bY$.

A dif\/ferential $q$-form $\omega$ on $J^r \bY$ is called {\em contact} if $J^r\gamma^*\omega = 0$ for every local section~$\gamma$ of~$\pi$.
Contact forms on $J^r \bY$ form an ideal in the exterior algebra, called the {\em contact ideal} of order~$r$. The contact ideal is generated by contact $1$-forms and their exterior derivatives~\cite{Kru95}.
A~dif\/ferential $q$-form $\omega$ on $J^r \bY$ is called {\em horizontal}, or {\em $0$-contact}, if $\xi\rfloor \omega = 0$ for every vector f\/ield~$\xi$ on~$J^r \bY$, vertical with respect to the projection onto~$\bX$.

From the def\/initions it follows that every $q$-form for $q > n$ is contact, and that every form~$\omega$ on $J^r \bY$, if lifted to $J^{r+1} \bY$, is a sum of a unique horizontal and contact form. This splitting can be further ref\/ined if the concept of a {\em $k$-contact form} for $k \geq 1$ is introduced. A~contact form~$\omega$ is called $k$-contact if for every vertical~$\xi$ the contraction $\xi\rfloor \omega$ is $(k-1)$-contact. Then, as proved in~\cite{Kru83}, every $q$-form $\omega$ on $J^r \bY$, $q \geq 1$, admits the unique decomposition
\begin{gather}\label{splitting of forms}
 \quad \pi^*_{r+1,r} \omega = \sum_{k = 0}^q p_k \omega
\end{gather}
into a sum of $k$-contact forms, $0 \leq k \leq q$. For the horizontal ($0$-contact) operator often the notation $h = p_0$ is used. The form $p_k \omega$ is called the {\em $k$-contact component of $\omega$}.
Note that if $q \geq n+1$ then $\omega$ is contact, and it is {\em at least $(q-n)$-contact}, i.e., the contact components $p_0, p_1, \dots, p_{q-n-1}$ of $\omega$ are equal to zero. If, moreover, $p_{q-n}\omega = 0$, we speak about a {\em strongly contact} form~\cite{Kru90}.

An important special case of the decomposition \eqref{splitting of forms} concerns closed forms and their local primitives, obtained due to the Poincar\'e lemma, by means of the Poincar\'e homotopy opera\-tor~$\cP$. Since every form~$\omega$ can be locally expressed as $\omega = \cP d\omega + d\cP \omega$, equation $\omega = d\rho$ has a local solution $\rho = \cP \omega$. Applying, however, the decomposition into contact components we can see that there arises a modif\/ied homotopy operator, denoted by $\cA$, and called the {\em contact homotopy operator}~\cite{Kru83}, which is adapted to the decomposition \eqref{splitting of forms} and def\/ined by
$\cA p_0 \omega = 0$, $\cA p_k \omega = p_{k-1} \cP \omega$, $ k\geq 1$.
It satisf\/ies $\pi_{r+1,r}^*\omega = \cA d(\pi_{r+1,r}^*\omega) + d \cA(\pi_{r+1,r}^*\omega)$, and moreover, if (locally)
$\omega = d\rho$, then
\begin{gather*}%\label{splitting homotopy}
\quad \pi^*_{r+1,r} \rho = \cA(\pi_{r+1,r}^*\omega) = \sum_{k = 0}^q \cA p_{k+1} \omega,
\end{gather*}
hence
$p_k \rho = \cA p_{k+1} \omega$.
Compared to $\cP$, the operator $\cA$ concerns vertical curves (curves in the f\/ibres over~$\bX$) only.

Notice that the decomposition \eqref{splitting of forms} applied to $\omega = d\rho$ induces a splitting of the exterior derivative $d$ to $\pi_{r+1, r}^* d = d_H + d_V$ where the {\em horizontal derivative} $d_H = hd$, and the {\em vertical derivative} $d_V$ is the contact part of $\pi_{r+1, r}^* d$.

The splitting of vector f\/ields then arises with help of the total derivative operator. If we denote $(x^i,y^\sigma)$ local f\/ibred coordinates on $\bY$, and $(x^i,y^\sigma_J)$ where $J$ is a multiindex, $0 \leq |J| \leq r$,
the associated coordinates on $J^r\bY$, then the $i$-th total (formal) derivative takes the form
\begin{gather*}
d_i = \frac{\partial}{\partial x^i} + \sum_{|J|=0}^r y^\sigma_{Ji} \frac{\partial}{\partial y^\sigma_J}.
\end{gather*}
Considered as a vector f\/ield along the projection $\pi_{r+1,r}$, the total derivative induces a splitting of vector f\/ields to a horizontal and vertical part as follows: Letting
\begin{gather*}
\xi = \xi^i \frac{\partial}{\partial x^i} + \sum_{|J|=0}^r \xi^\sigma_{J} \frac{\partial}{\partial y^\sigma_J}
\end{gather*}
be a vector f\/ield on $J^r\bY$, in order to obtain the splitting $\xi = \xi_H + \xi_V$, we put
\begin{gather*}
\xi_H = \xi^i d_i , \qquad
\xi_V = \xi - \xi_H = \sum_{|J|=0}^r \big(\xi^\sigma_{J} - y^\sigma_{Ji}\xi^i\big) \frac{\partial}{\partial y^\sigma_J}.
\end{gather*}
Note that both $\xi_H$ and $\xi_V$ are vector f\/ields along the projection $\pi_{r+1,r}$ rather than `ordinary' vector f\/ields on $J^r\bY$. If $\Xi$ is a projectable vector f\/ield on $\bY$, we write $J^r \Xi_V = (J^r \Xi)_V$ (this can be viewed as a def\/inition of prolongation of $\Xi_V$ which is a vector f\/ield along the projec\-tion~$\pi_{1,0}$).

Recall that ${\cC}(\pi_r)$, the {\em contact distribution} or {\em Cartan distribution} of order $r$, is locally annihilated by contact $1$-forms
\begin{gather*}
\omega^\sigma_J = dy^\sigma_J - y^\sigma_{Jj} dx^j , \qquad 1 \leq \sigma \leq m ,
\end{gather*}
or, equivalently, locally generated by the vector f\/ields $d_i$ along $\pi_{r,r-1}$ and $\partial/\partial y^\sigma_J$ where $|J| = r$.
The Cartan distribution on $J^r \bY$ is not completely integrable, hence the ideal generated by contact $1$-forms of order $r$ is not closed. Its closure is then the {\em contact ideal} of order $r$ (the ideal of all contact forms on $J^r \bY$).

In what follows, we shall need on $J^r \bY$ the ideal generated by lifts of {\em contact $1$-forms of order~$1$}, whose elements take the form $\pi_{r,1}^*\omega \land \eta$ where $\omega \in {\cC}^0(\pi_1)$ (the annihilator of~$\cC (\pi_1)$) and~$\eta$ is of order~$r$. Since locally they are expressed as
$\omega^\sigma \land \eta_\sigma$, where
$\omega^\sigma = dy^\sigma - y^\sigma_j dx^j$, $1 \leq \sigma \leq m$,
are the basic local contact $1$-forms of order~$1$, we say that forms belonging to this ideal are {\em $\omega^\sigma$-generated}.

In the calculus of variations and the theory of dif\/ferential equations on f\/ibred manifolds, the fundamental role is played by the sheaf $\Lambda^r_{n,\bX}$ of horizontal ($0$-contact) $n$-forms on~$J^r \bY$, the elements of which are called {\em Lagrangians} of order $r$, and the sheaf $\Lambda^r_{n+1,1,\bY}$ of $1$-contact $(n+1)$-forms on $J^r \bY$, horizontal with respect to the projection onto~$\bY$, the elements of which are called {\em dynamical forms} of order $r$. By means of variation of the action def\/ined by a Lagran\-gian~$\lambda$ one obtains a distinguished dynamical form $E_\lambda$, called the {\em Euler--Lagrange form} of~$\lambda$~\cite{Kru73}; the components of $E_\lambda$ in every f\/ibred chart are the Euler--Lagrange expressions of the Lagrangian~$\lambda$. Remarkably, if $\lambda$ is of order $r$ then $E_\lambda$ is of order $\leq 2r$.

The mapping assigning to every Lagrangian its Euler--Lagrange form is called the {\em Euler--Lagrange mapping}.

A dynamical form $E$ is called {\em globally variational} if there exists a Lagrangian $\lambda$ such that (possibly up to a jet projection) $E = E_\lambda$. $E$ is called {\em locally variational} if every point in the domain of $E$ has a neighbourhood $U$ where $E$ is variational \cite{Kru83}.

Dynamical forms are often called {\em source forms}, as originally introduced by Takens
\cite{Tak79}. However, we shall follow Krupkov\'a and Prince \cite{KP07} who extended the concept of source form to forms of all degrees $q = n+ k$ where $k \geq 1$:

\begin{Definition} Let $k \geq 1$. By a source form of degree $n+k$ we shall mean a $\omega^\sigma$-generated $k$-contact $(n+k)$-form.
\end{Definition}

For $k=1$, dynamical forms are $\omega^\sigma$-generated $1$-contact $(n+1)$-forms, hence, indeed, source forms of degree $n+1$.
Source forms of degree $n+2$ are called {\em Helmholtz-like forms}~\cite{KrMa10}. For the calculus of variations the most important examples of source forms of degree $n+1$ are {\em Euler--Lagrange forms}, and of source forms of degree $n+2$ the {\em Helmholtz forms}.

\subsection{The variational sequence}

Now we recall the f\/inite order variational sequence as introduced by Krupka in~\cite{Kru90}. The concept uses the sheaf theory (standard references are, e.g., \cite{Bre67,We}). However, for a f\/irst understanding, a concrete use of the variational sequence, and calculations it is not necessary to be familiar with all the rather dif\/f\/icult mathematical theory. In place of a `sheaf' of forms in this context one can roughly imagine a family of modules of local dif\/ferential forms def\/ined on the open subsets of the corresponding manifold.

In what follows, we denote by $ \R_{\bY}$ the constant sheaf over $\R$, and by $\Lambda^r_q$ the sheaf of $q$-forms on $J^r\bY$.
We let $\Lambda^r_{0,c} = \{0\}$, and $\Lambda^r_{q,c}$ be the sheaf of contact $q$-forms, if $q \leq n$, or the sheaf of strongly contact $q$-forms, if $q >n$, on $J^r \bY$. We set
\begin{gather*}
\Theta^r_q = \Lambda^r_{q,c} + \big(d \Lambda^r_{q-1,c}\big) ,
\end{gather*}
where $d \Lambda^r_{q-1,c}$ is the image of $\Lambda^r_{q-1,c}$ by the exterior derivative $d$, and
$(d \Lambda^r_{q-1,c})$ denotes the sheaf generated by the presheaf $d \Lambda^r_{q-1,c}$. We note that
$\Theta^r_q = 0$ for $q > \corank {\cal C}(\pi_r)$.

The subsequence
\begin{gather*}
0 \to \Theta^r_1 \to \Theta^r_2 \to \Theta^r_3 \to \cdots
\end{gather*}
of the de Rham sequence
\begin{gather*}
0 \to \R_{\bY} \to \Lambda^r_0 \to \Lambda^r_1 \to \Lambda^r_2 \to \Lambda^r_3 \to \cdots
\end{gather*}
is an exact sequence of soft sheaves. The quotient sequence
\begin{gather} \label{vseq}
0 \to \R_{\bY} \to \Lambda^r_0 \to \Lambda^r_1 / \Theta^r_1 \to \Lambda^r_2 / \Theta^r_2 \to \Lambda^r_3 / \Theta^r_3 \to \cdots
\end{gather}
is called the {\em variational sequence} of order $r$. The main theorem due to Krupka~\cite{Kru90} then states that {\em the variational sequence is an acyclic resolution of the constant sheaf $\R_{\bY}$} (meaning that the sequence is locally exact with the exception of~$\R_{\bY}$). Hence, due to the abstract de Rham theorem, {\em the cohomology groups of the cochain complex of global sections of the variational sequence are identified with the de Rham cohomology groups $H^{q}_{\rm dR}\bY$ of the manifold~$\bY$}.

It should be stressed that objects in the variational sequence are elements of the quotient sheaves $\Lambda^r_q / \Theta^r_q$, i.e., they are {\em equivalence classes of local $r$-th order differential $q$-forms}. We denote by $[\rho] \in \Lambda^r_q / \Theta^r_q$ the class of $\rho \in \Lambda^r_q$. Morphisms in the variational sequence are then by construction {\em quotients of the exterior derivative $d$}. We denote
\begin{gather*}
{\cal E}_q\colon \ \Lambda^r_q / \Theta^r_q \to \Lambda^r_{q+1} / \Theta^r_{q+1},
\end{gather*}
so that
\begin{gather*}
{\cal E}_q ([\rho]) = [d\rho],
\end{gather*}
and ${\cal E}_q ([\rho]) = [d\rho] = 0$ means that there exists $[\eta] \in \Lambda^r_{q-1} / \Theta^r_{q-1}$ such that $[\rho] = {\cal E}_{q-1} ([\eta]) = [d\eta]$. If, moreover, $H^q_{\rm dR} \bY = \{0\}$ then the class $[\eta]$ has a {\em global} representative.

The meaning of this abstract construction for the calculus of variations becomes clear in view of the following theorem:

\begin{Theorem}[Krupka \cite{Kru90,Kru08}]\label{IsoSh}
The sheaf $\Lambda^r_n/ \Theta^r_n$ is isomorphic with a subsheaf of the sheaf of Lagran\-gians~$\Lambda^{r+1}_{n,\bX}$, the sheaf $\Lambda^r_{n+1}/ \Theta^r_{n+1}$ is isomorphic with a subsheaf of the sheaf of source forms $\Lambda^{2r+1}_{n+1,1,\bY}$, and the quotient mapping ${\cal E}_n\colon \Lambda^r_n / \Theta^r_n \to \Lambda^r_{n+1} / \Theta^r_{n+1}$ is the Euler--Lagrange mapping.
\end{Theorem}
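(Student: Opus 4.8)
The plan is to write down explicit representation morphisms, verify that they factor through the quotient sheaves and are injective with the stated targets, and then identify the induced arrow ${\cal E}_n$ with the Euler--Lagrange mapping.

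\emph{Degree $n$.} For $\rho\in\Lambda^r_n$ put $h\rho:=p_0\big(\pi^*_{r+1,r}\rho\big)$, the horizontalization from \eqref{splitting of forms}; this is a Lagrangian of order $r+1$, and $\rho\mapsto h\rho$ is a morphism of sheaves because $p_0$ commutes with restrictions and with jet pullbacks. It kills $\Theta^r_n=\Lambda^r_{n,c}+d\Lambda^r_{n-1,c}$: if $\rho$ is contact then $h\rho=0$ by the definition of a contact form; and if $\rho=d\sigma$ with $\sigma$ contact, then $h\,d\sigma=p_0(d\sigma)=0$, because applying $d$ to a generator $\omega^\nu_J$ of the contact ideal yields only $(\geq 1)$-contact terms (the would-be horizontal part cancels by symmetry of the jet coordinates), and the same holds for every contact form. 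Hence $h$ descends to $\Lambda^r_n/\Theta^r_n\to\Lambda^{r+1}_{n,\bX}$, and it is injective: $h\rho=0$ says $\pi^*_{r+1,r}\rho$ is contact, so $\rho$ itself is contact (a section of $\pi$ factors through every jet projection), i.e.\ $\rho\in\Lambda^r_{n,c}\subseteq\Theta^r_n$. Thus $\Lambda^r_n/\Theta^r_n$ is isomorphic to the subsheaf $h(\Lambda^r_n)$ of $\Lambda^{r+1}_{n,\bX}$.

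\emph{Degree $n+1$.} Here I would replace $p_0$ by $I\circ p_1$, where $I$ is the interior Euler operator (recalled in Section~\ref{section3}), which projects a $1$-contact $(n+1)$-form onto an $\omega^\sigma$-generated one, i.e.\ onto a source form; a count of the total derivatives involved shows $I(p_1\rho)$ has order $\leq 2r+1$ when $\rho$ has order $r$, and $\rho\mapsto I(p_1\rho)$ is again a sheaf morphism. It annihilates $\Theta^r_{n+1}=\Lambda^r_{n+1,c}+d\Lambda^r_{n,c}$ because a strongly contact $(n+1)$-form has vanishing $1$-contact component, while for $\sigma\in\Lambda^r_{n,c}$ the $1$-contact part $p_1(d\sigma)$ is $d_H$-exact, and the kernel of $I$ on $1$-contact $(n+1)$-forms is precisely the image of $d_H$. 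For injectivity, if $I(p_1\rho)=0$ that same kernel description furnishes a $1$-contact $n$-form $\beta$ with $p_1\rho=d_H\beta=p_1(d\beta)$; since $\beta$ is contact and $d\beta$ differs from $p_1(d\beta)$ only by a strongly contact term, $\pi^*_{r+1,r}\rho$ differs from $d\beta$ only by a strongly contact term, whence $\rho\in\Theta^r_{n+1}$. I expect the genuine difficulty of the theorem to concentrate at exactly this point: proving the kernel description of $I$ (a Poincar\'e-type lemma for the horizontal differential $d_H$) and, above all, carrying out the order estimates precisely enough to land at the sharp jet order $2r+1$ rather than one order higher.

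\emph{Identification of ${\cal E}_n$.} Finally, take $\rho\in\Lambda^r_n$ with $h\rho=\lambda$. Then $\rho-\lambda$ is contact, so ${\cal E}_n([\rho])=[d\rho]=[d\lambda]$; and since $\lambda$ is horizontal of top horizontal degree, $\pi^*_{r+1,r}d\lambda$ is already $1$-contact and equals $d_V\lambda$, so $p_1(d\lambda)=d_V\lambda$ and $I(d_V\lambda)=E_\lambda$ by the integration-by-parts content of the first variation. Under the two representations just constructed, ${\cal E}_n$ therefore corresponds to the Euler--Lagrange mapping $\lambda\mapsto E_\lambda$, the bound $\leq 2r$ on the order of $E_\lambda$ being the one already recorded.
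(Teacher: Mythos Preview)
Your approach is essentially the same as the paper's. The paper does not prove Theorem~\ref{IsoSh} at the point where it is stated (it is attributed to Krupka), but Section~\ref{section3} later sets up exactly the two representation maps you describe: $R_n([\rho])=h\rho$ for degree~$n$ and $R_{n+1}([\rho])=\cI(\rho)$ for degree~$n+1$, and then reads off $E_n(h\rho)=\cI(dh\rho)=E_\lambda$ from the commutativity of the representation diagram. Your degree-$n$ argument (well-definedness and injectivity of $h$ on classes) is carried out in the paper in the bullet ``If $q\le n$\ldots'' of Section~3.2; your degree-$(n+1)$ argument is the bullet ``If $q=n+k$\ldots'', which invokes property~(5) of $\cI$, namely $\Ker\cI=\Theta^r_{n+k}$, without proof.

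You are right that the genuine content is the kernel description $\Ker\cI=\Theta^r_{n+1}$ together with the finite-order control; the paper lists this as a property of $\cI$ established in \cite{Kr02,KrMu03,KrMu05,KrSe05} and does not reprove it. One small refinement: in your injectivity step for degree~$n+1$ you produce a $1$-contact $\beta$ with $p_1\rho=d_H\beta$ and conclude $\rho\in\Theta^r_{n+1}$, but the $\beta$ coming from the $d_H$-Poincar\'e lemma need not be of order~$r$, so one cannot directly write $\rho=\vartheta+d\beta'$ with $\beta'\in\Lambda^r_{n,c}$; the passage back to order~$r$ is exactly the order estimate you flag, and it is the nontrivial part of Krupka's original proof.
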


By this theorem, the elements of the quotient sheaf $\Lambda^r_n / \Theta^r_n$ have the meaning of Lagrangians. The kernel of ${\cal E}_n$ then consists of null-Lagrangians (variationally trivial Lagrangians), and its image, which, by exactness is the kernel of the next morphism,
\begin{gather*}
{\cal E}_{n+1}\colon \ \Lambda^r_{n+1} / \Theta^r_{n+1} \to \Lambda^r_{n+2} / \Theta^r_{n+2},
\end{gather*}
represents {\em variational equations}. Therefore ${\cal E}_{n+1}$ is called {\em Helmholtz mapping}. The name refers to the famous ``Helmholtz conditions''~\cite{Hel}, necessary and suf\/f\/icient conditions for dif\/ferential equations to be variational (as they stand).
Remarkably, the Helmholtz morphism, discovered within the variational sequence theory, has not been known in the classical calculus of variations.

\looseness=-1
The above theorem also suggests the idea that all classes in the variational sequence could be representable by {\em differential forms} with a clear place in the calculus of variations. So far there have been noticed two such representation sequences: the ``{\em source form representation}'' which we call {\em Takens representation}, and the ``{\em Lepage form representation}''. In the former the sheaves contain familiar dif\/ferential forms which appear in the calculus of variations, like Lag\-rangians and Euler--Lagrange forms, and their generalizations, like, for example, Helmholtz forms. The morphisms are then the corresponding variational mappings, like, e.g., the Euler--Lagrange mapping sending Lagrangians to Euler--Lagrange forms, or the Helmholtz mapping, sending dynamical forms to Helmholtz forms. The latter representation has as objects Lepage forms, which are extensions of the ``familiar forms'' mentioned above by higher contact components in such a~way that the {\em morphisms are the exterior derivatives},
so that in the Lepage form representation the variational sequence ``morally'' (up to the orders of individual subsheaves) becomes a~subsequence of the de Rham sequence. We shall deal with the representation sequences below.

In what follows, we shall denote
\begin{gather*}
{\cal V}^r_0 = \Lambda^r_0 , \qquad
{\cal V}^r_q = \Lambda^r_q / \Theta^r_q,
\end{gather*}
and the variational sequence of order $r$ will be shortly written in the form
\begin{gather*}
0 \to \R_{\bY} \to {\cal V}^r_* .
\end{gather*}

\section{Representation sequences}\label{section3}

 \subsection{The interior Euler operator}

Let us introduce an operator which plays an essential role in the representation theory for the variational sequence. It was introduced to the calculus of variations within the variational bicomplex theory in \cite{An, Bau},
and adapted to the f\/inite order situation of the variational sequence in \cite{Kr02,KrMu03,KrMu05,KrSe05}.
This operator, called {\em interior Euler operator}, and denoted by~$\cI$, ref\/lects in an intrinsic way the procedure of getting a distinguished representative of a class $[\rho] \in \Lambda^r_{q}/\Theta^r_{q}$ for $q > n$ by applying to $\rho$ the operator $p_{q-n}$ and a factorization by
$\Theta^r_{q}$.

Let $(x^i,y^\sigma)$ denote local f\/ibred coordinates on $\bY$, and $(x^i,y^\sigma_J)$ the associated coordinates on~$J^r\bY$, where $J$ is a multiindex, $0 \leq |J| \leq r$, $J = (j_1, \dots, j_p)$ with $1 \leq j_1 \leq j_2 \leq \cdots \leq j_p \leq n$ and $p \leq r$.

Let $k \geq 1$ and $q = n+k$. Recall that if $\rho \in \Lambda^r_{n+k}$, we have
\begin{gather*}
\pi_{r+1,r}^* \rho = p_k \rho + p_{k+1} \rho + \cdots + p_{k+n}\rho .
\end{gather*}
Following \cite{KrMu05} we set
 \begin{gather}\label{inter}
 \cI(\rho)= \frac{1}{k}\ome^\sigma \wed \sum_{|J|=0}^{r}(-1)^{|J |} d_J \big(\der/\der y^\sigma_{J} \rfloor p_k\rho \big),
\end{gather}
where $d_J = d_{j_1} \dots d_{j_p}$ for $|J| = p$.
It can be shown by means of the partition of unity arguments that this formula def\/ines a global form~$\cI (\rho)$.

The operator $\cI\colon \Lambda^r_{n+k} \to \Lambda^{2r+1}_{n+k}$ has the following properties:
\begin{enumerate}\itemsep=0pt
\item[(1)] for every $\rho \in \Lambda^r_{n+k}$, $\cI(\rho)$ is a source form of degree $n+k$,
\item[(2)] $\cI(p_k\rho) = \cI(\rho)$,
\item[(3)] $\cI (\rho)$ belongs to the same class as $\pi_{2r+1,r}^* \rho$, i.e., $\pi_{{2r+1},{r}}^*\rho -\cI(\rho)\in \Thd_{n+k}^{2r+1}$,
\item[(4)] $\cI^2 = \cI$, up to a canonical projection; precisely, $\cI^2(\rho) = \pi_{{4r+3},{2r+1}}^*\cI(\rho)$,
\item[(5)] $\Ker \cI = \Thd_{n+k}^{r}$.
\end{enumerate}

By construction, $\cI(\rho)$ is a $k$-contact form, so there is a question about the dif\/ference bet\-ween~$\cI(\rho)$ and~$p_k\rho$. It turns out that the dif\/ference can be expressed intrinsically by means of an operator $\cR$ which we shall call {\em residual operator}, as follows
\begin{gather}\label{pir}
p_k\rho=\cI(\rho)+p_kdp_k\cR(\rho)
\end{gather}
(up to appropriate canonical projections).
By the properties (2) and (4) of $\cI$ we can see that
\begin{gather}\label{pdr}
\cI(p_kdp_k\cR(\rho))=0 .
\end{gather}
 ${\cal{R}}(\rho)$ is by def\/inition a local strongly contact $(n+k-1)$-form.
Note that, contrary to ${\cal{I}}(\rho)$, the form ${\cal{R}}(\rho)$ need not be unique and need not be global.

Moreover, $\cI(\rho)$ is a source form, so there is a question about its relationship with other source forms in the same class~$[\rho]$. To clarify this it is useful to introduce the concept of a {\em canonical source form} as a source form $\rho$ such that (up to projection) $\rho = \cI(\rho)$~\cite{KP07}.

\begin{Proposition} \label{propsource}
Every source form is equivalent with a canonical source form $($of possibly higher order$)$. More precisely, letting $\rho = \omega^\sigma \land \eta_\sigma$ be a source $(n+k)$-form it holds
$($up to projection$)$
\begin{gather*}
\rho = k \cI(\rho) - (k-1) \omega^\sigma \land \cI(\eta_\sigma)
\end{gather*}
and
\begin{gather*}
\rho - \cI(\rho) = p_k d p_k \cR (\rho) = \frac{k-1}{k} \omega^\sigma \land p_{k-1} d p_{k-1} \cR(\eta_\sigma) .
\end{gather*}
As a consequence, for $k \geq 2$,
\begin{gather*}
\cI(\omega^\sigma \land \eta_\sigma) = \cI( \omega^\sigma \land \cI(\eta_\sigma)) .
\end{gather*}
\end{Proposition}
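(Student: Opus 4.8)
This is immediate from the listed properties of~$\cI$: by property~(3), $\pi^{*}\rho-\cI(\rho)\in\Thd_{n+k}$, so $\rho$ and $\cI(\rho)$ represent the same class, while property~(4), $\cI^{2}=\cI$ up to a canonical projection, says exactly that $\cI(\rho)$ is a canonical source form. The point of the rest is to make this quantitative by computing directly from the defining formula~\eqref{inter}.

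\textbf{The identity for $\rho$.} Write $\rho=\ome^{\sigma}\wed\eta_{\sigma}$ with $\eta_{\sigma}$ of contact degree $k-1$. As $\rho$ is $k$-contact, $p_{k}\rho=\rho$ up to a jet projection, so by~\eqref{inter} one has $\cI(\rho)=\frac1k\,\ome^{\nu}\wed\sum_{|J|=0}^{r}(-1)^{|J|}d_{J}\big(\der/\der y^{\nu}_{J}\rfloor\rho\big)$. Expanding $\der/\der y^{\nu}_{J}\rfloor(\ome^{\sigma}\wed\eta_{\sigma})$ by the graded Leibniz rule and using $\der/\der y^{\nu}_{J}\rfloor\ome^{\sigma}=\delta^{\sigma}_{\nu}$ when $|J|=0$ and $0$ otherwise, the $|J|=0$, $\sigma=\nu$ contribution produces $\rho$, leaving
\begin{gather*}
k\,\cI(\rho)=\rho-\ome^{\nu}\wed\sum_{|J|=0}^{r}(-1)^{|J|}\,d_{J}\big(\ome^{\sigma}\wed\big(\der/\der y^{\nu}_{J}\rfloor\eta_{\sigma}\big)\big).
\end{gather*}
I would then commute each $d_{i}$ in $d_{J}$ past $\ome^{\sigma}$ by $d_{i}(\ome^{\sigma}\wed\beta)=\ome^{\sigma}_{i}\wed\beta+\ome^{\sigma}\wed d_{i}\beta$. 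The terms in which every $d_{i}$ slides through $\ome^{\sigma}$ recombine, after relabelling $\nu\leftrightarrow\sigma$ and using antisymmetry of $\ome^{\nu}\wed\ome^{\sigma}$, into $-(k-1)\,\ome^{\sigma}\wed\cI(\eta_{\sigma})$, recognised through the defining formula of~$\cI$ applied to $\eta_{\sigma}$ (with $p_{k-1}\eta_{\sigma}=\eta_{\sigma}$). Every remaining term carries a factor $\ome^{\sigma}_{J}$ with $|J|\geq1$; I would show that any $(n+k)$-form containing such a factor lies in $\Thd^{r}_{n+k}=\Lambda^{r}_{n+k,c}+\big(d\Lambda^{r}_{n+k-1,c}\big)$ --- using $d\ome^{\sigma}_{J'}=-\ome^{\sigma}_{J'j}\wed dx^{j}$ to write it, modulo strongly contact forms, as an exterior derivative of a strongly contact form --- so it drops out up to projection. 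Collecting the three pieces gives $\rho=k\cI(\rho)-(k-1)\ome^{\sigma}\wed\cI(\eta_{\sigma})$. (For $k=1$ the correction term vanishes identically and one recovers $\rho=\cI(\rho)$.)

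\textbf{Residual form and consequence.} Applying~\eqref{pir} to $\rho$ (with $p_{k}\rho=\rho$) gives at once the first equality $\rho-\cI(\rho)=p_{k}d\,p_{k}\cR(\rho)$, and applying it to each $\eta_{\sigma}$ gives $\eta_{\sigma}-\cI(\eta_{\sigma})=p_{k-1}d\,p_{k-1}\cR(\eta_{\sigma})$. Since the identity just proved, together with $\rho=\ome^{\sigma}\wed\eta_{\sigma}$, rearranges to $\rho-\cI(\rho)=\frac{k-1}{k}\,\ome^{\sigma}\wed\big(\eta_{\sigma}-\cI(\eta_{\sigma})\big)$, substituting the second residual relation yields the second equality; in particular the two displayed formulas are equivalent modulo~\eqref{pir}. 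Finally, for $k\geq2$, apply $\cI$ to $\rho=k\cI(\rho)-(k-1)\ome^{\sigma}\wed\cI(\eta_{\sigma})$; linearity and $\cI^{2}=\cI$ (up to projection) give $\cI(\rho)=k\cI(\rho)-(k-1)\cI\big(\ome^{\sigma}\wed\cI(\eta_{\sigma})\big)$, hence $\cI\big(\ome^{\sigma}\wed\cI(\eta_{\sigma})\big)=\cI(\rho)=\cI(\ome^{\sigma}\wed\eta_{\sigma})$.

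\textbf{Main obstacle.} The only genuinely technical step is the membership in $\Thd^{r}_{n+k}$ of the residual terms carrying a factor $\ome^{\sigma}_{J}$ with $|J|\geq1$, together with the multi-index and binomial bookkeeping involved in commuting $d_{J}$ past $\ome^{\sigma}$ and re-indexing the resulting sums; everything else is formal, resting on properties~(3)--(5) of $\cI$ and on~\eqref{pir}.
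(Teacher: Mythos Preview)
Your proof follows the paper's approach closely: both compute $\cI(\omega^\sigma\land\eta_\sigma)$ directly from~\eqref{inter}, split off the $\eta_\sigma$ contribution from the $|J|=0$ term, relabel indices to identify the remainder with $\frac{k-1}{k}\,\omega^\sigma\land\cI(\eta_\sigma)$, and then derive the residual formula and the $k\geq 2$ consequence in the same way (apply~\eqref{pir} to $\rho$ and to each $\eta_\sigma$, then apply~$\cI$ and use $\cI^2=\cI$). On those parts your write-up matches the paper essentially line for line.

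The one substantive divergence is at the commutation step. The paper simply writes
\[
\sum_{|J|}(-1)^{|J|}d_J\big(\omega^\nu\land(\partial/\partial y^\sigma_J\rfloor\eta_\nu)\big)
=\omega^\nu\land\sum_{|J|}(-1)^{|J|}d_J\big(\partial/\partial y^\sigma_J\rfloor\eta_\nu\big)
\]
as an equality of forms, treating the identity as exact up to jet pullback, and does not comment further. You instead (correctly) observe that $d_i(\omega^\nu\land\beta)=\omega^\nu_i\land\beta+\omega^\nu\land d_i\beta$, so extra terms carrying factors $\omega^\nu_I$ with $|I|\geq 1$ appear, and you propose to dispose of them by showing they lie in $\Theta^r_{n+k}$.

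That disposal is where a genuine gap remains. Forms of the shape $\omega^\sigma\land\omega^\nu_I\land\gamma\land\omega_0$ with $|I|\geq 1$ are \emph{not} generally in $\Theta^r_{n+k}=\Ker\cI$; e.g.\ for $n=1$, $k=2$ one checks directly that $\cI(A\,\omega^2\land\dot\omega^1\land dt)\neq 0$. So your proposed mechanism (``write it, modulo strongly contact forms, as an exterior derivative of a strongly contact form'') cannot work term by term. Moreover, even if it did, it would give the first displayed identity only modulo $\Theta$, which is too weak to feed into the second line, where $\rho-\cI(\rho)=p_k d p_k\cR(\rho)$ is an \emph{exact} equality by~\eqref{pir}. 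Thus either the commutator terms cancel exactly after the full alternating sum over $J$ and the symmetrisation in $\sigma,\nu$ --- a cancellation you have not exhibited --- or the argument needs a different route. The paper does not address this point; it simply asserts the identity.
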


\begin{proof}
The f\/irst assertion is just the property (3) of $\cI$. Next two formulas easily follow from the def\/initions of $\cI$ and $\cR$ and the fact
that $\rho$ is the source form, i.e., $\pi_{r+1,r}^* \rho = p_k\rho$. Indeed, for $\rho = \omega^\sigma \land \eta_\sigma$ we have
\begin{gather*}
 \cI(\omega^\sigma \land \eta_\sigma) = \frac{1}{k} \omega^\sigma \land
\sum_{|J|=0}^{r}(-1)^{|J |} d_J (\der/\der y^\sigma_{J} \rfloor (\omega^\nu \land \eta_\nu)) \\
\hphantom{\cI(\omega^\sigma \land \eta_\sigma)}{} = \frac{1}{k} \omega^\sigma\! \land \! \bigg( \eta_\sigma - \omega^\nu
\land \!\sum_{|J|=0}^{r}\! (-1)^{|J |} d_J (\der/\der y^\sigma_{J} \rfloor \eta_\nu)
\bigg) =
\frac{1}{k} \omega^\sigma \land \eta_\sigma + \frac{k\!-\!1}{k} \omega^\sigma \land \cI(\eta_\sigma) .
\end{gather*}
This formula then yields
\begin{gather*}
p_k d p_k \cR (\rho) = \rho - \cI(\rho) = \frac{k-1}{k} \omega^\sigma \land \bigl( \eta_\sigma - \cI(\eta_\sigma)
\bigr) = \frac{k-1}{k} \omega^\sigma \land p_{k-1} d p_{k-1} \cR(\eta_\sigma) ,
\end{gather*}
as desired.

Finally, applying $\cI$ to the f\/irst formula we get
$(k-1) \cI(\rho) = (k-1) \cI(\omega^\sigma \land \cI(\eta_\sigma))$, which provides the last formula for $k\geq 2$.
\end{proof}

For {\em dynamical forms} Proposition~\ref{propsource} gives the following important uniqueness result~\cite{KP07}.

\begin{Corollary}
For $k=1$, every source form is a canonical source form.
\end{Corollary}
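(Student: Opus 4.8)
The plan is to obtain the statement as the $k=1$ specialization of Proposition~\ref{propsource}; beyond substituting $k=1$ into the identities already established there, essentially nothing needs to be done.

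First I would unwind the definitions so that the hypotheses of Proposition~\ref{propsource} are visibly in force. A source form of degree $n+1$ is a $\omega^\sigma$-generated $1$-contact $(n+1)$-form, so in each fibred chart it has the shape $\rho = \omega^\sigma \wedge \eta_\sigma$ with $\eta_\sigma$ horizontal $n$-forms, and in particular $\pi^*_{r+1,r}\rho = p_1\rho$. Recall also that, by construction, $\cI(\rho)$ is a $1$-contact $(n+1)$-form, and that a \emph{canonical source form} is by definition a source form equal, up to a jet projection, to $\cI(\rho)$.

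Next I would quote the first displayed identity of Proposition~\ref{propsource}, namely $\rho = k\,\cI(\rho) - (k-1)\,\omega^\sigma \wedge \cI(\eta_\sigma)$ (up to projection), and set $k=1$: the second term drops out and we are left with $\rho = \cI(\rho)$, which is precisely the assertion. Equivalently, the second identity of the Proposition reads $\rho - \cI(\rho) = \frac{k-1}{k}\,\omega^\sigma \wedge p_{k-1} d p_{k-1}\cR(\eta_\sigma)$, whose right-hand side manifestly vanishes at $k=1$; so the residual term $p_1 d p_1 \cR(\rho)$ in~\eqref{pir} is zero and $p_1\rho = \cI(\rho)$ up to projection.

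I do not anticipate any genuine obstacle. All the substance --- the decomposition~\eqref{pir} of $p_k\rho$ as $\cI(\rho)$ plus a residual term, and the evaluation of that residual term through the residual operator --- is already contained in Proposition~\ref{propsource}, and the vanishing at $k=1$ is forced by the scalar factor $\tfrac{k-1}{k}$. The only point worth a moment's attention is the bookkeeping of the ``up to projection'' clauses, i.e., keeping track of the jet orders $r \mapsto 2r+1$ introduced by applying $\cI$; this does not affect the conclusion that, for $k=1$, $\rho$ and $\cI(\rho)$ agree on a suitable prolongation, which is exactly the claim that every such $\rho$ is a canonical source form.
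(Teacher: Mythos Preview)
Your proposal is correct and matches the paper's approach: the Corollary is stated as an immediate consequence of Proposition~\ref{propsource}, obtained exactly by setting $k=1$ in the identity $\rho = k\,\cI(\rho) - (k-1)\,\omega^\sigma\wedge\cI(\eta_\sigma)$. The paper additionally remarks that one can see the same conclusion directly from the definition of~$\cI$ (since for a $\omega^\sigma$-generated $1$-contact $(n+1)$-form the sum in~\eqref{inter} collapses to the $|J|=0$ term, giving $\cI(\rho)=\omega^\sigma\wedge(\partial/\partial y^\sigma\rfloor\rho)=\rho$), but this is presented as an alternative observation rather than the main argument.
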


Note that this can be seen also directly from the def\/inition of $\cI$: Indeed, if $k=1$ and $\rho$ is a~$\omega^\sigma$-generated $1$-contact $(n+1)$-form then in \eqref{inter} the contractions give horizontal $n$-forms with the same components as those of~$\rho$, and the summation over~$|J|$ reduces to one term, $|J|=0$. Hence,
$\cI(\rho) = \omega^\sigma \land (\der/\der y^\sigma \rfloor \rho) = \rho$.

For the description of the image of $\cI$ we refer also to~\cite{KUV13}.

\begin{Example}
We shall show the computations of $\cI(\rho)$ and $\cR(\rho)$ for a~second-order
1-contact 2-form $\rho$ in mechanics ($n=1$, $k=1$, $r=2$).

Consider local f\/ibred coordinates $(t, q^\sigma)$
on $\bY$ and the corresponding associated coordinates
$(t,q^\sigma, q^\sigma_j)$, $1\leq j\leq r$, $1\leq\sigma\leq m$, on
$J^r\bY$, and denote $d_t$ the total derivative operator and $\omega^\sigma_j = d{q}_j^\sigma-{q}_{j+1}^\sigma dt$ the local basic contact $1$-forms, and put $\omega^\sigma = \omega^\sigma_0$. The calculation of the image of $\rho$ by the interior Euler operator is direct, using~(\ref{inter}).
We have
\begin{gather*}
p_1\rho=A^0_\sigma\omega^\sigma\wedge dt+A^1_\sigma{\omega}^\sigma_1\wedge dt+
A^2_\sigma\omega^\sigma_2\wedge dt,
\end{gather*}
hence
\begin{gather*}
{\cal{I}}(\rho) =\omega^\sigma\wedge\sum_{j=0}^{r=2}(-1)^jd_t^j\left(
\frac{\partial}{\partial q^\sigma_j}\rfloor p_1\rho\right)
=\big(A^0_\sigma-d_tA^1_\sigma+d_t^2A^2_\sigma\big)\omega^\sigma\wedge dt.
\end{gather*}

The residual term is then
\begin{gather*}
p_1dp_1{\cal{R}}(\rho)=\pi^{*}_{5,3} p_1\rho-{\cal{I}}(\rho)=
\big(d_tA_\sigma^1-d_t^2A^2_\sigma\big)\omega^\sigma\wedge
dt+A^1_\sigma\omega^\sigma_1\wedge
dt+A_\sigma^2\omega^\sigma_2\wedge dt .
\end{gather*}

Let us compute $\cR(\rho)$. Since $\cR(\rho)$ is a contact $1$-form of order two, it holds $\cR(\rho) = C_\sigma^0 \omega^\sigma + C_\sigma^1 \omega^\sigma_1 + C_\sigma^2 \omega^\sigma_2 = p_1 \cR(\rho)$. The $1$-contact part of the exterior derivative is
$p_1 d\cR(\rho) = - d_t C_\sigma^0 \omega^\sigma \land dt - (d_t C_\sigma^1 + C^0_\sigma) \omega^\sigma_1 \land dt - (d_t C_\sigma^2 + C^1_\sigma) \omega^\sigma_2 \land dt - C_\sigma^2 \omega^\sigma_3 \land dt$.
Comparing this with the formula above yields
\begin{gather*}
C^2_\sigma = 0, \qquad
d_t C_\sigma^2 + C^1_\sigma = - A^2_\sigma, \qquad
d_t C_\sigma^1 + C^0_\sigma = - A^1_\sigma, \qquad
- d_t C_\sigma^0 = d_tA_\sigma^1-d_t^2A^2_\sigma.
\end{gather*}
From here we immediately see that $C^1_\sigma = - A^2_\sigma$ and $C^0_\sigma = - A^1_\sigma + d_t A_\sigma^2$, and that the equation $- d_t C_\sigma^0 = d_tA_\sigma^1-d_t^2A^2_\sigma$ is satisf\/ied identically. So f\/inally,
\begin{gather*}
\cR(\rho) = \big(d_t A_\sigma^2- A^1_\sigma\big) \omega^\sigma - A_\sigma^2 \omega^\sigma_1.
\end{gather*}
We can easily verify that
$\pi^*_{5,3} p_1\rho={\cal{I}}(\rho)+p_1dp_1{\cal{R}}(\rho)$.
\end{Example}

\subsection{Takens representation of the variational sequence}

Once we have the interior Euler operator, we can obtain a sequence of sheaves of {\em differential forms} (rather than of classes of dif\/ferential forms), such that both the objects and the morphisms have a straightforward interpretation in the calculus of variations. Such a representation of the variational sequence, introduced by Krbek and Musilov\'a (see~\cite{Kr02,KrMu03,KrMu05})
basically concerns {\em source forms $($of all degrees$)$}, therefore we call it in honour of Takens the {\em Takens representation}.

The motivation for the representation comes from the following observations:

$\bullet$ If $q \leq n$ then for every $q$-form $\rho$ of order $r$, the contact decomposition \eqref{splitting of forms} takes the form
\begin{gather*}
\pi_{r+1,r}^*\rho = h\rho + p_1\rho + \dots + p_q \rho,
\end{gather*}
and $\Theta^r_q = \Lambda^r_{q,c} + (d \Lambda^r_{q-1,c})$, where we have sheaves of contact forms. This means that given a~class $[\rho] \in \cV^r_q = \Lambda^r_q/\Theta^r_q$, one has
$\rho_1 \sim \rho_2$ if\/f $\rho_1 - \rho_2 = \vartheta + d\eta$ where $\vartheta$ is a contact $q$-form and~$\eta$ is a contact $(q-1)$-form, i.e.,
\begin{gather*}
h\rho_1 = h\rho_2.
\end{gather*}
Hence, every class $[\rho]$ is completely determined by a {\em unique horizontal form}, and, in particular, if $q=n$, by a {\em Lagrangian}.

Summarizing, {\em for $q \leq n$ the operator of horizontalization},
\begin{gather*}
h\colon \ \Lambda^r_q \to \Lambda^{r+1}_{q,\bX} \subset \Lambda^{r+1}_q, \qquad \rho \to h\rho ,
\end{gather*}
{\em induces a representation mapping},
\begin{gather*}
R_q\colon \ {\cV}^r_q \to \Lambda^{r+1}_{q}, \qquad R_q([\rho]) = h\rho .
\end{gather*}

$\bullet$ If $q = n + k$ for $k>1$ then for every $q$-form $\rho$ of order~$r$, the contact decomposition~\eqref{splitting of forms} takes the form
\begin{gather*}
\pi_{r+1,r}^*\rho = p_k\rho + p_{k+1}\rho + \dots + p_q \rho,
\end{gather*}
and $\Theta^r_q = \Lambda^r_{q,c} + (d \Lambda^r_{q-1,c})$, where we have sheaves of strongly contact forms (with the only exception of $\Lambda^r_{n,c}$ which is a sheaf of contact forms). This means that given a class $[\rho] \in \cV^r_{n+k} = \Lambda^r_{n+k}/\Theta^r_{n+k}$,
one has $\rho_1 \sim \rho_2$ if\/f $\rho_1 - \rho_2 = \vartheta + d\eta$ where $\vartheta$ is a strongly contact $(n+k)$-form (i.e., such that $p_{k}\vartheta = 0$) and $\eta$ is a strongly contact $(n+k-1)$-form ($p_{k-1}\eta = 0$). Thus $p_k\rho_1 = p_k\rho_2 + p_k dp_{k} \eta \ne p_k \rho_2$.
However, $\rho_1 - \rho_2 \in \Theta^r_{n+k}$, so that
\begin{gather*}
\cI(\rho_1) = \cI(\rho_2).
\end{gather*}
Hence, every class $[\rho] \in \cV^r_{n+k}$ is completely determined by {\em a unique canonical source form}.

If we denote by $\cI \Lambda^{r}_{{n+k}}$ the image of $\Lambda^r_{n+k}$ by $\cI$, we can see that {\em the interior Euler operator}
\begin{gather*}
\cI\colon \ \Lambda^r_{n+k} \to \cI \Lambda^{r}_{n+k} \subset \Lambda^{2r+1}_{n+k}, \qquad \rho \to \cI (\rho) ,
\end{gather*}
{\em induces a representation mapping},
\begin{gather*}
R_{n+k}\colon \ {\cV}^r_{n+k} \to \Lambda^{2r+1}_{n+k} , \qquad R_{n+k}([\rho]) = \cI (\rho) .
\end{gather*}

For $k=1$ there is no other source form representing a class $[\rho] \in \cV^r_{n+1}$, so that every class is completely determined by
{\em a unique dynamical form}. If $[\rho] = \cE_{n}(\mu) = [d\mu]$ then
$\cI (d\mu) = \cI (dh\mu)$ is the Euler--Lagrange form of the Lagrangian $h\mu$. Note that this argument proves the uniqueness of the Euler--Lagrange form. At the same time Proposition~\ref{propsource} shows us that for $k\geq 2$ the representing canonical source form is no longer a unique source form in the class $[\rho]$. An explicit important example of nonuniquness for $k = 2$ is realized by Helmholtz forms, and will be discussed below in Section~\ref{Hel}.

$\bullet$ If $q = 0$ we def\/ine $R_0$ be the identity mapping.

Now, by means of the representation mappings $R_q$ we construct to the variational sequence $0 \to \R_{\bY} \to \cV^r_{*}$ a representation sequence
\begin{gather*}
0 \to \R_{\bY} \to R_0(\cV^{r}_0) \to R_1(\cV^{r}_1) \to R_2(\cV^{r}_2) \to \cdots
\end{gather*}
shortly denoted by $0\to \R_{\bY} \to R_{*}(\cV^{r}_{*})$, and called the {\em Takens representation} of $0 \to \R_{\bY} \to \cV^r_{*}$.
We also denote by
\begin{gather*}
E_q\colon \ R_q(\cV^r_q) \to R_{q+1}(\cV^r_{q+1}), \qquad q \geq 1,
\end{gather*}
the morphisms in the representation sequence. Hence, the def\/inition of the $E_q$'s follows from the commutativity of the diagrams
\begin{gather} \label{diagram1}
\begin{CD}
\cdots @>\mathcal{E}_{q-1}>>\cV_{q}^r @>\mathcal{E}_{q}>> \cV_{q+1}^r @>\mathcal{E}_{q+1}>> \cdots
\\
@. @V{R_q} VV@V{R_{q+1}} VV
\\
\cdots @>E_{q-1} >> R_{q}(\cV^r_{q})@>E_{q}>> R_{q+1}(\cV_{q+1}^r) @> E_{q+1}>> \cdots
\end{CD}
\end{gather}
and we can immediately see that the following theorem holds true:

\begin{Theorem}
The representation sequence $0\to \R_{\bY} \to R_{*}(\cV^{r}_{*})$ is exact.
\end{Theorem}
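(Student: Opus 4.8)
The plan is to transfer exactness from the variational sequence $0 \to \R_{\bY} \to \cV^r_*$, whose exactness is Krupka's main theorem recalled above, to the representation sequence via the representation mappings $R_q$. The essential point is that each $R_q$ is a sheaf isomorphism onto its image $R_q(\cV^r_q)$, so the diagram \eqref{diagram1} realizes $0 \to \R_{\bY} \to R_*(\cV^r_*)$ as an isomorphic copy of the variational sequence, and exactness is preserved under isomorphism of complexes.

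First I would verify that $R_q$ is injective for each $q$. For $q \leq n$ this is the content of the first bulleted observation: two $q$-forms $\rho_1, \rho_2$ are in the same class of $\Lambda^r_q/\Theta^r_q$ if and only if $h\rho_1 = h\rho_2$, because $\Theta^r_q = \Lambda^r_{q,c} + (d\Lambda^r_{q-1,c})$ consists precisely of forms killed by horizontalization; hence $R_q([\rho]) = h\rho$ is well-defined and injective. For $q = n+k$ with $k \geq 1$ the analogous statement is property (5) of the interior Euler operator, $\Ker \cI = \Thd^r_{n+k}$, together with property (3) which guarantees that $\cI(\rho)$ lies in the same class as $\rho$: thus $R_{n+k}([\rho]) = \cI(\rho)$ is well-defined, and $R_{n+k}([\rho_1]) = R_{n+k}([\rho_2])$ forces $\rho_1 - \rho_2 \in \Ker\cI = \Thd^r_{n+k}$, i.e.\ $[\rho_1] = [\rho_2]$. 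For $q = 0$, $R_0$ is the identity. Surjectivity onto $R_q(\cV^r_q)$ is automatic since that sheaf is defined as the image. So each $R_q$ is a sheaf isomorphism.

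Next I would note that the morphisms $E_q$ are \emph{defined} by the commutativity of \eqref{diagram1}: since $R_q$ is an isomorphism one may set $E_q = R_{q+1} \circ \cE_q \circ R_q^{-1}$, and this is the unique map making the squares commute. Consequently the family $(R_q)$ is an isomorphism of cochain complexes from $0 \to \R_{\bY} \to \cV^r_*$ (with $R_{\bY} \to \cV^r_0$ the structural inclusion and $\R_{\bY} \to R_0(\cV^r_0)$ the same inclusion under the identity $R_0$) onto $0 \to \R_{\bY} \to R_*(\cV^r_*)$. An isomorphism of complexes induces isomorphisms on cohomology sheaves; since the variational sequence is exact (acyclic resolution of $\R_{\bY}$ away from $\R_{\bY}$, with exactness at $\R_{\bY}$ and $\cV^r_0$ built in), the representation sequence is exact as well.

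There is no real obstacle here: the work has already been done in establishing properties (1)--(5) of $\cI$ and the horizontalization discussion, and the statement is essentially a formal consequence. The only point requiring a word of care is that one must check $R_q$ is genuinely well-defined on classes (not just on forms) before invoking injectivity — but this is exactly what properties (3) and (5), respectively the identity $h\vartheta = 0$ for contact $\vartheta$, provide. Thus the proof reduces to assembling these observations into the statement that $(R_q)$ is an isomorphism of sheaf complexes and quoting that exactness is an isomorphism invariant.
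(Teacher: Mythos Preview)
Your proposal is correct and follows essentially the same approach as the paper: both rely on the commutativity of diagram~\eqref{diagram1} together with the injectivity of the $R_q$ established in the preceding discussion (via $h$ for $q\le n$ and property~(5) of~$\cI$ for $q>n$), so that the representation sequence is an isomorphic copy of the exact variational sequence. The paper's written proof is terser---it only displays the verification that $E_{q+1}\circ E_q=0$ via $R_{q+2}([dd\rho])=0$ and leaves the full exactness implicit in the diagram---whereas you spell out explicitly that the $(R_q)$ constitute an isomorphism of complexes and invoke the invariance of exactness; this is a welcome clarification rather than a different argument.
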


Indeed, by def\/inition of $E_q$ and $\cE_q$,
\begin{gather*}
E_q\circ R_q ([\rho])=R_{q+1}\circ \cE_q ([\rho])= R_{q+1}
([d\rho]) ,
\end{gather*}
and for every $\eta \in R_{q}(\cV_{q}^r)$ we have $\eta = R_q([\rho])$ for some $[\rho] \in \cV^r_{q}$. Hence
\begin{gather*}
 (E_{q+1} \circ E_{q}) (\eta) = (E_{q+1} \circ E_q \circ R_q)([\rho]) =
(E_{q+1} \circ R_{q+1} \circ \cE_q)([\rho]) \\
\hphantom{(E_{q+1} \circ E_{q}) (\eta)}{} =(E_{q+1} \circ R_{q+1})([d\rho]) = (R_{q+2} \circ \cE_{q+1}([d\rho]) =
R_{q+2} ([dd\rho]) = 0 .
\end{gather*}

The representation sequence can be easily understood, and explicit formulas for the morphisms $E_q$ and the forms $E_q(\eta) \in R_{q+1}(\cV^r_{q+1})$ can be easily obtained, if we write the diag\-ram~\eqref{diagram1} in the explicit form
\begin{gather*}%\label{diagram2}
\begin{CD}
\cdots @>\mathcal{E}_{n-1}>> \Lambda_n^r / \Theta_n^r @>\mathcal{E}_n>>
\Lambda_{n+1}^r / \Theta_{n+1}^r
@>\mathcal{E}_{n+1}>>
\Lambda_{n+2}^r / \Theta_{n+2}^r @>\mathcal{E}_{n+2}>> \cdots
\\
@. @V{R_n} VV @V{R_{n+1}} VV @V{R_{n+2}} VV
\\
\cdots @>E_{n-1} >> \Lambda_{n,\bX}^{r+1} @> E_n >> \Lambda_{n+1,1,\bY}^{r+1} @> E_{n+1} >>\cI\Lambda_{n+2}^{r+1}
@> >> \cdots
\end{CD}
\end{gather*}
where $R_q$, $1 \leq q \leq n$, has the meaning of horizontalization $h$ on classes of order $r$, and $R_{n+k}$ is the operator $\cI$ acting on $p_k \rho$ where $\rho$ is of order $r$.
Then since elements of the sheaves $R_q(\cV^r_q)$
are functions for $q=0$, horizontal forms for $1 \leq q \leq n$, and canonical source forms for $q \geq n+1$, we have
\begin{gather*}
E_0(f) = hdf,\qquad
E_q(h\rho) = hdh\rho, \qquad 1 \leq q \leq n-1,\\
E_n(h\rho) = \cI (dh\rho),\qquad
E_{n+k}(\cI (\rho)) = \cI (d (\cI(\rho))), \qquad k \geq 1 .
\end{gather*}

In particular, for $q=n$ we can write in coordinates $h\rho={L}\omega_0$,
where $\omega_0=dx^1\wedge\cdots\wedge dx^n$, and then
\begin{gather*}
E_n(h\rho) = R_{n+1}([d\rho])={\cal{I}}(d\rho)={\cal{I}}(dh\rho)=\sum_{|J|=0}^r
(-1)^{|J|} d_J \left(\frac{\partial{L}}{\partial
y^\sigma_J}\right)\omega^\sigma \wedge\omega_0.
\end{gather*}
Hence, $E_n\colon R_n(\cV^r_n) \to R_{n+1}(\cV^r_{n+1})$ is the {\em Euler--Lagrange mapping}, assigning to every Lag\-rangian $\lambda = h\rho$ its Euler--Lagrange form $E_\lambda = \cI (d\lambda)$. The next morphism, $E_{n+1}\colon R_{n+1}(\cV^r_{n+1}) \to R_{n+2}(\cV^r_{n+2})$ is the {\em Helm\-holtz mapping}, assigning to every dynamical form $\varepsilon$ its canonical Helm\-holtz form $H_{\varepsilon} = \cI (d\varepsilon)$. Note that the preceding morphism to the Euler--Lagrange morphism, $E_{n-1}\colon R_{n-1}(\cV^r_{n-1}) \to R_{n}(\cV^r_{n})$, assigns to every horizontal $(n-1)$-form $\varphi$ (resp., if $\dim X = n=1$, to a function $f$) a Lagrangian $\lambda = hd\varphi$ (resp.\ $\lambda = hdf$). This is so-called {\em null-Lagrangian} (also called {\em variationally trivial Lagrangian}), due to the fact that its Euler--Lagrange form is by exactness of the sequence equal to zero.

More generally, for $q \geq 1$, elements of $R_q(\cV^r_q)$ belonging to the kernel of the variational morphism $E_q$ are called {\em variationally trivial}.

We say that two elements $\eta_1$, $\eta_2$ of $R_q(\cV^r_q)$ are {\em $($variationally$)$ equivalent} if their dif\/ference is variationally trivial, i.e., $\eta_2 - \eta_1 \in \Ker E_q$. Thus, two Lagrangians are equivalent if\/f they dif\/fer by a null Lagrangian, $hd\rho$, two dynamical forms are equivalent if\/f they dif\/fer by a locally variational form, and two Helmholtz-like forms are equivalent if\/f they dif\/fer by a Helmholtz form.
Generally, two canonical source forms are equivalent if\/f they dif\/fer by $\cI(d(\cI \rho))$.

\begin{Example} \label{ExQP}
As an illustration of the use of the Takens representation of the variational sequence let us compute the Euler--Lagrange operator with help of the operator $\cI$. Indeed, given a Lagrangian~$\lambda$, the Euler--Lagrange form of~$\lambda$ arises either as the image of $\lambda$ by the mor\-phism~$E_n$, or, by application of $\cI$ to the exterior derivative of $\lambda$ (remind the formula $E_n(\lambda)={\cal{I}}(d\lambda)$ coming from the commutativity relation $E_n\circ{\cal{I}}={\cal{I}}\circ d$).

Consider the Lagrangian of a free quantum particle moving in one dimension. The quantum mechanical state of such a system is, in the coordinate representation, described by a wave function $\phi(t, x)\colon \R \times \R \to \C$, and the Lagrangian is
\begin{gather*}
\lambda=\left(-\frac{\hbar^2}{4m}(\phi_x\phi^*_x)+
\frac{{\rm i}\hbar}{4}(\phi^*\phi_t-\phi\phi^*_t)\right)\omega_0,
\qquad \omega_0=dt\wedge dx.
\end{gather*}
Denoting $v(t, x)=\operatorname{Re} \phi(t, x)$ and $w(t, x)=\operatorname{Im} \phi(t, x)$ we obtain
\begin{gather} \label{QP}
\lambda=\left(-\frac{\hbar^2}{4m}\big(v_x^2+w_x^2\big)-\frac{\hbar}{2}(vw_t-wv_t)\right)\omega_0,
\end{gather}
which is a f\/irst-order Lagrangian for the f\/ibred manifold $\pi\colon \R^2 \times \R^2 \to \R^2$ (hence $n=2$, $m=2$, $r=1$) where $\pi$ is the canonical projection, with f\/ibred coordinates
$(x^1,x^2,y^1,y^2,y^1_1,y^1_2,y^2_1,y^2_2)=(t,x,v,w,v_t,v_x,w_t,w_x)$. In this notation, the generating contact $1$-forms on $J^2(\R^2 \times \R^2)$ read as follows:
$\omega^1 = dv-v_tdt-v_xdx$,
$\omega^2 = dw-w_tdt-w_xdx$,
$\omega^1_t = dv_t-v_{tt}dt-v_{tx}dx$,
$\omega^1_x = dv_x-v_{xt}dt-v_{xx}dx$,
$\omega^2_t = dw_t-w_{tt}dt-w_{tx}dx$,
$\omega^2_x = dw_x-w_{xt}dt-w_{xx}dx$.
Now, $p_1d\lambda$ is the following 1-contact 3-form on $J^2(\R^2 \times \R^2)$:
\begin{gather*}
p_1d\lambda=\left(-\frac{\hbar}{2}w_tdv+\frac{\hbar}{2}v_tdw+\frac{\hbar}{2}wdv_t-
\frac{\hbar}{2}vdw_t-\frac{\hbar^2}{2m}v_xdv_x-\frac{\hbar^2}{2m}w_xdw_x
\right) \omega_0.
\end{gather*}
Calculating ${\cal{I}}(d\lambda)$ using formula (\ref{inter}) we obtain
\begin{gather*}
 {\cal{I}}(d\lambda) =\omega^1\wedge\left[-\frac{\hbar}{2}
w_t-d_t\left(\frac{\hbar}{2}w\right)
-d_x\left(-\frac{\hbar^2}{2m}v_x\right)\right]\omega_0 \\
\hphantom{{\cal{I}}(d\lambda) =}{} +\omega^2\wedge\left[\frac{\hbar}{2}
v_t-d_t\left(-\frac{\hbar}{2}v\right)
-d_x\left(-\frac{\hbar^2}{2m}w_x\right)\right]\omega_0 \\
\hphantom{{\cal{I}}(d\lambda)}{}
=\left[\left(-\hbar
w_t+\frac{\hbar^2}{2m}v_{xx}\right)\omega^1+\left(\hbar
v_t+\frac{\hbar^2}{2m}w_{xx}\right)\omega^2\right]\wedge\omega_0 .
\end{gather*}
The dynamical $3$-form ${\cal{I}}(d\lambda)$ is the Euler--Lagrange form of (\ref{QP}). This means that
\begin{gather*}
\varepsilon_1=\frac{\hbar^2}{2m}v_{xx}-\hbar w_t, \qquad
\varepsilon_2=\frac{\hbar^2}{2m}w_{xx}+\hbar v_t,
\end{gather*}
are the Euler--Lagrange expressions of the Lagrangian (\ref{QP}), and equations
$\varepsilon_1\circ J^2\gamma=0$, \mbox{$\varepsilon_2\circ J^2\gamma=0$} are the
Euler--Lagrange equations for extremals~$\gamma$ of~(\ref{QP}). Denoting
$\varepsilon =\varepsilon_1+{\rm i}\varepsilon_2$ we obtain the Euler--Lagrange equation
\begin{gather*}
\left(\frac{\hbar^2}{2m}\phi_{xx}+{\rm i}\hbar\phi_t\right)\circ
J^2\gamma=0,
\end{gather*}
which, indeed, is the Schr\"{o}dinger equation.
\end{Example}

\subsection[Lepage $n$-forms and the f\/irst variation formula]{Lepage $\boldsymbol{n}$-forms and the f\/irst variation formula} \label{secLep}

Lepage $n$-forms were introduced by Krupka in 1973~\cite{Kru73} (see also~\cite{Kru83}) in order to establish foundations of the higher-order calculus of variations in jet bundles. They are fundamental for a~geometric formulation of the intrinsic f\/irst variation formula, and of coordinate free, global Lagrangian and Hamiltonian mechanics and f\/ield theory. Combined with the concepts of inva\-riant variational functionals they provide geometric formulations of Noether theorems, as well as geometric integration methods based on symmetries.

Here we remind only some basic properties of Lepage $n$-forms, explored in the variational sequence theory. For more details and applications we refer the reader to the survey papers
\cite{Kru08,KrKrSa10,KrKrSa12,olga09,MuLe09}, and the book \cite{Kru15}.

As above, $\pi\colon \bY \to \bX$ is a f\/ibred manifold, $n = \dim \bX$, and $m = \dim \bY - n$.

\begin{Definition} Let $r \geq 0$. A $n$-form $\rho$ on $J^r\bY$ is called a {\em Lepage $n$-form} of
order $r$ if for every $\pi_{r,0}$-vertical vector f\/ield $\xi$ on
$J^r\bY$
\begin{gather*}
h(\xi\rfloor d\rho)=0.
\end{gather*}
\end{Definition}

Since $\rho$ is an $n$-form on $J^r\bY$ its horizontal component $h\rho$ is a Lagrangian on $J^{r+1}\bY$.

Notice that
\begin{itemize}\itemsep=0pt
\item {\em every} $n$-form on $\bY$ is a Lepage $n$-form. The corresponding Lagrangian is then def\/ined on~$J^1\bY$, and it is a polynomial of degree $n$ in the f\/irst derivatives\footnote{This fact is of particular importance in Hamiltonian mechanics and f\/ield theory, or in dealing with variational forces and energy-momentum tensors \cite{HK,Kru08,olga09-2}.},
\item {\em every closed} $n$-form on $J^r\bY$ is a Lepage $n$-form. The corresponding Lagrangian is a null-Lagrangian (giving rise to the zero Euler--Lagrange form).
\end{itemize}

The structure of Lepage $n$-forms is characterized as follows:

\begin{Theorem}[Krupka~\cite{Kru73}]\label{ThLep}
The following conditions are equivalent:
\begin{enumerate}\itemsep=0pt
\item[$(1)$] $\rho$ is a Lepage $n$-form of order $r$,
\item[$(2)$] $p_1 d\rho$ is a dynamical form, i.e., $p_1d\rho = \cI(d\rho)$,
\item[$(3)$] $\pi_{r+1,r}^*d\rho= E+F$,
 where $E$ is a dynamical form, and $F$ is at least $2$-contact,
\item[$(4)$] $\pi_{r+1,r}^*\rho= \tht_{h\rho} +d\nu+ \mu = \tht_{h\rho}
+p_1d\nu+\eta$,
where $\theta_{h\rho}$ is the Cartan form of the Lagran\-gian~$h\rho$,
$\nu$ is a contact
$(n-1)$-form, and $\mu$, resp.~$\eta$ is at least $2$-contact.
\end{enumerate}
\end{Theorem}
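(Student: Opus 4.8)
The plan is to prove the cyclic chain of implications $(1)\Rightarrow(2)\Rightarrow(3)\Rightarrow(4)\Rightarrow(1)$, working throughout on $J^{r+1}\bY$ after pulling back by $\pi_{r+1,r}$ so that the contact decomposition \eqref{splitting of forms} is available. Since $\rho$ is an $n$-form, we have $\pi_{r+1,r}^*\rho = h\rho + p_1\rho$, so $\pi_{r+1,r}^*d\rho = d h\rho + d p_1\rho$, and the $0$-contact part $h\,d\rho$ vanishes automatically (a horizontalization of an exact form of degree $n+1$ has no $0$-contact component by the very notion that $d_H$ raises horizontal degree). Thus $\pi_{r+1,r}^*d\rho = p_1 d\rho + p_2 d\rho + \cdots$, and the whole theorem is about controlling the single term $p_1 d\rho$.

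First, $(1)\Leftrightarrow(2)$: the defining condition $h(\xi\rfloor d\rho)=0$ for all $\pi_{r,0}$-vertical $\xi$ says precisely that the $1$-contact component $p_1 d\rho$ is $\omega^\sigma$-generated — i.e.\ contains no $\omega^\sigma_J$ with $|J|\ge 1$ — because contracting $p_1 d\rho$ with $\partial/\partial y^\sigma_J$, $|J|\ge1$, and horizontalizing extracts exactly those coefficients, while the $\omega^\sigma$-coefficient is recovered by contracting with $\partial/\partial y^\sigma$, which is $\pi_{r,0}$-vertical but whose contribution is the "allowed" one. Hence $p_1 d\rho$ is a $\omega^\sigma$-generated $1$-contact $(n+1)$-form, that is, a dynamical form; and for dynamical forms the Corollary after Proposition~\ref{propsource} (the $k=1$ case, $\cI(\rho)=\rho$) gives $p_1 d\rho = \cI(p_1 d\rho) = \cI(d\rho)$ using property (2) of $\cI$. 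Conversely if $p_1 d\rho$ is a dynamical form then it is $\omega^\sigma$-generated, which is the contraction condition in (1).

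Next, $(2)\Rightarrow(3)$ is immediate: set $E = p_1 d\rho$, which is a dynamical form by hypothesis, and $F = \sum_{k\ge 2} p_k d\rho = \pi_{r+1,r}^*d\rho - E$, which is at least $2$-contact by construction; conversely $(3)\Rightarrow(2)$ follows by taking $p_1$ of both sides of $\pi_{r+1,r}^*d\rho = E+F$, since $p_1 F = 0$ gives $p_1 d\rho = p_1 E = E$ (here $E$ being already $1$-contact), which is a dynamical form. For $(3)\Rightarrow(4)$ I would invoke the basic structure theorem for the Cartan form $\tht_{h\rho}$ of a Lagrangian $\lambda = h\rho$, namely that $p_1\,d\tht_\lambda = E_\lambda$ is the Euler--Lagrange form and $\tht_\lambda$ has the property that $h\tht_\lambda = \lambda$; then $\pi^*_{r+1,r}\rho - \tht_{h\rho}$ has zero horizontal part, so it is contact, and one shows its $1$-contact part is $p_1 d\nu$ for a suitable contact $(n-1)$-form $\nu$ by comparing $p_1 d(\pi^*\rho)$ with $p_1 d\tht_{h\rho}$: both have the same dynamical-form part (the Euler--Lagrange form of $h\rho$, by uniqueness of the Euler--Lagrange form noted after the diagram), so their difference is $\omega^\sigma$-generated-free in the appropriate sense and is exact-up-to-higher-contact via the contact homotopy operator $\cA$, giving the primitive $\nu$. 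Taking $\mu$ (resp.\ $\eta$) to be the remaining at-least-$2$-contact terms finishes $(4)$. Finally $(4)\Rightarrow(1)$: from $\pi^*_{r+1,r}\rho = \tht_{h\rho} + p_1 d\nu + \eta$ we compute $\xi\rfloor d\rho$ for $\pi_{r,0}$-vertical $\xi$ and horizontalize; the $\tht_{h\rho}$ term contributes $h(\xi\rfloor d\tht_{h\rho})$ which vanishes because $\tht_{h\rho}$ is itself a Lepage form (the defining property of the Cartan form), the $d p_1 d\nu = 0$ term drops, and $h(\xi\rfloor d\eta)=0$ since $d\eta$ is at least $2$-contact so contraction with one vertical vector leaves it at least $1$-contact, killed by $h$.

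The main obstacle is the step $(3)\Rightarrow(4)$ — extracting the Cartan form and the explicit contact primitive $\nu$. This requires knowing the structure of $\tht_\lambda$ well enough (its existence, that $h\tht_\lambda = \lambda$, that $p_1 d\tht_\lambda = E_\lambda$, and crucially that $\tht_\lambda$ is determined modulo contact forms whose $1$-contact part is $d_V$-exact) and then applying the contact homotopy operator $\cA$ introduced in Section~\ref{section2} to the difference $p_1(\pi^*\rho - \tht_{h\rho})$, whose $\cI$-image vanishes because both forms induce the same Euler--Lagrange form. I expect the cleanest route is to cite the known structure of the Cartan form (Krupka~\cite{Kru73}, or Theorem~\ref{thmCartanform} to be proved later in the paper) rather than to reconstruct $\tht_\lambda$ by hand, and then the primitive $\nu$ falls out of property~(3) of $\cI$ — namely that $\pi^*\sig - \cI(\sig)\in\Theta$ — applied to $p_1 d\rho$, together with the defining relations $\cA p_0 = 0$, $\cA p_k = p_{k-1}\cP$ of the contact homotopy operator.
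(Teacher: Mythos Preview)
The paper does not give its own proof of Theorem~\ref{ThLep}: it is stated as Krupka's result and cited to~\cite{Kru73}. What the paper \emph{does} prove is the higher-degree generalization, Theorem~\ref{thmhC} (and implicitly Theorem~\ref{ThLepdeg}), whose $k=0$ case specializes to Theorem~\ref{ThLep}. So the relevant comparison is between your cyclic-implications argument and the paper's proof of Theorem~\ref{thmhC}.

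Your route and the paper's are genuinely different. You argue $(1)\Leftrightarrow(2)\Leftrightarrow(3)$ directly (correctly, and this is standard), and then attack $(3)\Rightarrow(4)$ by assuming the Cartan form $\theta_{h\rho}$ is already known to exist and to be Lepage, then analysing the contact difference $\rho-\theta_{h\rho}$. The paper instead \emph{defines} $\theta_{h\rho}=h\rho-p_1\cR(dh\rho)$ via the residual operator (formula~\eqref{Cartan}), verifies by a short computation with $\cI$ and $\cR$ that it satisfies the Lepage condition, and then argues that the general solution of $p_1d\rho=\cI(d\rho)$ is $\theta_{h\rho}+d\nu+\mu$ because the equation constrains only $p_1d\rho$. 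The paper's approach is self-contained (it produces the Cartan form rather than citing it) and avoids the contact-homotopy step you flag as the obstacle; your approach is closer in spirit to Krupka's original coordinate arguments and makes the equivalence $(1)\Leftrightarrow(2)$ more transparent, but it needs the Cartan form as external input.

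Two small errors to fix. First, in your $(1)\Leftrightarrow(2)$ you say $\partial/\partial y^\sigma$ is $\pi_{r,0}$-vertical. It is not: $\pi_{r,0}\colon J^r\bY\to\bY$, and $\partial/\partial y^\sigma$ projects to the nonzero vector $\partial/\partial y^\sigma$ on~$\bY$. The correct reason the $\omega^\sigma$-coefficient survives is precisely that $\partial/\partial y^\sigma$ is \emph{not} among the test vectors in the Lepage condition, so no constraint is imposed on it. Second, in $(4)\Rightarrow(1)$ you use the decomposition $\theta_{h\rho}+p_1d\nu+\eta$ and then claim ``$dp_1d\nu=0$''. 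This is false in general since $p_1$ and $d$ do not commute. Use the equivalent decomposition $\theta_{h\rho}+d\nu+\mu$ instead; then $d(d\nu)=0$ genuinely drops, and $d\mu$ is at least $2$-contact so $h(\xi\rfloor d\mu)=0$ as you say.

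Your $(3)\Rightarrow(4)$ sketch is incomplete as stated: you need that $p_1d(p_1\alpha)=0$ for the contact form $\alpha=\rho-\theta_{h\rho}$ forces $p_1\alpha=p_1d\nu$ for some contact $(n-1)$-form $\nu$. This is a local exactness statement that does not follow from $\cA$ alone without further argument. The paper sidesteps this entirely by working with $\cR$; if you want to stay with your organisation, the cleanest fix is exactly what you suggest at the end---invoke formula~\eqref{Cartan} to \emph{define} $\theta_{h\rho}$, at which point your argument and the paper's essentially merge.
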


It should be stressed that for $n \geq 2$ and $r \geq 3$ the Cartan form is generically not global. However, the above theorem states that it can be ``globalized'' by adding $p_1d\nu$. Indeed, in the decomposition of $\pi_{r+1,r}^*\rho$ in (4), $\tht_{h\rho} +p_1d\nu$ and $\eta$ are global (being the at most $1$-contact and the at least $2$-contact part of $\rho$, respectively), while, in general, the decomposition $\tht_{h\rho} + p_1d\nu$ is not invariant under changes of f\/ibred coordinates.

The name `Cartan form' refers to \'E.~Cartan, who introduced the form to the classical calculus of variations~\cite{Cart}. Concerning dif\/ferent aspects of generalization to many independent variables we refer, e.g., to \cite{Bet,Cara, DeDo,Ded,Ga74,Got91,Got90,Kru73,Kru77,Ol,Ru}. Higher-order Cartan forms have been introduced and studied by many authors; see, e.g., \cite{DeDo,deLRo,Fe84,FeFra,GaMu,HoKo,Kru83} to name just a~few.

Accounting the def\/inition and properties of the operators $\cI$ and $\cR$ we are able to f\/ind an {\em intrinsic formula for the Cartan form}:

\begin{Theorem} \label{thmCartanform}
Let $r \geq 0$, and $\rho \in \Lambda^r_n$. Then the Cartan form of the Lagrangian $\lambda = h\rho$ takes the form
\begin{gather}\label{Cartan}
\theta_{\lambda} = \lambda - p_1 \cR(d\lambda) .
\end{gather}
\end{Theorem}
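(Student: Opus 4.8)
The plan is to show that the $n$-form $\theta_\lambda := \lambda - p_1\cR(d\lambda)$ satisfies the defining property of the Cartan form of $\lambda$, namely that it is a Lepage $n$-form with horizontal component $\lambda$, and moreover that it coincides with the classical Cartan form as characterized in Theorem~\ref{ThLep}(4). First I would verify the horizontality condition $h\theta_\lambda = \lambda$: since $\cR(d\lambda)$ is by construction a (strongly) contact form, $p_1\cR(d\lambda)$ is contact, hence horizontalization kills it and $h\theta_\lambda = h\lambda = \lambda$ (recall $\lambda$ is already horizontal of order $r$, living on $J^{r+1}\bY$ after the appropriate projection).

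Next I would compute $p_1 d\theta_\lambda$ and check it is a dynamical form, which by Theorem~\ref{ThLep}(2) is equivalent to $\theta_\lambda$ being Lepage. Write $k=1$, $q=n+1$ and apply the residual decomposition \eqref{pir} to $\rho = d\lambda$, which is an $(n+1)$-form: we get $p_1 d\lambda = \cI(d\lambda) + p_1 d p_1 \cR(d\lambda)$ (up to canonical projections). Since $d\lambda$ is a closed form its $1$-contact component satisfies $p_1 d\lambda = p_1 d(h\lambda + \text{contact})$, and because $\lambda$ is horizontal we have $\pi^*d\lambda = d_H\lambda + d_V\lambda$ with $d_H\lambda$ being $0$-contact; hence $p_1 d\lambda = d_V\lambda$ is genuinely $1$-contact and the decomposition reads $p_1 d\lambda = \cI(d\lambda) + p_1 d p_1 \cR(d\lambda)$. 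Now $d\theta_\lambda = d\lambda - d p_1 \cR(d\lambda)$, and taking $p_1$ of both sides: $p_1 d\theta_\lambda = p_1 d\lambda - p_1 d p_1 \cR(d\lambda) - p_1 d(\text{higher contact part of }\cR(d\lambda))$. Using that $\cR(d\lambda)$ is strongly contact as an $(n)$-form means $p_0\cR(d\lambda)=0$ and (for the $n$-form case) it is an $\omega^\sigma$-generated $1$-contact form, so $\cR(d\lambda) = p_1\cR(d\lambda)$ and the extra term drops out. Therefore $p_1 d\theta_\lambda = p_1 d\lambda - p_1 d p_1 \cR(d\lambda) = \cI(d\lambda)$, which is exactly the Euler--Lagrange form $E_\lambda$ — a dynamical (source) form. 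By Theorem~\ref{ThLep} this shows $\theta_\lambda$ is a Lepage $n$-form.

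Finally I would establish uniqueness, i.e.\ that $\theta_\lambda$ is \emph{the} Cartan form and not merely \emph{a} Lepage equivalent of $\lambda$. Here I would invoke Theorem~\ref{ThLep}(4): any Lepage $n$-form with horizontal part $\lambda$ differs from $\theta_{h\rho} + p_1 d\nu$ by an at-least-$2$-contact form, and the at-most-$1$-contact part is precisely what one calls the Cartan form (globalized). Since $\theta_\lambda = \lambda - p_1\cR(d\lambda)$ is manifestly at most $1$-contact (it is the sum of the $0$-contact $\lambda$ and the $1$-contact $-p_1\cR(d\lambda)$), it equals the at-most-$1$-contact component of any Lepage equivalent of $\lambda$, hence is the Cartan form. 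The main obstacle I anticipate is bookkeeping the canonical jet projections correctly through the residual formula \eqref{pir} and making sure the "strongly contact $(n+k-1)$-form" in the definition of $\cR$ genuinely reduces, for $k=1$, to an $\omega^\sigma$-generated $1$-contact $n$-form so that $p_1\cR(d\lambda) = \cR(d\lambda)$ and the cancellation is exact rather than only modulo $\Theta$; this subtlety is exactly where the $k=1$ case is special (compare the Corollary following Proposition~\ref{propsource}).
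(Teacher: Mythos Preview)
Your proposal is correct and follows essentially the same route as the paper: verify $h\theta_\lambda=\lambda$, then compute $p_1 d\theta_\lambda$ using the residual decomposition \eqref{pir} to obtain $\cI(d\lambda)$, which is a dynamical form, so $\theta_\lambda$ is Lepage by Theorem~\ref{ThLep}(2); finally observe that $\theta_\lambda$ is at most $1$-contact, which pins it down as the Cartan form via Theorem~\ref{ThLep}(4). The paper gives exactly the one-line identity $p_{1}d\theta_{\lambda} = p_{1}d\lambda - p_{1}dp_{1}\cR(d\lambda) = \cI(d\lambda)$ immediately after the statement.

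One minor cleanup: your worry at the end about whether $\cR(d\lambda)=p_1\cR(d\lambda)$ is unnecessary and in fact misstated. The formula \eqref{Cartan} already subtracts only the $p_1$-component $p_1\cR(d\lambda)$, so when you differentiate you get $d\theta_\lambda = d\lambda - d\,p_1\cR(d\lambda)$ with no ``higher contact part of $\cR(d\lambda)$'' ever entering; your extra term $p_1 d(\text{higher contact part})$ simply should not appear in the computation. A strongly contact $n$-form is merely contact (i.e.\ $p_0=0$) and need not be purely $1$-contact, but this is irrelevant here. Drop that step and the argument is clean.
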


Notice that, indeed, $\theta_{h\rho}$ is a Lepage $n$-form, equivalent with $\rho$ (and $h\rho$), since by formu\-las~\eqref{pir} and~\eqref{pdr}
\begin{gather*}
 p_1d\theta_{h\rho} = \cI (d\theta_{h\rho}) + p_1 dp_1\cR(d\theta_{h\rho})
 = \cI (d\theta_{h\rho}) + p_1 dp_1\cR(dh\rho) - p_1 dp_1 \cR(dp_1 \cR(dh\rho)) \\
\hphantom{p_1d\theta_{h\rho}}{}
= \cI (d\theta_{h\rho}) + p_1 dp_1\cR(dh\rho) - p_1 dp_1 \cR(dh\rho) + \cI (dp_1 \cR(dh\rho))
= \cI (d\theta_{h\rho}),
\end{gather*}
and, writing $\pi_{r+1,r}^*\rho = h\rho + \beta$ we can see that (for a proper $s$)
\begin{gather*}
\pi_{s,r}^*\rho - \theta_{h\rho} = \beta + p_1 \cR(dh\rho) \in \Theta^s_n .
\end{gather*}

The meaning of Lepage $n$-forms for the calculus of variations comes from the fact that if~$\rho$ is a Lepage $n$-form then the dynamical form $p_1 d\rho = E_{h\rho}$ is the {\em Euler--Lagrange form} of the Lagrangian~$h\rho$. Indeed, the components of $p_1d\rho$ are the {\em Euler--Lagrange expressions} of the Lagrangian $\lambda = h\rho$.

We stress that even though $\tht_{h\rho}$ need not be global, the local forms do give rise to the {\em global form $p_1d\theta_{h\rho}$}. Indeed, formula~\eqref{Cartan} yields
\begin{gather*}
p_{1}d\theta_{\lambda} = p_{1}d\lambda - p_{1}dp_{1} \cR(d\lambda) = \cI(d\lambda),
\end{gather*}
which, indeed, is a global form.
(Note that we recovered the formula for the Euler--Lagrange form of $\lambda$ in Takens representation of the variational sequence.) Moreover, since for every Lepage form $\rho$, $\pi_{r+1,r}^*p_1d\rho= p_1d\tht_{h\rho}$, we get another proof of uniqueness of the Euler--Lagrange form.

As the horizontal parts of Lepage $n$-forms are Lagrangians, we can view a Lepage $n$-form as an {\em extension of a Lagrangian by a contact form}, and we come to the concept of {\em Lepage equivalent of a Lagrangian}~\cite{Kru73}. Given a Lagrangian~$\lam$, a~Lepage equivalent of $\lambda$ is a~Lepage form~$\rho$ such that $\lam=h\rho$.

Remarkably, if $\lambda$ is of order $r$, then its Lepage equivalents are generically of order $2r-1$, and the Euler--Lagrange form is of order~$2r$.

Local existence of Lepage equivalents of $\lambda$ and their structure follows immediately from the above theorem. Moreover, it is well known that every Lagrangian admits a global Lepage equivalent. For $n=1$ and any~$r$, or for any~$n$ and $r \leq 2$ a global Lepage equivalent of~$\lambda$ is the Cartan form. If $n=1$ (mechanics and higher-order mechanics) the Cartan form $\theta_\lambda$ is the {\em unique} Lepage equivalent of~$\lambda$. For $n>1$, Lepage equivalent of~$\lambda$ is no longer unique. Remarkably, apart from the (globalized) Cartan form, there are also other distinguished global Lepage equivalents of~$\lambda$, as, e.g., the celebrated {\em Carath\'eodory form} (\cite{Cara,Ol} for $r=1$, and~\cite{Sa08} for $r=2$), which is invariant with respect to all (not only f\/ibred) coordinate transformations, and the {\em Krupka--Betounes form} (for $r=1$) \cite{Bet,Kru77}, which has the property that $d\rho = 0$ if and only if $E_{h\rho} = 0$ (i.e., the Lagrangian belongs to~$\Ker E_n$).

\begin{Example}
Let us illustrate the use of the Lepage representation of the variational sequence on an example.
Consider the quantum particle Lagrangian~(\ref{QP}) in Example~\ref{ExQP} (including notations).
Its Cartan form reads
\begin{gather*}
 \theta_\lambda=L\omega_0+\frac{\partial L}{\partial v_t}\omega^1\wedge dx+
\frac{\partial L}{\partial w_t}\omega^2\wedge dx-\frac{\partial
L}{\partial v_x}\omega^1\wedge dt-\frac{\partial L}{\partial
w_x}\omega^2\wedge dt \\
\hphantom{\theta_\lambda}{}
=\left(-\frac{\hbar^2}{4m}\big(v_x^2+w_x^2\big)-\frac{\hbar}{2}(vw_t-wv_t)\right)\omega_0 \\
\hphantom{\theta_\lambda=}{}
 +\frac{\hbar^2}{2m}v_x\omega^1\wedge dt+\frac{\hbar^2}{2m}w_x\omega^2\wedge dt
+\frac{\hbar}{2}w\omega^1\wedge dx-\frac{\hbar}{2}v\omega^2\wedge dx.
\end{gather*}
As we have seen, being a Lepage equivalent of the Lagrangian~(\ref{QP}), the form $\theta_\lambda$ quickly gives us the Euler--Lagrange form $E_\lambda$, and hence the Euler--Lagrange equations:
\begin{gather*}
E_\lambda = p_1 d\theta_\lambda
=\left(\frac{\hbar^2}{2m}v_{xx}-\hbar
w_t\right)\omega^1\wedge\omega_0+\left(
\frac{\hbar^2}{2m}w_{xx}+\hbar v_t\right)\omega^2\wedge\omega_0.
\end{gather*}
We can see that, indeed, $E_\lambda$ equals to ${\cal{I}}(d\lambda)$ obtained in Example~\ref{ExQP}; this follows from
the commutativity relation $\operatorname{Lep}_{2}\circ E_2=d\circ \operatorname{Lep}_1$.

We note that the form $\theta_\lambda$ (or any other Lepage equivalent of the Lagrangian) also gives local Noetherian conserved currents related with symmetries of the Lagrangian and the Euler--Lagrange form (see the discussion below).
\end{Example}

With help of a Lepage equivalent of a Lagrangian the non-intrinsic procedure of integration by parts in the f\/irst variation of the action function is substituted by a geometric splitting with help of the Cartan formula for decomposition of the Lie derivative. As a consequence, the integral f\/irst variation formula appears also in the following {\em intrinsic differential form}~\cite{Kru73} (for more details see, e.g.,~\cite{Kru08}). Given a Lagrangian $\lambda$ of order $r$, and a $\pi$-projectable vector f\/ield $\Xi$ on $\bY$ (a ``variation''), it holds
\begin{gather} \label{inf1vf}
L_{J^r\Xi}\lam \equiv L_{J^{2r}\Xi}h\rho= h\big(J^{2r-1}\Xi \rfloor
d\rho\big) + hd\big(J^{2r-1}\Xi \rfloor\rho\big) \equiv h L_{J^{2r-1}\Xi}\rho,
\end{gather}
where $L_{J^r\Xi}$ denotes the Lie derivative along the $r$-jet prolongation of~$\Xi$, and~$\rho$ is (any) Lepage equivalent of $\lambda$. Let us show that the f\/irst term in the sum of~(\ref{inf1vf}) is the Euler--Lagrange term (carrying information about extremals), and the second term is the Noether term, carrying information about conservations laws.

We have $\lambda = h\rho$, and the Euler--Lagrange form is def\/ined by $E_\lambda=p_1d\varrho$.
For the f\/irst dif\/ferential form on the right-hand side of~(\ref{inf1vf}) we obtain
\begin{gather*}
h\big(J^{2r-1}\Xi\rfloor d\rho\big)=h\big(J^{2r}\Xi\rfloor\pi^*_{2r,2r-1}d\rho\big)=
h\big(J^{2r}\Xi\rfloor (p_1d\rho+\mu)\big) = h\big(J^{2r}\Xi\rfloor (E_\lambda)\big),
\end{gather*}
where $\mu$ is at least 2-contact, hence $J^{2r} \Xi\rfloor\mu$ is contact and $h(J^{2r} \Xi\rfloor\mu)=0$.
Denote $E_\lambda=E_\sigma\omega^\sigma\wedge\omega_0$ (i.e., $E_\sigma$ are the Euler--Lagrange expressions of $\lambda$), $\omega_i = \frac{\partial}{\partial x^i}\rfloor \omega_0$, and $\Xi=\xi^i\frac{\partial}{\partial x^i}+\Xi^\sigma\frac{\partial}{\partial y^\sigma}$. Then{\samepage
\begin{gather*}
h\big(J^{2r-1}\Xi\rfloor d\rho\big)=h\big(J^{2r}\Xi\rfloor
(E_\sigma\omega^\sigma\wedge\omega_0)\big)
=h\big(E_\sigma\Xi^\sigma\omega_0-E_\sigma\xi^i\omega^\sigma\wedge\omega_i\big)=
E_\sigma\Xi^\sigma \omega_0,
\end{gather*}
showing that, indeed, $h(J^{2r-1}\Xi\rfloor d\rho)$ provides Euler--Lagrange equations.}

Let us turn to the second dif\/ferential form in the sum of~(\ref{inf1vf}).
Assume that $\Xi$ is a Noether symmetry of the Lagran\-gian~$\lambda$, meaning that $L_{J^r\Xi}\lambda=0$. Then
in the inf\/initesimal f\/irst varia\-tion formula~(\ref{inf1vf}) the left-hand side vanishes, and along every extremal $\gamma$ of the Lagran\-gian~$\lambda$ the $n$-form $h(J^{2r-1}\Xi\rfloor d\rho)$ vanishes, because the Euler--Lagrange equations $E_\sigma \circ J^{2r}\gamma=0$, \mbox{$1 \leq \sigma \leq m$}, hold true. Consequently, the $n$-form $hd(J^{2r-1}\Xi \rfloor\rho)$ vanishes along extremals, which is the celebrated Noether theorem.
The $(n-1)$-form
\begin{gather*}
\Phi(\Xi)=J^{2r-1}\Xi\rfloor\rho
\end{gather*}
is then the corresponding Noether current.

In the simplest case of f\/irst-order mechanics when $\rho = \theta_\lambda = Ldt+\frac{\partial L}{\partial\dot{q}^\sigma}\omega^\sigma$,
we recover the well-known formulas
\begin{gather*}
 h\big(J^{1}\Xi \rfloor d\theta_\lambda\big)=\left(\frac{\partial L}{\partial q^\sigma}-
\frac{d}{dt} \frac{\partial L}{\partial\dot{q}^\sigma}\right)\Xi^\sigma dt,
\\
\Phi(\Xi)=L\xi^0+\frac{\partial
L}{\partial\dot{q}^\sigma}(\Xi^\sigma-\dot{q}^\sigma\xi^0) = -H \xi^0 + p_\sigma \Xi^\sigma.
\end{gather*}
If $\Xi$ is a Noether symmetry of $\lambda$, then the function $\Phi(\Xi)$ is constant along extremals.

We shall return to the variational splitting of the Lie derivative in a more general context in Section~\ref{section4}.

\subsection[Lepage forms of higher degrees, and the Lepage representation of the variational sequence]{Lepage forms of higher degrees, and the Lepage representation\\ of the variational sequence}

A motivation for introducing Lepage forms of higher degrees is the following observation:
If $\dim X = 1$ then every Lagrangian $\lambda$ has a unique Lepage equivalent, the Cartan form $\theta_\lambda$. The mapping $\Lep_1\colon \Lambda^r_{1,\bX} \ni \lambda \to \theta_\lambda \in \Lambda^{2r-1}_{1}$ thus relates the Euler--Lagrange mapping to the exterior derivative through the relationship $E_\lambda = p_1 d\theta_\lambda$. This suggests the idea to extend $\Lep_1$ to dynamical forms, i.e., to $\Lep_2\colon \Lambda^{s}_{2,\bY} \ni \varepsilon \to \alpha_{\varepsilon} \in \Lambda^{s}_{2}$, so that we would have
\begin{gather*}
\begin{CD}
\lambda@> E_1 >> E_\lambda
\\
@V{\Lep_1} VV @VV{\Lep_2} V
\\
\theta_\lambda @> d >> d\theta_\lambda
\end{CD}
\end{gather*}

The concept of a {\em Lepage equivalent of a dynamical form} was introduced in \cite{olga86} for locally variational forms. In that situation one obtains an extension of the Euler--Lagrange form to a~{\em closed form} which (similarly as in the case of a~Lepage equivalent of a Lagrangian) is unique for $n=1$ and nonunique otherwise
(see \cite{Gr08,HK,olga86,olga02,KS09}). Lepage equivalents of Euler--Lagrange forms became important particularly in study of the inverse variational problem, Hamiltonian theory, and geometric integration based on symmetries of the equations \cite{Got90,Kru15,olga97,olga02,olga09-2,KP08}.

Having the variational sequence and its Takens representation allows to def\/ine the concept of Lepage equivalent for any source form, and, in this way, to link all the variational operators to the exterior derivative.
Remarkably, with Lepage forms one gets another representation of the variational sequence where variational morphisms simplify to exterior derivatives. The immediate benef\/it is the reformulation and solution of the problem of existence of Lagrangians for given dif\/ferential equations ``as they stand'', the problem of the structure of null Lagrangians, and the other inverse problems in the variational sequence: by using Lepage forms these problems are reduced to application of the Poincar\'e lemma.

The idea how to generalize the concept of Lepage form to higher degrees is suggested by property (2) (or (3)) in Theorem \ref{ThLep}, that the lowest contact component of the exterior derivative of a Lepage form $\rho$ is the canonical source form for $d\rho$:

\begin{Definition}[\cite{KrMu05,KrSe05}]
Let $k \geq 0$. A $(n+k)$-form $\rho$ on $J^r \bY$ is called {\em Lepage form} if
\begin{gather} \label{defLep}
p_{k+1}d\rho=\cI (d\rho).
\end{gather}
\end{Definition}

Lepage $(n+k)$-forms for $k > 0$ have many similar properties as have Lepage $n$-forms. First,
\begin{itemize}\itemsep=0pt
\item {\em every} $q$-form on $\bY$, $q \geq n$, is a Lepage form,
\item {\em every closed} $q$-form on $J^r\bY$, $q \geq n$, is a Lepage form.
\end{itemize}

Structure of Lepage forms can be derived by direct calculations in coordinates from the def\/inition. However, this procedure is quite lengthy and tedious. Much more advantageously, we again explore the operator~$\cR$ to solve the equation~\eqref{defLep} with respect to~$\rho$ in an intrinsic way. Then we obtain:

\begin{Theorem} \label{thmhC}
Equation \eqref{defLep} has the solution
\begin{gather} \label{Lepdeg}
\pi_{r+1,r}^*\rho = \tht_{p_k\rho} +d\nu+ \mu = \tht_{p_k\rho} +p_{k+1}d\nu+\eta,
\end{gather}
where
\begin{gather} \label{Cartandeg}
\theta_{p_k\rho} = p_k\rho - p_{k+1} \cR(dp_k\rho) ,
\end{gather}
 $\nu$ is an arbitrary at least $(k+1)$-contact $(n+ k-1)$-form, and $\mu$ $($resp.~$\eta)$ is an arbitrary at least $(k+2)$-contact form.

For every choice of $\nu$ and $\mu$ $($resp.~$\eta)$ the Lepage forms~\eqref{Lepdeg} belong to the same variational class $[\rho] \in \cV^s_{n+k}$ $($for a proper~$s)$, i.e., $\pi_{s,r}^*\rho - \theta_{p_k\rho} \in \Theta^s_{n+k}$.
\end{Theorem}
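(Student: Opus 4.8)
The plan is to verify equation~\eqref{defLep} directly for the candidate form $\theta_{p_k\rho}$ of~\eqref{Cartandeg}, and then observe that the additional terms $d\nu + \mu$ (equivalently $p_{k+1}d\nu + \eta$) do not affect the defining relation, and that all these forms lie in the class $[\rho]$. The computation for $\theta_{p_k\rho}$ is formally identical to the one already carried out in the excerpt for the Cartan $n$-form after Theorem~\ref{thmCartanform}: writing $\beta = p_k\rho$ (a $k$-contact $(n+k)$-form) and using the residual identity~\eqref{pir} applied to $d\beta$, namely $p_{k+1} d\beta = \cI(d\beta) + p_{k+1} d p_{k+1} \cR(d\beta)$, together with $\cI(p_{k+1} d p_{k+1}\cR(\cdot)) = 0$ from~\eqref{pdr}, one gets
\begin{gather*}
p_{k+1} d\theta_{p_k\rho} = p_{k+1} d\beta - p_{k+1} d p_{k+1}\cR(d\beta) = \cI(d\beta) = \cI(d p_{k+1}\cR(d\beta)) + \cI(d\theta_{p_k\rho}) = \cI(d\theta_{p_k\rho}),
\end{gather*}
where in the last step I use that $\cI$ annihilates $p_{k+1}dp_{k+1}\cR(d\beta)$ and property (2) of $\cI$ (that $\cI$ depends only on the lowest contact component). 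So $\theta_{p_k\rho}$ solves~\eqref{defLep}.

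Next I would treat the general solution~\eqref{Lepdeg}. If $\rho' = \theta_{p_k\rho} + p_{k+1}d\nu + \eta$ with $\nu$ at least $(k+1)$-contact and $\eta$ at least $(k+2)$-contact, then $d\rho' = d\theta_{p_k\rho} + d p_{k+1} d\nu + d\eta$; applying $p_{k+1}$ and $\cI$ and noting that (i) $p_{k+1}$ of $dp_{k+1}d\nu$ equals $p_{k+1}d(p_{k+1}d\nu)$ up to the contact components already present, which is killed by $\cI$ via~\eqref{pdr}, and (ii) $\eta$ being at least $(k+2)$-contact, both $p_{k+1}d\eta$-contributions to the lowest component and their $\cI$-images vanish appropriately, one concludes $p_{k+1}d\rho' = \cI(d\rho')$. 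One should also check the consistency of the two forms of~\eqref{Lepdeg}: that $d\nu$ can be replaced by its lowest relevant component $p_{k+1}d\nu$ modulo an at least $(k+2)$-contact form absorbed into $\eta$ — this follows from the contact decomposition~\eqref{splitting of forms} applied to $d\nu$, since $\nu$ being at least $(k+1)$-contact forces $dd\nu = 0$ hence all components of $d\nu$ below degree $k+1$ vanish.

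For the class statement, I would use property (3) of $\cI$ together with the residual decomposition exactly as in the remark following Theorem~\ref{thmCartanform}. Write $\pi_{r+1,r}^*\rho = p_k\rho + (\text{higher-contact terms})$; the higher-contact part is strongly contact, hence in $\Theta^s_{n+k}$. Then $p_k\rho - \theta_{p_k\rho} = p_{k+1}\cR(dp_k\rho)$ which is a $d$-exact-up-to-contact correction: more precisely $\cR(dp_k\rho)$ is a strongly contact form and the difference $p_k\rho - \theta_{p_k\rho}$ together with $d\nu + \mu$ all lie in $\Theta^s_{n+k}$, because each of $\cR(dp_k\rho)$, $\nu$, $\mu$ is (strongly) contact of the appropriate degree and $\Theta$ is closed under $d$ of contact forms of one degree lower by its very definition $\Theta^r_q = \Lambda^r_{q,c} + (d\Lambda^r_{q-1,c})$. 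Thus $\pi_{s,r}^*\rho - \theta_{p_k\rho} \in \Theta^s_{n+k}$, and adding $d\nu + \mu$ keeps us in the same class.

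\textbf{Main obstacle.} The routine but delicate point is bookkeeping the contact degrees and the canonical jet projections: ensuring that when one applies $p_{k+1}$ and $\cI$ to $d(p_{k+1}d\nu)$ and to $d\eta$ the "cross terms" really do land in the correct contact degree so that $\cI$ annihilates them, and that the orders of prolongation match up so that all the identities~\eqref{pir}, \eqref{pdr} and properties (2)--(4) of $\cI$ apply on the nose. I expect the cleanest route is to reduce everything to the $k$-contact component $p_k\rho$ at the outset (using property (2) of $\cI$ and that $\Theta$ kills the strongly contact remainder), so that the general $(n+k)$-form computation collapses to essentially the $n$-form computation already displayed in the paper, with $k$ in place of $1$.
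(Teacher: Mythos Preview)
Your proof is correct and follows essentially the same route as the paper. Both arguments hinge on the identities~\eqref{pir} and~\eqref{pdr} to verify that $\theta_{p_k\rho}$ is Lepage, then dispose of the $d\nu+\mu$ freedom by contact-degree counting, and finally observe that $\pi_{s,r}^*\rho-\theta_{p_k\rho}$ is strongly contact. Your organization of the first step is in fact slightly more direct than the paper's: you apply~\eqref{pir} to $d\beta=dp_k\rho$ and read off $p_{k+1}d\theta_{p_k\rho}=\cI(d\beta)$ immediately, whereas the paper applies~\eqref{pir} to $d\theta_{p_k\rho}$ itself and must then unwind a nested $\cR(dp_{k+1}\cR(dp_k\rho))$ term. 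One small cleanup: your direct verification for the second form $\theta_{p_k\rho}+p_{k+1}d\nu+\eta$ is unnecessary once you have the consistency remark (that $d\nu-p_{k+1}d\nu$ is at least $(k+2)$-contact and absorbs into $\eta$); the first form $\theta_{p_k\rho}+d\nu+\mu$ makes the Lepage check trivial since $d(d\nu)=0$ and $p_{k+1}d\mu=0$.
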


\begin{proof}
First we show that $\theta_{p_k\rho}$ given by (\ref{Cartandeg}) is a Lepage form, i.e., satisf\/ies
$p_{k+1}d\theta_{p_k\rho} = \cI (d\theta_{p_k\rho})$.
Using the same procedure as above for $k=0$, we obtain:
\begin{gather*}
 p_{k+1}d\theta_{p_k\rho} = \cI (d\theta_{p_k\rho}) + p_{k+1} dp_{k+1}\cR(d\theta_{p_k\rho}) \\
\hphantom{p_{k+1}d\theta_{p_k\rho}}{}
= \cI (d\theta_{p_k\rho}) + p_{k+1} dp_{k+1}\cR(dp_k\rho) - p_{k+1}dp_{k+1} \cR(dp_{k+1} \cR(dp_k\rho)) = \cI (d\theta_{p_k\rho}),
\end{gather*}
since
\begin{gather*}
p_{k+1}dp_{k+1} \cR(dp_{k+1} \cR (dp_k\rho)) = p_{k+1} dp_{k+1} \cR(dp_k\rho) - \cI (dp_{k+1} \cR(dp_k\rho))\\
\hphantom{p_{k+1}dp_{k+1} \cR(dp_{k+1} \cR (dp_k\rho))}{}
= p_{k+1} dp_{k+1} \cR(dp_k\rho) ,
\end{gather*}
in view of $\cI (dp_{k+1} \cR(dp_k\rho)) = \cI (p_{k+1}(dp_{k+1} \cR(dp_k\rho))) = 0$. Hence, $\theta_{p_k\rho}$
is a solution of \eqref{defLep}. Since the equation concerns $d\rho$, it is clear that if $\rho$ is a solution then also $\rho + d\nu$ is a solution, where $\nu$ is an arbitrary $(n+k-1)$-form. Moreover, equation~\eqref{defLep} allows to determine only the lowest, $(k+1)$-contact component of $d\rho$. Thus, any solution~$\rho$ is determined up to
$d\nu+\mu$, where~$\mu$ is at least $(k+2)$-contact. In this way, so far we have obtained $\rho = \theta_{p_k\rho} + d\nu + \mu$, where~$\mu$ is at least $(k+2)$-contact.
We observe that $p_k \theta_{p_k\rho} = p_k\rho$, hence $p_k d\nu$ must be equal to zero, meaning that $\nu$ is at least $(k+1)$-contact.

Finally, we have to show that whatever the choice of $\nu$ and $\mu$, $\rho$ is equivalent with $\theta_{p_k\rho}$. This is, however, easily seen: if we denote by $s$ the order of $\theta_{p_k\rho}$, and by $\beta$ the at least $(k+1)$-contact component of $\rho$, we obtain
$\pi_{s,r}^*\rho - \theta_{p_k \rho} = p_k\rho + \beta - \theta_{p_k \rho} = \beta + p_{k+1} \cR(dp_k\rho)$, which is a strongly contact $(n+k)$-form, thus belonging to the kernel $\Theta_{n+k}^s$.
\end{proof}

The $(n+k)$-form $\theta_{p_k\rho}$ is the {\em higher-degree generalization of the Cartan form}. It is completely determined by its lowest contact ($k$-contact) component. Similarly as the Cartan form of degree~$n$ it is generically not global, since it is def\/ined by means of the operator~$\cR$. However, quite similarly as in the case of Lepage $n$-forms, it can be ``globalized'' by adding a proper term~$p_{k+1}d\nu$.

And similarly as in the familiar $k=0$ situation, $p_{k+1} d\theta_{p_k\rho}$ is global, and it holds
\begin{gather*}
p_{k+1}d\theta_{p_k\rho} = \cI(dp_k\rho) .
\end{gather*}

Also in the higher degree situation, there is a distinguished case of $n=1$ (and arbitrary $r$) (mechanics). Then $\nu$ is a $k$-form and~$\mu$ is a~$(k+1)$-form, hence they both are zero and we have:

\begin{Corollary}
Let $n = \dim \bX = 1$, $r \geq 0$. Then for every $k \geq 0$, the higher degree Cartan form $\theta_{p_k\rho}$ is unique and hence global.
\end{Corollary}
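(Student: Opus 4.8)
The plan is to apply Theorem~\ref{thmhC} in the special case $n=1$ and observe that the freedom in the description of the Lepage form collapses. Recall that by Theorem~\ref{thmhC} every solution $\rho$ of the defining equation~\eqref{defLep} has, up to a jet projection, the form $\pi_{r+1,r}^*\rho = \theta_{p_k\rho} + d\nu + \mu$, where $\nu$ is an arbitrary at least $(k+1)$-contact $(n+k-1)$-form and $\mu$ is an arbitrary at least $(k+2)$-contact $(n+k)$-form, and where the distinguished piece $\theta_{p_k\rho} = p_k\rho - p_{k+1}\cR(dp_k\rho)$ is intrinsically determined by the $k$-contact component $p_k\rho$ alone.

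The key step is a degree count. When $n = \dim\bX = 1$, the forms $\nu$ and $\mu$ live in degrees $n+k-1 = k$ and $n+k = k+1$ respectively. A $k$-form on $J^s\bY$ cannot be more than $k$-contact, and a $(k+1)$-form cannot be more than $(k+1)$-contact. Hence the requirement that $\nu$ be at least $(k+1)$-contact forces $\nu = 0$, and the requirement that $\mu$ be at least $(k+2)$-contact forces $\mu = 0$. Therefore the decomposition in~\eqref{Lepdeg} reduces to $\pi_{r+1,r}^*\rho = \theta_{p_k\rho}$, so every Lepage $(n+k)$-form coincides (up to projection) with its higher degree Cartan form $\theta_{p_k\rho}$, which is uniquely determined by $p_k\rho$.

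It remains to argue globality. Since $\theta_{p_k\rho}$ is now the unique Lepage form in its variational class, and since (as remarked after Theorem~\ref{thmhC}, and already in the $k=0$ case treated in Theorem~\ref{ThLep}(4) and the discussion following Theorem~\ref{thmCartanform}) a global Lepage representative always exists --- one may, for instance, take a partition of unity and patch local Lepage forms, the result again being a Lepage form in the same class by Theorem~\ref{thmhC} --- uniqueness forces that global representative to equal $\theta_{p_k\rho}$ itself. Equivalently: the operator $\cR$ enters $\theta_{p_k\rho}$ only through $p_{k+1}\cR(dp_k\rho)$, a $(k+1)$-form, which in degree count on $J^s\bY$ with $n=1$ must be $(k+1)$-contact, i.e.\ $p_{k+1}\cR(dp_k\rho) = \cR(dp_k\rho)$ does not depend on the non-invariant choices in $\cR$; the combination $p_k\rho - p_{k+1}\cR(dp_k\rho)$ is then visibly invariant under fibred coordinate changes and hence defines a global form. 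Both routes give the claim.

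I expect the only real subtlety to be the globality half of the statement: the uniqueness half is a one-line degree argument, but to conclude globality cleanly one must either invoke the known existence of a global Lepage equivalent (so that uniqueness upgrades it to $\theta_{p_k\rho}$) or, more directly, check that when $n=1$ the ``non-invariant'' term $p_{k+1}\,\cR(dp_k\rho)$ is actually the full $\cR(dp_k\rho)$ and that this combination transforms correctly --- a short verification that the residual ambiguity in $\cR$ (which by construction lives in lower contact degrees) cannot survive in top contact degree $k+1$ on a one-dimensional base. Neither step requires coordinate computations beyond this bookkeeping.
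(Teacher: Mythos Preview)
Your proof is correct and follows essentially the same approach as the paper: the paper's entire argument is the single observation that for $n=1$ the forms $\nu$ and $\mu$ have degrees $k$ and $k+1$ respectively, so the contact-degree constraints force both to vanish, giving uniqueness. The paper then simply writes ``unique and hence global'' without further comment; your additional paragraph on globality (via existence of a global Lepage equivalent, or via the observation that $p_{k+1}\cR(dp_k\rho)=\cR(dp_k\rho)$ in this degree) just makes explicit what the paper leaves implicit.
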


The higher-degree generalization of Theorem~\ref{ThLep} now reads as follows:

\begin{Theorem} \label{ThLepdeg}
Let $k \geq 0$. The following conditions are equivalent:
\begin{enumerate}\itemsep=0pt
\item[$(1)$] $\rho$ is a Lepage $(n+k)$-form of order $r$,

\item[$(2)$] $\pi_{r+1,r}^*d\rho= \cI(d\rho) +F$, where $F$ is at least $(k+2)$-contact,

\item[$(3)$] $p_{k+1}d\cR(p_{k+1}d\rho)=0$,

\item[$(4)$] $\pi_{r+1,r}^*\rho= \tht_{p_k\rho} +d\nu+ \mu = \tht_{p_k\rho} +p_{k+1}d\nu + \eta$,
where
$
\theta_{p_k\rho} = p_k\rho - p_{k+1} \cR(dp_k\rho)$,
 $\nu$ is an at least $(k+1)$-contact $(n+ k-1)$-form, and $\mu$, resp.~$\eta$ is at least $(k+2)$-contact.
 \end{enumerate}
\end{Theorem}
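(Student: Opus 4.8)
The plan is to show that each of $(2)$, $(3)$, $(4)$ is equivalent to $(1)$, i.e.\ to the defining equation \eqref{defLep} $p_{k+1}d\rho=\cI(d\rho)$. The unifying tool is the contact decomposition \eqref{splitting of forms}: since $\rho$ is an $(n+k)$-form, $\pi_{r+1,r}^*\rho$ is at least $k$-contact, hence $\pi_{r+1,r}^*d\rho=p_{k+1}d\rho+p_{k+2}d\rho+\cdots$ is at least $(k+1)$-contact, while $\cI(d\rho)$ is exactly $(k+1)$-contact (it is a source form of degree $n+k+1$). All identities below are understood up to a pull-back to a common, suitably high jet order $s$ (property $(4)$ of $\cI$ shows $s=4r+3$ is enough), exactly as in the proof of Theorem~\ref{thmhC}.

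For $(1)\Leftrightarrow(2)$ I would simply compare contact components: if \eqref{defLep} holds, put $F:=\pi_{r+1,r}^*d\rho-\cI(d\rho)=p_{k+2}d\rho+\cdots$, which is at least $(k+2)$-contact; conversely, taking the $(k+1)$-contact part of $\pi_{r+1,r}^*d\rho=\cI(d\rho)+F$ gives $p_{k+1}d\rho=\cI(d\rho)$. For $(1)\Leftrightarrow(3)$ I would apply the residual-operator identity \eqref{pir} to the $(n+k+1)$-form $p_{k+1}d\rho$ (so that the role of ``$k$'' is played by ``$k+1$''), using $\cI(p_{k+1}d\rho)=\cI(d\rho)$ (property $(2)$ of $\cI$):
\begin{gather*}
p_{k+1}d\rho=\cI(d\rho)+p_{k+1}dp_{k+1}\cR(p_{k+1}d\rho).
\end{gather*}
Hence \eqref{defLep} holds iff $p_{k+1}dp_{k+1}\cR(p_{k+1}d\rho)=0$. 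Finally, $\cR(p_{k+1}d\rho)$ is a strongly contact $(n+k)$-form, hence at least $(k+1)$-contact, so the lowest contact component of $d\cR(p_{k+1}d\rho)$ is precisely $p_{k+1}dp_{k+1}\cR(p_{k+1}d\rho)$; its vanishing is therefore the same as $p_{k+1}d\cR(p_{k+1}d\rho)=0$, which is $(3)$.

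For $(1)\Leftrightarrow(4)$ I would invoke Theorem~\ref{thmhC} directly. That theorem is exactly the implication $(1)\Rightarrow(4)$: any solution $\rho$ of \eqref{defLep} has the form \eqref{Lepdeg} with $\theta_{p_k\rho}$ given by \eqref{Cartandeg}; writing $d\nu=p_{k+1}d\nu+(\text{at least }(k+2)\text{-contact})$ and absorbing the surplus into $\mu$ produces the second displayed form $\theta_{p_k\rho}+p_{k+1}d\nu+\eta$. For $(4)\Rightarrow(1)$ I would note that the proof of Theorem~\ref{thmhC} already establishes that $\theta_{p_k\rho}$ satisfies \eqref{defLep}, and then check that adding $d\nu$ (with $\nu$ at least $(k+1)$-contact) and $\mu$ (at least $(k+2)$-contact) does not spoil it: $d(d\nu)=0$, and $d\mu$ is an at least $(k+2)$-contact $(n+k+1)$-form, so $p_{k+1}d\mu=0$ and, by formula \eqref{inter}, $\cI(d\mu)=0$; therefore $p_{k+1}d\rho=p_{k+1}d\theta_{p_k\rho}$ and $\cI(d\rho)=\cI(d\theta_{p_k\rho})$, and \eqref{defLep} for $\theta_{p_k\rho}$ transfers to $\rho$. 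This also gives the equivalence of the two displays in $(4)$.

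The whole argument is essentially bookkeeping of contact degrees together with the already proved properties of $\cI$ and $\cR$ and with Theorem~\ref{thmhC}; there is no genuinely hard step. The point I would be most careful about is the one that is typically glossed over: the orders and projections. The forms $p_{k+1}d\rho$, $\cI(d\rho)$, $\cR(p_{k+1}d\rho)$ and $\theta_{p_k\rho}$ live on different jet prolongations, so one must pull everything back to a common order $s$ before asserting the identities above and, in particular, before claiming the membership $\pi_{s,r}^*\rho-\theta_{p_k\rho}\in\Theta^s_{n+k}$ that underlies the statement ``$\rho$ is a Lepage form''. Getting these projections consistent is exactly what was done inside the proof of Theorem~\ref{thmhC}, and it is inherited here; I would state this once at the outset and then suppress the projection symbols, as the paper does elsewhere.
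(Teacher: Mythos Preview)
Your proposal is correct and follows exactly the route the paper sets up: the theorem is stated in the paper without an explicit proof, as an immediate higher-degree analogue of Theorem~\ref{ThLep} whose ingredients have all been established in Theorem~\ref{thmhC} and in the listed properties of $\cI$ and $\cR$. Your argument supplies precisely those details---contact-degree bookkeeping for $(1)\Leftrightarrow(2)$, the identity \eqref{pir} for $(1)\Leftrightarrow(3)$, and Theorem~\ref{thmhC} for $(1)\Leftrightarrow(4)$---and your remark about pulling everything back to a common order before comparing is well taken.
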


Again, in the decomposition of $\pi_{r+1,r}^*\rho$ in (4), $\tht_{p_k\rho} +p_{k+1}d\nu$ and $\eta$ are global (being the at most $(k+1)$-contact and the at least $(k+2)$-contact part of $\rho$, respectively), while, in general, the decomposition $\tht_{p_k\rho} + p_{k+1}d\nu$ is not invariant under changes of f\/ibred coordinates. And $p_{k+1} d\rho$ is unique (independent upon a choice of $\rho$).

Also the concept of Lepage equivalent of a Lagrangian extends to $(n+k)$-forms. Given a~$k$-contact form $\sigma$, by a {\em Lepage equivalent of $\sigma$} we mean a Lepage form $\rho$ such that $p_k\rho = \sigma$.
Theorem~\ref{ThLepdeg} then gives the structure of Lepage equivalents, and guarantees {\em local existence}.

There remains to answer a question about existence of {\em global} Lepage equivalents of higher degrees. The result is af\/f\/irmative, and its proof is a generalization of a proof for $n$-forms~\cite{Kru83}. The idea is to show that there exists a Lepage equivalent which is a global form on a closed submanifold of~$J^{2r+1}\bY$.
Then, since the (local) form def\/ines a soft sheaf of forms it can be extended to a global form on the whole space.

Let us consider the canonical injection
$\iot_{r+1, r}\colon J^{2r+1}\bY \to J^{ r+1}(J^{r%(r+1)-1
}\bY)$
 def\/ined by
$\iot_{r+1, r}(j^{2r+1}_{x}\gam)$ $= j^{r+1}_{x}(J^{r}\gam)$.

\begin{Theorem}
Let $\tilde{\phi}$ be a $(n+k)$-form such that $p_{l}\tilde{\phi}=0$, $l\geq k+1$, e.g., $\tilde{\phi}$ is the $p_k$ component defined on $J^{1}(J^r\bY)$ of a $(n+k)$-form on $J^r\bY$. There exists a $(n+k)$-form $\tilde{\eta}$ on
$J^{ r+1}(J^{r%(r+1)-1
}\bY)$ such that
\begin{enumerate}\itemsep=0pt
\item[$(1)$] $\tilde{\eta}$ is $(\pi_r )_{r+1, 0}$-horizontal and $\tilde{\eta}$ is a $(k+1)$-contact $(n+k)$-form on $J^{ r+1}(J^{r%(r+1)-1
}\bY)$;
\item[$(2)$] for every $(\pi_r)_{r+1,0}$-vertical vector f\/ield $\bZ$ on $J^{ r+1}(J^{r%(r+1)-1
}\bY)$ $($i.e., such that the projection of~$\bZ$ on~$J^{r}\bY$ is zero$)$
\begin{gather*}
\iot_{r+1, r}^*L_{\bZ}((\pi_r )^{*}_{r+1, 1}\tilde{\phi}+\tilde{\eta})
\end{gather*}
is a $(k+1)$-contact $(n+k)$-form on $J^{ r+1}(J^{r%(r+1)-1
}\bY)$;
\item[$(3)$] the form $\iot_{r+1, r}^* ((\pi_r )^{*}_{r+1, 1}\tilde{\phi}+\tilde{\eta}) $ is a Lepage $(n+k)$-form.
\end{enumerate}
\end{Theorem}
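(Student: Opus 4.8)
The plan is to mimic the classical construction of a global Lepage equivalent of a Lagrangian (the case $k=0$, due to Krupka \cite{Kru83}) and transfer it to the $k$-contact setting. The starting point is the observation that $\tilde{\phi}$, being a $(\pi_r)_{1,0}$-pullback of a form on $J^r\bY$ (hence only $p_0,\dots,p_k$ components, with $p_k\tilde\phi\ne 0$ in general), is in particular $(\pi_r)_{1,0}$-horizontal on $J^1(J^r\bY)$. First I would work on the total space $J^{r+1}(J^r\bY)$, where by construction all the usual splitting machinery \eqref{splitting of forms} of Section~\ref{section2} is available for the fibration $\pi_r\colon J^r\bY\to\bX$. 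The target is to produce a $(k+1)$-contact correction $\tilde\eta$, horizontal over $J^r\bY$ (i.e.\ $(\pi_r)_{r+1,0}$-horizontal), so that the sum $(\pi_r)^*_{r+1,1}\tilde\phi+\tilde\eta$ becomes, after restriction along $\iot_{r+1,r}$, a Lepage $(n+k)$-form in the sense of \eqref{defLep}.

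The key steps, in order, are: (i) Use the fact proved in Theorem~\ref{thmhC} — that $\theta_{p_k\rho}=p_k\rho-p_{k+1}\cR(dp_k\rho)$ solves \eqref{defLep} — to identify the \emph{local} candidate. On an adapted coordinate patch, the form $p_{k+1}\cR(dp_k\rho)$ is an at least $(k+1)$-contact $(n+k)$-form; modulo lower-order contact terms and an exact term it can be taken $(\pi_r)_{r+1,0}$-horizontal. Call this local form $\tilde\eta_U$; its existence and $(k+1)$-contact/horizontality properties are exactly what Theorem~\ref{thmhC} supplies. (ii) Show that the ambiguity in $\tilde\eta_U$ between two overlapping charts is again $(\pi_r)_{r+1,0}$-horizontal and $(k+1)$-contact — this is where the statement that $\tilde\phi$ has \emph{no} components of contact order $\ge k+1$ is used, so that on overlaps the difference of local primitives sits in the contact ideal in the controlled way; the transition terms are sections of a \emph{soft} sheaf (the sheaf of $(k+1)$-contact $(\pi_r)_{r+1,0}$-horizontal $(n+k)$-forms on $J^{r+1}(J^r\bY)$), since this is a sheaf of sections of a vector bundle over a paracompact base. (iii) Invoke softness (partition of unity) to glue the $\tilde\eta_U$ into a global $\tilde\eta$ on $J^{r+1}(J^r\bY)$ satisfying (1). (iv) Property (2) then follows from (1) together with the Cartan formula $L_{\bZ}=\bZ\rfloor d+d\,\bZ\rfloor$: for a $(\pi_r)_{r+1,0}$-vertical $\bZ$, contracting a $(\pi_r)_{r+1,0}$-horizontal form drops horizontality but, after the $\iot_{r+1,r}$-restriction (which realizes the holonomic jet), the contact degree can only go down by one, so the result stays at least $(k+1)$-contact; here one uses that $\iot_{r+1,r}$ intertwines the contact structures, i.e.\ pulls back contact forms to contact forms of at least the same degree. (v) Finally, property (3): restricting along $\iot_{r+1,r}$ brings us back onto $J^{2r+1}\bY$, and by construction the restricted form has $p_k$-component equal to the given $p_k\rho$ while its $(k+1)$-contact part of the exterior derivative is, by the local model from step (i) and Theorem~\ref{thmhC}, precisely $\cI(d(p_k\rho))$; hence \eqref{defLep} holds and the restricted form is a Lepage $(n+k)$-form.

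The main obstacle I expect is step (ii)–(iv): controlling how contact degree behaves under the two operations that are genuinely non-trivial here, namely passage from $J^{r+1}(J^r\bY)$ down to $J^{2r+1}\bY$ via $\iot_{r+1,r}$, and contraction by $(\pi_r)_{r+1,0}$-vertical (but not $\pi$-vertical) vector fields. One must check carefully that $\iot_{r+1,r}^*$ does not \emph{raise} horizontality in a way that destroys the $(k+1)$-contact estimate, and that the non-holonomic contact structure on $J^{r+1}(J^r\bY)$ restricts correctly to the holonomic one on $J^{2r+1}\bY$; this is the technical heart and the reason the construction is done upstairs on $J^{r+1}(J^r\bY)$ rather than directly on $J^{2r+1}\bY$. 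Everything else — the local solvability, the identification of the lowest contact component of $d$ via $\cI$, and the soft-sheaf gluing — is a routine adaptation of the $k=0$ argument with $h=p_0$ systematically replaced by $p_k$ and $p_1 d$ replaced by $p_{k+1}d$.
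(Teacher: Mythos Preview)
Your plan is reasonable and in the spirit of Krupka's original $k=0$ argument, but it takes a different route from the paper. You propose to work with the \emph{local} residual pieces $-p_{k+1}\cR(dp_k\rho)$ coming from Theorem~\ref{thmhC}, analyse the transition cocycle on overlaps, and then glue by partition of unity. The paper instead bypasses the cocycle analysis entirely: it observes that the interior Euler operator $\cI$ is \emph{global and unique} (property~(4) of $\cI$), so on the closed submanifold $\iot_{r+1,r}(J^{2r+1}\bY)\subset J^{r+1}(J^r\bY)$ one already has a globally well-defined candidate for $\tilde\eta$ built from $\cI(p_k\rho)$; softness is then invoked not to solve a \v{C}ech problem but simply to \emph{extend} this global section from a closed subset to the ambient space. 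This is a genuine shortcut: it replaces your steps (ii)--(iii), which require controlling the difference of two local $\cR$-primitives, by a single appeal to the uniqueness of $\cI$.

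What your approach buys is a more explicit description of $\tilde\eta$ (it is visibly the higher-degree Cartan correction), at the cost of the overlap analysis you flag as the ``main obstacle''. What the paper's approach buys is brevity, at the cost of being less constructive and, as written, rather terse (the claimed formula $\tilde\eta=-\cI(p_k\rho)$ does not literally satisfy the $(k+1)$-contact condition in~(1); one should read it as determining $\tilde\eta$ only along $\iot_{r+1,r}(J^{2r+1}\bY)$, with the extension supplying the off-submanifold behaviour). Your honest identification of step (ii)--(iv) as the delicate point is correct; note that the paper does not really carry out the verification of~(2) either, so on that front your proposal is at least as detailed as the published argument.
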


\begin{proof}
It is enough to prove uniqueness of $\tilde{\eta}$ when restricted to the submanifold $\iot_{r+1, r}(J^{2r+1}\bY)$.
In fact, for $\rho$ a $(n+k)$ form on $J^{r}\bY$,
we can take $\tilde{\eta} = -\cI(p_{k}\rho)$ when $\tilde{\phi}= p_k\rho$. The result follows by the uniqueness of $\cI(p_{k}\rho)$ which implies that it is a globally def\/ined form on a closed subset of~$J^{2r+1}\bY$ (def\/ined by the injection above), therefore since the (local) form def\/ines a soft sheaf of forms on~$J^{2r+1}\bY$, then~$\tilde{\eta}$ can be globalized.
\end{proof}

The most important Lepage forms of degree $> n$ are {\em Lepage equivalents of canonical source forms}. By the above, if $\sigma$ is a canonical source form of degree $n+k$, $k \geq 1$, and order~$r$, i.e., $\cI(\sigma)$ is $\pi_{2r+1,r}$-projectable, and $p_k\sigma = \sigma = \cI (\sigma)$, then all its local Lepage equivalents take the form
\begin{gather} \label{Lepeqdeg}
\rho = \tht_{\sigma} +d\nu+ \mu = \tht_{\sigma} +p_{k+1}d\nu+\eta,
\end{gather}
where
\begin{gather*}
\theta_{\sigma} = \sigma - p_{k+1} \cR(d\sigma) ,
\end{gather*}
$\nu$ is an arbitrary at least $(k+1)$-contact $(n+ k-1)$-form, and $\mu$ (resp.~$\eta$) is an arbitrary at least $(k+2)$-contact form. Moreover,
\begin{gather} \label{Lepeqdeg2}
p_{k+1}d\theta_{\sigma} = \cI(d\sigma) .
\end{gather}
In particular, if $\varepsilon$ is a dynamical form then $p_{2}d\theta_{\varepsilon}$ is the canonical Helmholtz form~$H_{\varepsilon } = \cI(d\varepsilon)$.

Comparing Lepage forms of dif\/ferent degrees yields the following results:

\begin{Theorem} \label{ThRd}
Let $q \geq n$. If $\alpha$ is a Lepage equivalent of $R_q([\rho])$ then $d\alpha$ is a Lepage equivalent of $R_{q+1}([d\rho])$.
\end{Theorem}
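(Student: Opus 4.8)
The plan is to unwind both sides of the asserted equivalence using the characterisation of Lepage forms via the interior Euler operator (the defining equation \eqref{defLep}) together with the chain rule for the exterior derivative. Write $q = n+k$. By definition, $\alpha$ being a Lepage equivalent of $R_q([\rho])$ means $\alpha$ is a Lepage $(n+k)$-form with $p_k\alpha = R_q([\rho])$, i.e.\ $p_k\alpha = h\rho$ if $q=n$ and $p_k\alpha = \cI(\rho)$ if $q>n$; in either case $p_k\alpha$ is the canonical source form representing the class $[\rho]$, and $p_{k+1}d\alpha = \cI(d\alpha)$. We must show $d\alpha$ is a Lepage $(n+k+1)$-form whose $(k+1)$-contact component is the canonical source form representing $[d\rho]$.

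First I would verify that $d\alpha$ is a Lepage $(n+k+1)$-form. This is immediate: $d(d\alpha) = 0$, and by the remarks following the Definition of Lepage forms of higher degree, every closed form (in particular an exact one) on $J^r\bY$ of degree $\geq n$ is a Lepage form. So the content of the theorem is entirely in identifying the $(k+1)$-contact component of $d\alpha$. By \eqref{defLep} applied to the Lepage form $\alpha$ we have $p_{k+1}d\alpha = \cI(d\alpha)$, so $p_{k+1}d\alpha$ is already the canonical source form of its own class. It therefore suffices to check that $p_{k+1}d\alpha$ lies in the class $[d\rho]$, equivalently $R_{q+1}([d\rho]) = \cI(d\alpha) = p_{k+1}d\alpha$. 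Since $\alpha$ is equivalent to $R_q([\rho])$ and hence to $\rho$ itself — that is, $\pi^*_{s,r}\rho - \alpha \in \Theta^s_{n+k}$ for a proper $s$ (this is exactly the last assertion of Theorem~\ref{thmhC}, or follows from property~(3) of $\cI$ together with the construction of $R_q$) — applying $d$ gives $\pi^*_{s,r}d\rho - d\alpha \in d\Theta^s_{n+k} \subset \Theta^s_{n+k+1}$, because $d\Lambda^s_{n+k,c}$ is by construction part of the generators of $\Theta^s_{n+k+1}$. Hence $d\alpha$ and $\pi^*_{s,r}d\rho$ represent the same variational class, and applying $R_{q+1}$ (i.e.\ $\cI$, or $h$ if $q+1=n$, which cannot happen here since $q\geq n$) to both sides and using $R_{q+1}(d\alpha) = p_{k+1}d\alpha = \cI(d\alpha)$ yields $R_{q+1}([d\rho]) = \cI(d\alpha) = p_{k+1}d\alpha$, as required.

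The one point demanding a little care — and the place I expect to be the main obstacle — is the bookkeeping of jet orders and prolongation pullbacks: $\alpha$ lives on some $J^s\bY$ with $s$ possibly larger than $r$, $d\alpha$ on the same space, $\cI(d\alpha)$ on $J^{2s+1}\bY$, and one must make sure that the class identity $\pi^*_{\bullet,\bullet}d\rho - d\alpha \in \Theta^{\bullet}_{n+k+1}$ is read at a common order where all the relevant pullbacks have been taken, so that the representation operators $R_{q}$, $R_{q+1}$ are applied consistently. This is routine once one fixes the orders carefully using the properties of $\cI$ listed after \eqref{inter} (especially property~(3), $\pi^*_{2r+1,r}\rho - \cI(\rho) \in \Theta^{2r+1}_{n+k}$) and the fact that $\Theta^r_q$ is stable under pullback by $\pi_{r+1,r}$; I would simply remark that all identities hold "up to a jet projection" as elsewhere in the paper and not belabour the indices.
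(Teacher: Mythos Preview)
Your proof is correct, and it takes a somewhat different route from the paper's own argument. Both proofs begin identically: $d\alpha$ is trivially Lepage since it is closed, so the whole content is identifying $p_{k+1}d\alpha$ with $R_{q+1}([d\rho])$. From there the two diverge.

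The paper invokes the explicit structural decomposition of Lepage forms (Theorem~\ref{ThLepdeg}, formula~\eqref{Lepeqdeg}): writing $\alpha = \theta_\sigma + d\nu + \mu$ with $\sigma = R_q([\rho])$ and $\mu$ at least $(k+2)$-contact, it reduces $p_{k+1}d\alpha$ to $p_{k+1}d\theta_\sigma$ and then quotes the key identity~\eqref{Lepeqdeg2}, $p_{k+1}d\theta_\sigma = \cI(d\sigma)$, treating the cases $q=n$ and $q>n$ separately to check $\cI(d\sigma) = R_{q+1}([d\rho])$. You instead argue directly at the level of classes: $\alpha \sim \rho$ (since $\alpha - p_k\alpha$ is strongly contact and $p_k\alpha = R_q([\rho]) \sim \rho$), hence $d\alpha \sim d\rho$ because $d\Theta^s_{n+k} \subset \Theta^s_{n+k+1}$, and therefore $\cI(d\alpha) = \cI(d\rho) = R_{q+1}([d\rho])$ by the class-invariance of $\cI$ (property~(5), $\Ker\cI = \Theta$). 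Your approach is more economical and avoids the case split and the auxiliary formula~\eqref{Lepeqdeg2}; the paper's approach has the advantage of making visible exactly where the higher-degree Cartan form $\theta_\sigma$ carries the information, which is useful for the subsequent construction of the Lepage representation sequence.

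Your caveat about jet orders is apt and handled in the same spirit as the paper: everything is ``up to a canonical projection'', and the inclusion $d\Theta^s_{n+k} \subset \Theta^s_{n+k+1}$ holds at any fixed order since $\Theta^s_{q+1} = \Lambda^s_{q+1,c} + (d\Lambda^s_{q,c})$ by definition.
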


\begin{proof}
Since $\alpha$ is a Lepage form, $d\alpha$ is Lepage, as trivially follows from the def\/inition.
Denote $\sigma = R_q([\rho])$. If $q=n$, $\sigma$ is a Lagrangian, hence $R_n([\rho]) = h\rho$, and $R_{n+1}([d\rho]) = \cI(d\rho) = \cI(dh\rho)$. Thus $\alpha = \theta_{h\rho} + d\nu + \mu$, and~$d\alpha$ is a Lepage equivalent of $p_1d\alpha$. However, by~\eqref{Lepeqdeg2}, $p_{1}d\alpha = p_{1} d\theta_{h\rho} = \cI(dh\rho) =
R_{n+1}([d\rho])$. If $q=n+k$, $k>1$, we have $R_{n+k}([\rho]) = \cI(\rho)$, and $R_{n+k+1}([d\rho]) = \cI(d\rho) = \cI(d\cI(\rho)) = \cI(d\sigma)$. The form $d\alpha$ is a Lepage equivalent of $p_{k+1}d\alpha$, which by~\eqref{Lepeqdeg2} is equal to $p_{k+1} d\theta_{\sigma} = \cI(d\sigma) = R_{n+k+1}([d\rho])$.
\end{proof}

\begin{Theorem}
A closed Lepage equivalent of a canonical source form of degree $n+k$, $k \geq 1$, is locally equal to $d\alpha$ where $\alpha$ is a Lepage equivalent of a canonical source form of degree $n+k-1$ if $k \geq 2$, resp.\ of a Lagrangian if $k=1$.
\end{Theorem}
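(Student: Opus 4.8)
The plan is to carry the argument out inside the variational sequence and then transport it to forms by means of Theorem~\ref{ThRd} and the structure theorems for Lepage equivalents. Let $\sigma$ be a canonical source form of degree $n+k$, $k\ge 1$, and let $\rho$ be a closed Lepage equivalent of $\sigma$, so that (up to a jet prolongation) $p_k\rho=\sigma$ and $d\rho=0$. First I would observe that, by property~(2) of $\cI$ together with $\sigma=\cI(\sigma)$, one has $\cI(\rho)=\cI(p_k\rho)=\cI(\sigma)=\sigma$, whence $R_{n+k}([\rho])=\sigma$. Since $d\rho=0$ forces $\cE_{n+k}([\rho])=[d\rho]=0$, local exactness of the variational sequence (which is an acyclic resolution of $\R_{\bY}$) yields, on a suitable neighbourhood, a class $[\eta]\in\cV^s_{n+k-1}$ with $\cE_{n+k-1}([\eta])=[d\eta]=[\rho]$. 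Set $\tau:=R_{n+k-1}([\eta])$; because $n+k-1\ge n+1$ when $k\ge 2$ and $n+k-1=n$ when $k=1$, the form $\tau$ is a canonical source form of degree $n+k-1$ in the first case and a Lagrangian in the second.

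Next I would use Theorem~\ref{ThLepdeg} to choose, locally, a Lepage equivalent $\alpha_0$ of $\tau$, and then Theorem~\ref{ThRd} (with $q=n+k-1\ge n$) to conclude that $d\alpha_0$ is a Lepage equivalent of $R_{n+k}([d\eta])=R_{n+k}([\rho])=\sigma$. In particular $p_k(d\alpha_0)=\sigma=p_k\rho$, so, after a common jet prolongation, $\beta:=\rho-d\alpha_0$ is a strongly contact $(n+k)$-form ($p_k\beta=0$), and it is closed because $\rho$ and $d\alpha_0$ are. Applying the contact homotopy operator $\cA$ to the closed form $\beta$ gives $\beta=d\cA\beta$ locally, and since $\beta$ is at least $(k+1)$-contact, $\nu:=\cA\beta$ is at least $k$-contact. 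I would then put $\alpha:=\alpha_0+\nu$, so that $d\alpha=d\alpha_0+\beta=\rho$.

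It then remains to check that $\alpha$ is a Lepage equivalent of $\tau$. Since $\alpha-\alpha_0=\nu$ is at least $k$-contact, $p_{k-1}\alpha=p_{k-1}\alpha_0=\tau$; and $\alpha$ is a Lepage $(n+k-1)$-form because $d\alpha=\rho$ satisfies $p_kd\alpha=p_k\rho=\sigma=\cI(\rho)=\cI(d\alpha)$, which is precisely the defining condition for an $(n+k-1)$-form. Hence $\rho=d\alpha$ locally with $\alpha$ a Lepage equivalent of the canonical source form $\tau$ if $k\ge 2$, resp.\ of the Lagrangian $\tau$ if $k=1$, which is the assertion.

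The main point requiring care is the contact-degree and jet-order bookkeeping. One must use that a Lepage equivalent of $\sigma$ has $p_k$-component exactly $\sigma$ in order to know that $\beta=\rho-d\alpha_0$ is strongly contact, and one must verify that the primitive $\cA\beta$ furnished by the contact Poincar\'e lemma is at least $k$-contact, so that adding $\nu$ to $\alpha_0$ disturbs neither $p_{k-1}\alpha_0=\tau$ nor the Lepage property. Likewise, since the Lepage equivalent $\alpha_0$ of $\tau$ (and hence $\alpha$) is generically of higher order than $\rho$, all the equalities above must be read up to canonical jet projections. None of this is deep, but it is exactly where a hasty argument would go wrong.
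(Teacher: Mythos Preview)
Your proof is correct and follows essentially the same strategy as the paper's: build an auxiliary Lepage equivalent $\alpha_0$ of the right lower-degree object, use Theorem~\ref{ThRd} to see that $d\alpha_0$ is a Lepage equivalent of $\sigma$, and then correct by a higher-contact primitive of the closed strongly contact remainder. The only organizational difference is that the paper starts by applying the contact homotopy operator $\cA$ directly to the closed Lepage form $\beta$, obtaining $\alpha=\cA\beta$ with $p_{k-1}\alpha=\cA\sigma$, and only afterwards (for $k\ge 2$) introduces an auxiliary Lepage equivalent $\alpha'$ of $\cI(\cA\sigma)$ to verify the Lepage property; this route has the side benefit of yielding the corollary $\cI(\cA\sigma)=\cA\sigma$, i.e.\ that $\cA$ maps canonical source forms to canonical source forms, which your argument does not extract.
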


\begin{proof}
If $\beta$ is a closed Lepage equivalent of $\sigma = p_k\beta = \cI(\beta)$, then locally $\beta = d\alpha$ where (with help of the contact homotopy operator), $\alpha = \cA\beta$, and $p_{k-1} \alpha = \cA p_k \beta = \cA\sigma$.

If $k=1$ then $p_0\alpha = h\alpha$ is a Lagrangian, and $p_1d\alpha = p_1\beta = \cI(\beta)$, so that
$\alpha= \cA \beta$ is a Lepage equivalent of $h\alpha = \cA \sigma$.

If $k \geq 2$, set $\varepsilon = \cI(\cA\sigma)$. $\varepsilon$ is a canonical source form of degree $n+k-1$
representing the class~$[\cA \sigma]$. Therefore (by the preceding theorem), if $\alpha^\prime$ is a Lepage equivalent of $\cI(\cA \sigma)$ then $d\alpha^\prime$ is a~Lepage equivalent of $\cI (d\varepsilon) = \cI(d \cA \sigma) = \cI(dp_{k-1} \alpha) = \cI(d\alpha) = \cI(\beta) = \sigma$. Now, both~$d\alpha^\prime$ and~$\beta$ are Lepage equivalents of $\sigma$, hence we have $\beta = d\alpha^\prime + d\nu +\mu$, where $\nu$ is at least $(k+1)$-contact and $\mu$ is at least $(k+2)$-contact, and since $d\beta = 0$, we get $d\mu= 0$, i.e., locally $\mu = d \tau$ and $\tau = \cA \mu$. Hence, $\beta = d(\alpha ^\prime + \nu + \tau)$, and $\alpha = \cA \beta= \alpha^\prime + \nu + \tau$ is a Lepage form (dif\/fering from the Lepage $(k-1)$-form $\alpha^\prime$ by an at least $(k+1)$-contact form), and it is a Lepage equivalent of $p_{k-1} \alpha = p_{k-1} \alpha^\prime = \cI(\cA \sigma)$, as required. Moreover, we note that since $p_{k-1} \alpha = \cA p_k \beta = \cA\sigma$, it holds
$\cI(\cA \sigma)=\cA\sigma = \cA (\cI \sigma)$.
\end{proof}

\begin{Corollary}\quad
\begin{enumerate}\itemsep=0pt
\item[$(1)$] For every canonical source form $\sigma$
\begin{gather*}
\cI(\cA \sigma)=\cA\sigma = \cA (\cI \sigma).
\end{gather*}

\item[$(2)$] If $\sigma$ is a canonical source $(n+k)$-form then $\cA \sigma$ is a canonical source $(n+k-1)$-form.
\end{enumerate}
\end{Corollary}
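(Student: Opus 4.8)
The plan is to derive part~(2) from part~(1) and to reduce part~(1) to a single commutation property of the operators $\cI$ and $\cA$. Granting part~(1): if $\cI(\cA\sigma)=\cA\sigma$, then $\cA\sigma$ is an image of $\cI$, hence a source form by property~(1) of $\cI$; since $\cA$ lowers the form degree by one, $\cA\sigma$ has degree $n+k-1$, and the equality $\cA\sigma=\cI(\cA\sigma)$ says exactly that it is canonical, which is part~(2). (For $k=1$ this is degenerate: $\cA\sigma=h\cP\sigma$ is a Lagrangian and $h$ is idempotent, so the statement holds with ``canonical source $n$-form'' read as ``Lagrangian''.) Within part~(1) the equality $\cA\sigma=\cA(\cI\sigma)$ is immediate, since $\sigma=\cI\sigma$ up to a jet projection; so everything comes down to $\cI(\cA\sigma)=\cA\sigma$, i.e.\ to the assertion that $\cA$ maps canonical source forms to canonical source forms.

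To prove $\cI(\cA\sigma)=\cA\sigma$ I would first record two structural facts. First, $\cA\sigma$ is a source form of degree $n+k-1$: it is $(k-1)$-contact because $\cA p_k=p_{k-1}\cP$ and $\sigma=p_k\sigma$, and applying the explicit contact homotopy formula to the $\omega^\sigma$-generated form $\sigma=\omega^{\tau_1}\wedge\cdots\wedge\omega^{\tau_k}\wedge(\text{horizontal }n\text{-form})$ shows that $\cA\sigma$ is again $\omega^\sigma$-generated. Second, $\cA\sigma$ and $\cI(\cA\sigma)$ lie in the same variational class: $\cA$ descends to the variational sequence because $\cA(\Theta^r_q)\subseteq\Theta^r_{q-1}$ (it sends (strongly) contact forms to (strongly) contact forms, and $\cA d\eta=\eta-d\cA\eta$ up to projection), while $\cI$ is constant on classes by property~(5); hence by~\eqref{pir} the difference $\cA\sigma-\cI(\cA\sigma)$ equals the residual term $p_{k-1}dp_{k-1}\cR(\cA\sigma)$, and the claim is that this vanishes. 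Equivalently, since $\sigma=\cI\sigma$, it suffices to establish the commutation $\cI\circ\cA=\cA\circ\cI$ up to jet projections on $\Lambda^r_{n+k}$ for $k\geq 2$ (with $h$ replacing the outer $\cI$ when $k=1$), because then $\cI(\cA\sigma)=\cI(\cA(\cI\sigma))=\cA(\cI(\cI\sigma))=\cA(\cI\sigma)=\cA\sigma$. Both $\cI(\cA\rho)$ and $\cA(\cI\rho)$ are then source forms of degree $n+k-1$ representing the same class; $\cI(\cA\rho)$ is automatically canonical, and if $\cA(\cI\rho)$ is canonical as well the two must coincide by uniqueness of the canonical representative in a class. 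For $k=2$ this needs nothing extra, since every source form of degree $n+1$ is canonical (Corollary following Proposition~\ref{propsource}).

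The hard part will be the remaining range $k\geq 3$. Here a general canonical source form has no closed Lepage equivalent --- one exists precisely when $\cI(d\sigma)=0$, i.e.\ only for variationally trivial $\sigma$ --- so the identity $\cI(\cA\sigma)=\cA\sigma$ cannot be read off directly from the proof of the preceding theorem, which assumed such an equivalent. I would close this gap either by a direct computation with the explicit formula~\eqref{inter} together with the structure of the contact homotopy operator, or by a careful bookkeeping of contact degrees applied to the homotopy identity $\sigma=p_k\bigl(\cA(d\theta_\sigma)+d(\cA\theta_\sigma)\bigr)$ for the Cartan-type Lepage equivalent $\theta_\sigma=\sigma-p_{k+1}\cR(d\sigma)$, using that $p_k d\sigma=0$ for an $\omega^\sigma$-generated $\sigma$ and that $d_H$ vanishes on forms of top horizontal degree. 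Everything else --- part~(2), the cases $k\leq 2$, the trivial equality $\cA\sigma=\cA(\cI\sigma)$, and the descent of $\cA$ to the variational sequence --- is routine.
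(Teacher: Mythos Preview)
The paper offers no separate proof: the Corollary is extracted verbatim from the final sentence of the preceding theorem's proof, where the chain $\cI(\cA\sigma)=\cA\sigma=\cA(\cI\sigma)$ is observed once one knows that $\alpha=\cA\beta$ is a Lepage equivalent of $\cI(\cA\sigma)$ with $p_{k-1}\alpha=\cA\sigma$. That argument, however, is run under the standing hypothesis of the theorem --- $\beta$ is a \emph{closed} Lepage equivalent of $\sigma$ --- so it literally establishes the identities only for variationally trivial canonical source forms. Part~(2) then follows at once, exactly as you say.

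Your reading of the situation is therefore sharper than the paper's. The scope issue you raise for $k\geq 3$ is real: the paper's own route does not reach the general canonical source form either, and the word ``every'' in the Corollary is not supported by the preceding proof. Your alternative strategy --- showing that $\cA$ descends to classes (via $\cA(\Theta^r_q)\subseteq\Theta^r_{q-1}$) and then invoking uniqueness of the canonical representative --- is more structural than the paper's extraction and handles $k\leq 2$ cleanly. But, as you yourself flag, the decisive step for $k\geq 3$ (that $\cA\sigma$ is already canonical, equivalently that $p_{k-1}dp_{k-1}\cR(\cA\sigma)=0$) remains an unfinished computation in your proposal as well. So your approach is not wrong; it agrees with the paper where the paper actually argues, and it honestly isolates a gap that the paper glosses over rather than resolves.
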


\begin{Corollary}\quad
\begin{enumerate}\itemsep=0pt
\item[$(1)$] If $\beta$ is a closed Lepage equivalent of a dynamical form $\varepsilon$ then~$\cA\beta$ is a Lepage equivalent of the Lagrangian $\lambda = \cA \varepsilon$.

\item[$(2)$] If $\beta$ is a closed Lepage equivalent of a canonical source form $\sigma$ then
$\cA\beta$ is a Lepage equivalent of the canonical source form~$\cA\sigma$.
\end{enumerate}
\end{Corollary}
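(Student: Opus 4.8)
Both assertions are specializations of the preceding theorem, so the plan is to invoke that theorem and unwind the bookkeeping rather than to redo the homotopy-operator computation. First I would note that part~(1) is exactly the case $k=1$: a dynamical form $\varepsilon$ is a $\omega^\sigma$-generated $1$-contact $(n+1)$-form, that is, a source form of degree $n+1$, and by the Corollary to Proposition~\ref{propsource} every such form is automatically canonical, so $\varepsilon=\cI(\varepsilon)$. If $\beta$ is a closed Lepage equivalent of $\varepsilon$ then $p_1\beta=\cI(\beta)=\cI(\varepsilon)=\varepsilon$, so $\beta$ fits the hypotheses of the preceding theorem with $\sigma=\varepsilon$, $k=1$. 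The proof of that theorem exhibits the local primitive of $\beta$ as $\alpha=\cA\beta$, with $h\alpha=\cA\varepsilon$ and $p_1 d\alpha=p_1\beta=\varepsilon$; hence $\cA\beta$ is a Lepage equivalent of the Lagrangian $\lambda=h\alpha=\cA\varepsilon$. This is~(1).

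For~(2) I would split off the case $k=1$, which is just~(1) (with $\cA\sigma=\cA\varepsilon$ a Lagrangian, i.e.\ the degree-$n$ endpoint). For $k\geq 2$, apply the preceding theorem again: its proof produces a local primitive $\alpha$ of $\beta$ which, after the rewriting $\alpha=\alpha'+\nu+\tau$ carried out there, equals $\cA\beta$ and is a Lepage equivalent of $p_{k-1}\alpha=\cI(\cA\sigma)$. It then remains only to identify $\cI(\cA\sigma)$ with $\cA\sigma$; but this, together with the statement that $\cA\sigma$ is a canonical source $(n+k-1)$-form, is precisely the content of parts~(1) and~(2) of the preceding corollary. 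Hence $\cA\beta$ is a Lepage equivalent of the canonical source form $\cA\sigma$, which is~(2).

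Since everything is assembled from results already established, I do not anticipate a genuine obstacle; the statement is essentially a repackaging of the previous theorem and corollary. The only points needing a little care are: recording that a dynamical form is automatically a canonical source form, so that the degree $n+1$ case fits the template of the preceding theorem; checking that the local primitive produced in that proof is indeed $\cA\beta$ and that $p_{k-1}$ of it is $\cA\sigma$; and keeping in mind that $\cA$ is a local operator, so~(1) and~(2) are assertions about the domains on which the contact homotopy operator is defined, with globality of $\beta$ playing no role in the conclusion. No computation in coordinates is required.
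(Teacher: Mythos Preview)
Your proposal is correct and matches the paper's intended reasoning: the corollary is stated without proof precisely because it is a direct repackaging of the preceding theorem (whose proof already exhibits $\alpha=\cA\beta$ as a Lepage equivalent of $h\alpha=\cA\sigma$ for $k=1$ and of $\cI(\cA\sigma)$ for $k\geq 2$) together with the identification $\cI(\cA\sigma)=\cA\sigma$ from the first corollary. Your handling of the $k=1$ case via the canonicity of dynamical forms, and your remark that $\cA$ is local, are exactly the right bookkeeping.
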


The above properties of Lepage forms suggest a new representation of the variational sequence, based on {\em Lepage equivalents of canonical source forms}, as follows: For every $k \geq 0$ we have a mapping $\Lep_{n+k}$ assigning to every Lagrangian $\lambda \in \Lambda^r_{n,\bX}$, if $k=0$, and every canonical source form $\sigma \in \cI(\Lambda^r_{n+k})$ if $k\geq 1$, {\em the family} $\{\rho\}$ of its Lepage equivalents~\eqref{Lepeqdeg}.

With help of Lepage mappings Theorem~\ref{ThRd} can be reformulated as follows:

\medskip

\noindent
{\bf Theorem~\ref{ThRd}$'$.}
{\em If $\Lep_q(R_{q}([\rho])) = \{\alpha\}$ then $\Lep_{q+1}(R_{q+1}([d\rho])) = \{d\alpha\}$, or}
\begin{gather*}
\{d \Lep_q(R_{q}([\rho]))\} = \{\Lep_{q+1}(R_{q+1}([d\rho]))\} .
\end{gather*}

The Lepage maps then induce the following representation mappings:
\begin{gather*}
\tilde R_q= R_q \qquad \text{for} \quad 0 \leq q \leq n-1 \qquad \text{and} \qquad
\tilde R_q = \Lep_q \circ R_q \qquad \text{for} \quad q \geq n.
\end{gather*}

Note that for $q >n$ the images by $\tilde R_q$ of classes in $\cV^r_q$ are {\em subsheaves} of $\Lambda^r_q$.

With help of the representation mappings $\tilde R_q$ we construct to the variational sequence $0 \to \R_{\bY} \to \cV^r_{*}$ a representation sequence
\begin{gather*}
0 \to \R_{\bY} \to \tilde R_0(\cV^{r}_0) \to \tilde R_1(\cV^{r}_1) \to \tilde R_2(\cV^{r}_2) \to \cdots
\end{gather*}
shortly denoted by $0\to \R_{\bY} \to \tilde R_{*}(\cV^{r}_{*})$, and called the {\em Lepage representation} of the variational sequence $0 \to \R_{\bY} \to \cV^r_{*}$.

\begin{Theorem}
The representation sequence $0\to \R_{\bY} \to \tilde R_{*}(\cV^{r}_{*})$ is exact, and every morphism $ \tilde R_q(\cV^{r}_q) \to \tilde R_{q+1}(\cV^{r}_{q+1})$ for $q \geq n$, is the exterior derivative operator~$d$ $($acting on classes of Lepage forms$)$.
\end{Theorem}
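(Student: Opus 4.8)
The plan is to reduce the statement to the already established exactness of the variational sequence $0\to\R_\bY\to\cV^r_*$ (Krupka's acyclicity theorem, Section~\ref{section2}) by showing that the representation maps $\tilde R_q$ are sheaf isomorphisms onto their images, so that $\tilde R_*$ becomes an isomorphism of cochain complexes from $0\to\R_\bY\to\cV^r_*$ onto $0\to\R_\bY\to\tilde R_*(\cV^r_*)$; the identification of the morphisms with $d$ in degrees $q\ge n$ will then be read off from Theorem~\ref{ThRd}$'$. The first step is to check that each $\tilde R_q$ is bijective onto $\tilde R_q(\cV^r_q)$. For $0\le q\le n-1$ this is the horizontalization $R_q$, which is an isomorphism by the Takens representation theorem. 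For $q\ge n$ the operator $R_q$ identifies $\cV^r_q$ with the sheaf of canonical source forms (Lagrangians when $q=n$), and $\Lep_q$ sends such a $\sigma$ to the family of all its Lepage equivalents; since by definition every member $\rho$ of that family satisfies $p_k\rho=\sigma$ with $k=q-n$, distinct $\sigma$'s give disjoint families, so $\Lep_q$ is injective and $\tilde R_q=\Lep_q\circ R_q$ is a bijection onto the sheaf of Lepage classes. As throughout the paper, all these identifications are understood modulo the canonical jet projections $\pi^*_{s,r}$, since Lepage equivalents of an order-$r$ object live on a higher jet order.

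Next I would define the morphisms of the representation sequence by commutativity, $\tilde E_q:=\tilde R_{q+1}\circ\cE_q\circ\tilde R_q^{-1}$, which is legitimate precisely because $\tilde R_q$ is bijective; this is the content of the diagram analogous to~\eqref{diagram1}, and with these morphisms $\tilde R_*$ is by construction an isomorphism of cochain complexes commuting with the inclusions of $\R_\bY$. Exactness of $0\to\R_\bY\to\tilde R_*(\cV^r_*)$ is then immediate from exactness (local exactness away from $\R_\bY$) of $0\to\R_\bY\to\cV^r_*$; alternatively one argues term by term exactly as in the proof of the Takens representation theorem, using injectivity of the $\tilde R_q$ together with $\cE_{q+1}\circ\cE_q=0$ and exactness of $\cV^r_*$. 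To identify $\tilde E_q$ with the exterior derivative for $q\ge n$, take $[\rho]\in\cV^r_q$ and write $\tilde R_q([\rho])=\Lep_q(R_q([\rho]))=\{\alpha\}$. Then
\begin{gather*}
\tilde E_q(\{\alpha\})=\tilde R_{q+1}(\cE_q([\rho]))=\tilde R_{q+1}([d\rho])=\Lep_{q+1}(R_{q+1}([d\rho]))=\{d\alpha\},
\end{gather*}
the last equality being Theorem~\ref{ThRd}$'$; hence $\tilde E_q$ is the operator ``apply $d$'' on classes of Lepage forms, and in particular $d$ is well defined on those classes.

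The main obstacle is not any computation but the bookkeeping that turns the informal ``family of Lepage equivalents'' into a genuine sheaf on which $\tilde R_q$ ($q>n$) is an isomorphism: one must check that these families (equivalently, by Theorem~\ref{ThLepdeg}, Lepage forms modulo $d\nu+\mu$ with $\nu$ at least $(k+1)$-contact and $\mu$ at least $(k+2)$-contact) patch into a soft sheaf, that $\Lep_q$ descends to a well-defined injective morphism out of the sheaf of canonical source forms, and that all jet orders match under the projections $\pi^*_{s,r}$. Once this is in place --- and it is essentially guaranteed by the structure result Theorem~\ref{thmhC} (which exhibits $p_k\rho$ as recovering the canonical source form) together with Theorem~\ref{ThRd}$'$ --- exactness is formal and the identification of the higher morphisms with $d$ is immediate.
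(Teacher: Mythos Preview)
Your proof is correct and follows essentially the same strategy as the paper's: both establish that the morphisms $\tilde E_q$ for $q\ge n$ coincide with the exterior derivative on Lepage classes via Theorem~\ref{ThRd}$'$, and both rely (implicitly in the paper's case) on the fact that the $\tilde R_q$ are bijections onto their images so that exactness transfers from the variational sequence. The paper organizes the argument slightly differently: rather than invoking exactness of $\cV^r_*$ directly across all degrees, it appeals to the already-proved exactness of the Takens representation for degrees $q\le n-1$ (where $\tilde R_q=R_q$) and then only verifies explicitly the crossover identity $d\circ E_{n-1}=0$ at the transition from horizontal forms to Lepage classes. Your isomorphism-of-complexes argument is arguably cleaner and handles all degrees uniformly, while the paper's route has the minor advantage of reusing the Takens result verbatim.
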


\begin{proof}
Denote
\begin{gather*}
\tilde E_q\colon \ \tilde R_q(\cV^r_q) \to \tilde R_{q+1}(\cV^r_{q+1}), \qquad q \geq n .
\end{gather*}
With the representation mapping $\tilde R_q$, Theorem~\ref{ThRd} states that
 `if $\tilde R_{q}([\rho]) = \{\alpha\}$ then $\tilde R_{q+1}([d\rho])$ $= \{d\alpha\}$', in other words, the exterior derivative operator $d$ extends in an obvious way to classes of Lepage forms; denoting the exterior derivative of classes by the same symbol~$d$, we can see that
 \begin{gather*}
 \tilde E_q(\{\alpha\}) = \{d\alpha\} \equiv d\{\alpha\}.
 \end{gather*}
 This proves that $\tilde E_q = d$, $q \geq n$.

In view of this result, and since $\tilde R_q = R_q$ for $0 \leq q \leq n-1$, and the Takens representation sequence is exact, it remains to check the exactness in the arrow~$E_{n-1}$. To this end, let us write the variational sequence and its Lepage representation in form of the following diagram:
\begin{gather*}
\begin{CD}
\cdots @>\mathcal{E}_{n-2}>>\cV_{n-1}^r @>\mathcal{E}_{n-1}>>\cV_{n}^r @>\mathcal{E}_{n}>>\cV_{n+1}^r @>\mathcal{E}_{n+1}>> \cdots
\\
@. @V{R_{n-1}} VV@V{\tilde R_n} VV@V{\tilde R_{n+1}} VV
\\
\cdots @>E_{n-2} >> \Lambda_{n-1,\bX}^r@>E_{n-1} >> \tilde R_{n}(\cV^r_{n})@>d >> \tilde R_{n+1}(\cV_{n+1}^r)@> d >> \cdots
\end{CD}
\end{gather*}
Then for every $\eta \in \tilde R_{n-1}(\cV_{n-1}^r) = \Lambda^r_{n-1,\bX}$ we obtain
\begin{gather*}
(d \circ E_{n-1})(\eta) = (d \circ E_{n-1})(R_{n-1}([\rho])) = (d \circ E_{n-1} \circ R_{n-1})([\rho])
\\
\hphantom{(d \circ E_{n-1})(\eta)}{}
 = (d \circ \tilde R_n \circ \cE_{n-1})([\rho]) = d (\tilde R_n([d\rho])) = d \Lep_{n}(R_{n}([d\rho])\\
\hphantom{(d \circ E_{n-1})(\eta)}{}
= \Lep_{n+1}(R_{n+1}([dd\rho]) = 0 .\tag*{\qed}
\end{gather*}
\renewcommand{\qed}{}
\end{proof}

It is worth to write all the three sequences within one scheme as follows:
\begin{gather*}
\begin{CD}
\cdots @>\mathcal{E}_{n-1}>> \Lambda_n^r / \Theta_n^r @>\mathcal{E}_n>>
\Lambda_{n+1}^r / \Theta_{n+1}^r
@>\mathcal{E}_{n+1}>>
\Lambda_{n+2}^r / \Theta_{n+2}^r @>\mathcal{E}_{n+2}>> \cdots
\\
@. @V{R_n} VV @V{R_{n+1}} VV @V{R_{n+2}} VV
\\
\cdots @>E_{n-1} >> \Lambda_{n,\bX}^{r+1} @> E_n >> \Lambda_{n+1,1,\bY}^{r+1} @> E_{n+1} >>\cI\Lambda_{n+2}^{r+1}
@>E_{n+2} >> \cdots
\\
@. @V{\Lep_n} VV @V{\Lep_{n+1}} VV @V{\Lep_{n+2}} VV
\\
\cdots @>E_{n-1} >> \Lep(\Lambda_{n,\bX}^{r+1}) @> d >> \Lep(\Lambda_{n+1,1,\bY}^{r+1}) @> d >>\Lep(\cI\Lambda_{n+2}^{r+1} )
@>d >> \cdots
\end{CD}
\end{gather*}

As we have seen, for $n = \dim X = 1$ (mechanics), Lepage equivalent is {\em unique} for every $k \geq 0$ (recall that this is the Cartan form of degree $n+k$). This means that for every $k$,
\begin{gather*}
\Lep_{n+k} \colon \ \cI(\Lambda^*_{n+k}) \to \Lambda^*_{n+k} ,
\end{gather*}
i.e., $\Lep_{n+k}(\cI \rho)$ is an element of the sheaf $\Lambda^*_{n+k}$ (rather than a family of elements of the sheaf), and $d$ is the ``true'' exterior derivative of dif\/ferential forms. Hence ``morally'' (up to orders of the sheaves) the Lepage representation sequence becomes a {\em subsequence of the de Rham sequence}.

Finally we want to stress that in view of the two corollaries of Theorem~\ref{ThRd} the contact homotopy operator~$\cA$ restricts to canonical source forms. This means that one can obtain primitives (inverse images) not only of closed Lepage forms in the Lepage representation, but also of canonical source forms in the Takens representation. For a primitive of a variational dynamical form $\varepsilon$ (i.e., a Lagrangian) one has $\lambda = \cA \varepsilon$ which in coordinates is the celebrated Tonti Lagrangian~\cite{Ton69-I,Ton69-II}. An analogous formula holds then, indeed, also for forms of higher degrees: if $\sigma = E_{n+k}(\eta)$ then $\cA \sigma = \eta$.

\begin{Example}
Let us write down explicit examples of Cartan forms of higher degrees in mechanics ($n =1$). Denote local f\/ibred coordinates on $\bY$ by $(t,q^\sigma)$ and the associated coordinates on $J^3\bY$ by $(t,q^\sigma,\dot q^\sigma, \ddot q^\sigma, \dddot q^\sigma)$, and let
\begin{gather*}
\omega^\sigma = dq^\sigma - \dot q^\sigma dt,
\qquad
\dot \omega^\sigma = \omega^\sigma_1 = d \dot q^\sigma - \ddot q^\sigma dt,
\qquad
\ddot \omega^\sigma = \omega^\sigma_2 = d \ddot q^\sigma - \dddot q^\sigma dt.
\end{gather*}
By Theorem \ref{thmhC}, formula (\ref{Cartandeg}) we have:

$\bullet$ For $\rho$ of degree $n=1$ and such that $h\rho = \lambda = L dt$ is of order $r = 1$: $\cR(d\lambda) = - \frac{\partial L}{\partial \dot q^\sigma} \omega^\sigma$, hence we obtain
\begin{gather*}
\theta_{\lambda} = \lambda + \frac{\partial L}{\partial \dot q^\sigma} \omega^\sigma ,
\end{gather*}
which is the classical Cartan $1$-form.

$\bullet$ For $\rho$ of degree $n+1 = 2$ such that $p_1\rho$ is a second-order dynamical form, i.e., $p_1 \rho = \varepsilon = E_\sigma \omega^\sigma \land dt$, the Cartan $2$-form is $\theta_\varepsilon = \varepsilon - \cR (d\varepsilon)$. To compute $\cR(d\varepsilon)$ we need the following formula for the residual operator applied to a~second-order $2$-contact $3$-form: denote
\begin{gather*}
\alpha = \sum_{i,j = 0}^2 A_{\sigma \nu}^{ij} \omega^\sigma_i \land \omega^\nu_j \land dt, \qquad A_{\sigma \nu}^{ij} = - A_{\nu \sigma}^{ji}
\end{gather*}
Then
\begin{gather*}
 \cR(\alpha) = \frac{1}{2} \left(A_{\sigma \nu}^{1j} - A_{\nu \sigma}^{j1} - \frac{d}{dt} \big(A_{\sigma \nu}^{2j} - A_{\nu \sigma}^{j2}\big) \right) \omega^\sigma \land \omega^\nu_j
\\
\hphantom{\cR(\alpha) =}{} - \frac{1}{2} \big(A_{\sigma \nu}^{2j} - A_{\nu \sigma}^{j2} \big) \omega^\sigma \land \omega^\nu_{j+1} + \frac{1}{2} \big(A_{\sigma \nu}^{2j} - A_{\nu \sigma}^{j2} \big) \omega^\sigma_1 \land \omega^\nu_j
\end{gather*}
(summation over $j = 0,1,2$). Taking $\alpha = d \varepsilon$ we obtain
\begin{gather*}
\theta_\varepsilon = \varepsilon + \frac{1}{2} \Bigl( \frac{\partial E_\sigma}{\partial \dot q^\nu}
- \frac{d}{dt} \frac{\partial E_\sigma}{\partial \ddot q^\nu} \Bigr) \omega^\sigma \land \omega^\nu -
 \frac{\partial E_\sigma}{\partial \ddot q^\nu} \dot \omega^\sigma \land \omega^\nu.
\end{gather*}
For a locally variational form $\varepsilon$ this is the Lepage equivalent of $\varepsilon$, introduced in \cite{olga86} (see also~\cite{olga97}).

$\bullet$ For $\rho$ of degree $n+2 = 3$ such that $p_2 \rho$ is a Helmholtz-like form denote $p_2 \rho = \eta = H_{\sigma \nu}^0 \omega^\sigma \land \omega^\nu \land dt + H_{\sigma \nu}^1 \omega^\sigma \land \dot \omega^\nu \land dt +
H_{\sigma \nu}^2 \omega^\sigma \land \ddot\omega^\nu \land dt$;
the Cartan $3$-form is $\theta_\eta = \eta - \cR (d\eta)$. The formula for the residual operator applied to a general second-order $3$-contact $4$-form
\begin{gather*}
\alpha = \sum_{i,j,k = 0}^2 A_{\sigma \nu \rho}^{ijk} \omega^\sigma_i \land \omega^\nu_j \land \omega^\rho_k \land dt, \qquad A_{\sigma \nu \rho}^{ijk} = - A_{\nu \sigma \rho}^{jik} = A_{\nu \rho \sigma}^{jki} ,
\end{gather*}
reads
\begin{gather*}
 \cR(\alpha) = - \left(\big\{A_{\sigma \nu \rho }^{1jk} \big\} - \frac{d}{dt} \big\{A_{\sigma \nu \rho}^{2jk} \big\} \right) \omega^\sigma \land \omega^\nu_j \land \omega^\rho_k
+ 2 \big\{A_{\sigma \nu \rho}^{2jk} \big\} \omega^\sigma \land \omega^\nu_{j} \land \omega^\rho_{k+1} \\
\hphantom{\cR(\alpha) =}{} -\big\{A_{\sigma \nu \rho}^{2jk} \big\} \omega^\sigma_1 \land \omega^\nu_{j} \land \omega^\rho_{k} ,
\end{gather*}
where $\{A_{\sigma \nu \rho}^{ijk} \} = \frac{1}{3} (A_{\sigma \nu \rho}^{ijk} + A_{\rho \sigma \nu}^{kij} + A_{\nu \rho \sigma}^{jki} )$, and summation over $j,k = 0,1,2$ applies. Now we take $\alpha = d \eta$, so that
\begin{gather*}
\big\{A_{\sigma \nu \rho}^{000}\big\} = \left\{\frac{\partial H^0_{\sigma \nu}}{\partial q^\rho} \right\}, \qquad
\big\{A_{\sigma \nu \rho}^{001}\big\} = \left\{\frac{\partial H^0_{\sigma \nu}}{\partial \dot q^\rho} + \frac{\partial H^1_{\nu \rho}}{\partial q^\sigma} \right\},
\qquad
\big\{A_{\sigma \nu \rho}^{011}\big\} = \left\{\frac{\partial H^1_{\sigma \nu}}{\partial \dot q^\rho} \right\}, \\
\big\{A_{\sigma \nu \rho}^{002}\big\} = \left\{\frac{\partial H^0_{\sigma \nu}}{\partial \ddot q^\rho} + \frac{\partial H^2_{\nu \rho}}{\partial q^\sigma} \right\},
\qquad\!
\big\{A_{\sigma \nu \rho}^{012}\big\} = \left\{\frac{\partial H^1_{\sigma \nu}}{\partial \ddot q^\rho} - \frac{\partial H^2_{\sigma \rho}}{\partial \dot q^\nu} \right\}, \qquad\!
\big\{A_{\sigma \nu \rho}^{022}\big\} = \left\{\frac{\partial H^2_{\sigma \nu}}{\partial \ddot q^\rho} \right\},
\end{gather*}
and $\{A_{\sigma \nu \rho}^{ijk}\} = 0$ for $i,j,k \ne 0$,
and obtain the Cartan $3$-form of $\eta$ as follows:
\begin{gather*}
 \theta_\eta = \eta - \cR(d\eta) = \eta
+ \left(\big\{A_{\sigma \nu \rho}^{100} \big\} - \frac{d}{dt} \big\{A_{\sigma \nu \rho}^{200} \big\} \right) \omega^\sigma \land \omega^\nu \land \omega^\rho
\\
\hphantom{\theta_\eta =}{}
 + 2 \left( \big\{A_{\sigma \nu \rho}^{101} \big\} - \big\{A^{200}_{\sigma \nu\rho}\big\} + \frac{1}{2}\big\{A^{002}_{\sigma\nu \rho}\big\}
 - \frac{d}{dt} \big\{A_{\sigma \nu \rho}^{201}\big\} \right) \omega^\sigma \land \omega^\nu \land \dot\omega^\rho
\\
\hphantom{\theta_\eta =}{}
 + 2 \left(\big\{A_{\sigma \nu \rho}^{102} \big\} - \big\{A^{201}_{\sigma \nu\rho}\big\}
- \frac{d}{dt} \big\{A_{\sigma \nu \rho}^{202} \big\} \right) \omega^\sigma \land \omega^\nu \land \ddot\omega^\rho
- 2 \big\{A^{202}_{\sigma \nu\rho}\big\} \omega^\sigma \land \omega^\nu \land \dddot \omega^\rho
\\
\hphantom{\theta_\eta =}{}
+ 2 \left( \big\{A_{\sigma \nu \rho}^{021} \big\} - \big\{A_{\sigma \nu \rho}^{210} \big\} \right) \omega^\sigma \land \dot \omega^\nu \land \dot\omega^\rho
+ 2 \left( \big\{A_{\sigma \nu \rho}^{022} \big\} - \big\{A_{\sigma \nu \rho}^{202} \big\} \right) \omega^\sigma \land \dot \omega^\nu \land \ddot \omega^\rho
\end{gather*}
with the non-zero $A$'s as above
(cf.~\cite{KrMa10} where a corresponding formula for the variationally trivial case was obtained).
\end{Example}

\subsection{The variational order} \label{varor}

A key issue of the Krupka variational sequence for the variational calculus, making the main dif\/ference between his and other approaches ($C$-spectral sequences, the variational bicomplex and inf\/inite order variational sequences) is that the variational sequence~(\ref{vseq}) {\em fixes} (and thus {\em defines}) {\em the order} of variational problems in the sense of the following def\/inition:

\begin{Definition}
Let $\eta \in \Lambda^s_q$ be a source form, or a Lepage form of degree $q \geq 1$ def\/ined on (an open subset of) $J^s\bY$. We say that $\eta$ has the {\em variational order $r$} if it comes from the variational sequence of order~$r$ (i.e., it is a Takens or Lepage representation of a variational class $[\rho] \in \Lambda^r_q/\Theta^r_q$.
\end{Definition}

How to understand this concept? A variational class $[\rho] \in \Lambda^r_q/\Theta^r_q$ concerns local dif\/ferential forms of degree $q$ and order $r$. Takens representation (and thus also Lepage representation) is based on the horizontalization operator $h = p_0$ and the interior Euler operator ${\cal I}$ which provide a dif\/ferential form of generally {\em higher order}. For example, if $r =1$ we have (for $q = n$) the Lagrangian
$\lambda = R_n([\rho]) = h\rho$, (for $q= n+1$) the dynamical form $\varepsilon =
R_{n+1}([\rho]) = {\cal I}(\rho)$, (for $q= n+2$) the Helmholtz-like form $H = R_{n+2}([\rho]) = {\cal I}(\rho)$, etc., all of order $2$. As given by the structure of ${\cal I}$, generically, the components of the dif\/ferential $(n+k)$-forms ($k \geq 1$) arising from $\Lambda^r_{n+k}/ \Theta^r_{n+k}$ are {\em polynomials in the derivatives starting from the order $r+1$} up to the order $2r$ in the linear term. Notice that such a polynomial behaviour is known to be typical for functions known as ``variational derivatives''. However the point is that the ``variational derivatives'' are not just general polynomials: the analysis of ${\cal I}$ shows that the coef\/f\/icients of these polynomials have a determined structure of {\em derivatives} of the components of $\rho$ (which are of order $r$), with prescribed symmetrization/skew-symmetrization rules in the indices, in other words they have to satisfy certain symmetry and integrability conditions (``order reducibility conditions''). For instance, a dynamical form $\varepsilon$ of order $2$ (which def\/ines a system of {\em second order differential equations})\footnote{In the Takens representation $\varepsilon$ is an element of the sheaf of dif\/ferential forms
$\Lambda^2_{n+1}$.} can be of {\em variational order $1$ or $2$}.
In particular, if $\varepsilon$ is {\em locally variational} (comes locally from a Lagrangian, i.e., its image by the Helmholtz morphism in the Takens representation sequence vanishes) we obtain that if the variational order of $\varepsilon$ is~$1$, we have local Lagrangians arising as $\lambda = h\rho$ from f\/irst order $\rho$; thus
$\lambda$ is of order $2$, or of order $1$ if $h\rho$ is $\pi_{2,1}$-projectable. If the variational order of
$\varepsilon$ is nontrivially $2$, we have local $3$rd order Lagrangians arising as $\lambda = h\rho$ from second order $\rho$, or second order Lagrangians if $h\rho$ is $\pi_{3,2}$-projectable (which, indeed, {\em cannot be reduced} to the f\/irst order by extracting a variationally trivial Lagrangian). The same order discussion concerns {\em global Lagrangians} if $H^n_{\rm dR}(\bY) = \{0\}$.

\begin{Example}
Lagrangians, dynamical forms and Helmholtz-like forms of {\em variational order one}:

$\bullet$ Let $n = \dim \bX = 1$. Any $\rho \in \Lambda^1_1/ \Theta^1_1$ takes the form
\begin{gather*}
\rho = A dt + F_\sigma \omega^\sigma + C_\sigma d\dot q^\sigma \sim
A dt + C_\sigma d\dot q^\sigma ,
\end{gather*}
so that
\begin{gather*}
\lambda = h\rho = L dt, \qquad \text{where} \quad L = A + C_\sigma \ddot q^\sigma.
\end{gather*}
We can see that a generic Lagrangian of variational order one in mechanics is of {\em second order, affine in the second derivatives}. Then, in particular, reducible second-order Lagrangians (and, of course, f\/irst-order Lagrangians) correspond to those classes $[\rho]$ which are generated by {\em $\pi_{1,0}$-horizontal forms}.

If $n = \dim \bX > 1$, we denote $\omega_{i_1} = \der/\der x^{i_1} \rfloor \omega_0$,
$\omega_{i_1 i_2} = \der/\der x^{i_2} \rfloor \omega_{i_1}$, etc.; we have
\begin{gather*}
\rho \sim A \omega_0
+ \sum_{k = 1}^n C_{\sigma_1 \dots \sigma_k}^{j_1 \dots j_k,i_1 \dots i_k} dy^{\sigma_1}_{j_1} \land \dots \land dy^{\sigma_k}_{j_k} \land \omega_{i_1 \dots i_k} ,
\end{gather*}
where the coef\/f\/icients $C_{\sigma_1 \dots \sigma_k}^{j_1 \dots j_k,i_1 \dots i_k}$ are totally antisymmetric in the indices $i_1 \dots i_k$, and totally symmetric in the pairs $(\sigma_1,j_1) \dots (\sigma_k,j_k)$. Now,
$\lambda = h \rho = L \omega_0$, where
\begin{gather*}
L = A + \bar C_{\sigma_1}^{j_1,i_1} y^{\sigma_1}_{j_1i_1} + \bar C_{\sigma_1 \sigma_2}^{j_1 j_2,i_1 i_2} y^{\sigma_1}_{j_1i_1} y^{\sigma_2}_{j_2i_2} +
\cdots + \bar C_{\sigma_1 \dots \sigma_n}^{j_1 \dots j_n,i_1 \dots i_n} y^{\sigma_1}_{j_1 i_1} \cdots y^{\sigma_n}_{j_n i_n}
\end{gather*}
(here the $\bar C$'s are just corresponding multipliers of the~$C$'s).
Thus, in f\/ield theory, a generic Lagrangian of variational order one is of second order, polynomial of degree $n$ in the second derivatives, with coef\/f\/icients obeying the symmetry-antisymmetry relations described above. In particular, reducible (and f\/irst-order) Lagrangians correspond to generating forms $\rho$ that are $\pi_{1,0}$-horizontal. The most famous Lagrangian of this kind is the scalar curvature, giving rise to the Einstein equations.

$\bullet$ Let again $n = \dim \bX = 1$. Any $\rho \in \Lambda^1_2/ \Theta^1_2$ takes the form
\begin{gather*}
\rho \sim A^1_\sigma \omega^\sigma \land dt + A^2_\sigma d \omega^\sigma +
 B^1_{\sigma \nu} \omega^\sigma \land d\dot q^\nu + C_{\sigma \nu} d \dot q^\sigma \land d \dot q^\nu
 \\
\hphantom{\rho}{} \sim \omega^\sigma \land (A_\sigma dt + B_{\sigma \nu} d\dot q^\nu)
 + C_{\sigma \nu} d \dot q^\sigma \land d \dot q^\nu,
\end{gather*}
where $C_{\sigma \nu} = - C_{\nu \sigma}$. In Takens representation we get the dynamical form $\cI (\rho) = \varepsilon = E_\sigma \omega^\sigma \land dt$,
where
\begin{gather*}
E_\sigma = A_\sigma + B_{\sigma \nu} \ddot q^\nu - \frac{d}{dt} (2C_{\sigma \nu}\ddot q^\nu )
= A_\sigma + \bar B_{\sigma \nu} \ddot q^\nu - 2 \frac{\partial C_{\sigma \nu}}{\partial \dot q^\rho} \ddot q^\nu \ddot q^\rho - 2 C_{\sigma \nu} \dddot q^\nu.
\end{gather*}
Thus a generic dynamical form of variational order one in mechanics is of order {\em three, affine in the third derivatives, and polynomial of degree two in the second derivatives $($with a special form of coefficients as above$)$}. {\em Second}-order dynamical forms of variational order one correspond to generating forms $\rho$ such that $C_{\sigma \nu} = 0$, meaning that the corresponding variational class is represented by a {\em $\omega^\sigma$-generated form} $\rho$ and the dif\/ferential equations are {\em affine in second derivatives}:
\begin{gather*}
\rho \sim A_\sigma \omega^\sigma \land dt + B_{\sigma \nu} \omega^\sigma \land d\dot q^\nu,
\qquad \text{i.e.,} \qquad
E_\sigma = A_\sigma + B_{\sigma \nu} \ddot q^\nu.
\end{gather*}
In particular, if variational ($[d \rho] = 0$, i.e., $[\rho] = [d \lambda]$), they come from second-order Lagrangians of the form
\begin{gather*}
L = A + C_\sigma \ddot q^\sigma, \qquad \frac{\partial C_\sigma}{\partial \dot q^\nu} = \frac{\partial C_\nu}{\partial \dot q^\sigma} ,
\end{gather*}
that, indeed, are reducible (equivalent with f\/irst-order Lagrangians).

Finally, {\em all} f\/irst-order dynamical forms are of variational order one and they arise from the class $[\rho] = [A_\sigma \omega^\sigma \land dt]$.

$\bullet$ Any $\rho \in \Lambda^1_3/ \Theta^1_3$ takes the form
\begin{gather*}
 \rho \sim A^1_{\sigma \nu} \omega^\sigma \land \omega^\nu \land dt + A^2_{\sigma \nu} \omega^\sigma \land d \omega^\nu + A^3_{\sigma \nu} d \dot q^\sigma \land d \omega^\nu
\\
\hphantom{\rho \sim}{} + B^1_{\sigma \nu \rho} \omega^\sigma \land \omega^\nu \land d \dot q^\rho + B^2_{\sigma \nu \rho} \omega^\sigma \land d \dot q^\nu \land d \dot q^\rho + C_{\sigma \nu \rho} d \dot q^\sigma \land d\dot q^\nu \land d \dot q^\rho,
\end{gather*}
where $A^1_{\sigma \nu} = - A^1_{\nu \sigma}$, $A^3_{\sigma \nu} = - A^3_{\nu \sigma}$, $B^1_{\sigma \nu \rho} = - B^1_{\nu \sigma \rho}$, $B^2_{\sigma \nu \rho} = - B^2_{\sigma \rho \nu}$, and the $C$'s are totally antisymmetric.
Then the corresponding Helmholtz-like form becomes $\cI (\rho) = \eta = H^1_{\sigma \nu} \omega^\sigma \land \omega^\nu \land dt + H^2_{\sigma \nu} \omega^\sigma \land \dot \omega^\nu \land dt + H^3_{\sigma \nu} \omega^\sigma \land \ddot \omega^\nu \land dt$, where, as expected, the components have a polynomial structure in second and higher derivatives:
\begin{gather*}
 H^1_{\sigma\nu} = A^1_{\sigma \nu} + B^1_{\sigma \nu \rho} \ddot q^\rho +
\frac{1}{4} \frac{d}{dt} \bigl( A^2_{\sigma \nu} - A^2_{\nu \sigma} - 2 \big( B^2_{\sigma \nu \rho} - B^2_{\nu \sigma \rho} \big) \ddot q^\rho \bigr)
- \frac{1}{2} \frac{d^2}{dt^2} \big( A^3_{\sigma \nu} - 3 C_{\sigma \nu \rho} \ddot q^\rho\big)
\\
\hphantom{H^1_{\sigma\nu}}{}
 = \alpha^1_{\sigma \nu} + \alpha^2_{\sigma \nu \rho} \ddot q^\rho + \alpha^3_{\sigma \nu \rho \lambda} \ddot q ^\rho \ddot q^\lambda + \alpha^4_{\sigma \nu \rho \lambda \kappa} \ddot q ^\rho \ddot q^\lambda \ddot q^\kappa +
\alpha^5_{\sigma \nu \rho} \dddot q^\rho
 +\alpha^6_{\sigma \nu \rho \lambda} \ddot q^\rho \dddot q^\lambda + \frac{3}{2} C_{\sigma \nu \rho} q^\rho_4 ,
\\
 H^2_{\sigma \nu} = - \frac{1}{2} \big(A^2_{\sigma \nu} + A^2_{\nu \sigma}\big) + \big(B^2_{\sigma \nu \rho} + B^2_{\nu \sigma \rho}\big) \ddot q^\rho ,
 \qquad
 H^3_{\sigma \nu} = A^3_{\sigma \nu} - 3 C_{\sigma \nu \rho} \ddot q^\rho .
 \end{gather*}
 The requirement that the class $[\rho]$ is $\omega^\sigma$-generated gives $A^3_{\sigma \nu} = 0$ and $C_{\sigma \nu \rho} = 0$, and simplif\/ies the Helmholtz-like forms to $H^3_{\sigma \nu}= 0$, and
 \begin{gather*}
H^1_{\sigma\nu} = \alpha^1_{\sigma \nu} + \alpha^2_{\sigma \nu \rho} \ddot q^\rho + \alpha^3_{\sigma \nu \rho \lambda} \ddot q ^\rho \ddot q^\lambda +
\alpha^5_{\sigma \nu \rho} \dddot q^\rho .
\end{gather*}
If, moreover, $\eta$ is a Helmholtz form (i.e., $[d\rho] = 0$), related with a $\omega^\sigma$-generated class then it corresponds to a~second-order dynamical form $\varepsilon = E_\sigma \omega^\sigma \land dt$, where
$E_\sigma = A_\sigma + B_{\sigma \nu} \ddot q^\nu$, and $B_{\sigma \nu} = B_{\nu \sigma}$.
\end{Example}

The above discussion illustrates that the range of applications of the variational sequence is wider than that of corresponding {\em infinite order} constructions or even of some other known {\em finite order} constructions. Remarkably, it gives solution of the

Inverse order problem (order reduction problem):
{\em If a differential $q$-form $\eta$ of order $s$ is $($locally/globally$)$ trivial, or $($locally/globally$)$ variational $($its image under the corresponding morphism~$E_q$ vanishes$)$ what is the `minimal order' of the $($local/global$)$ preimage $(q-1)$-form $\mu$ such that
$\eta = E_{q-1}(\mu)$?}

Note that this question is answered by the variational sequence, but is {\em not} answered, e.g., by the variational bicomplex, or even by Takens representation(!), since $\eta$ is an element of a sheaf of dif\/ferential forms of order $s$ which contains forms of {\em different variational orders}.

\begin{Example}
Consider a dynamical form $\varepsilon$ (nontrivially) of order three over a f\/ibred manifold with $n = \dim \bX = 1$. $\varepsilon$ belongs to $\Lambda^3_2$ and represents a system of 3rd order ordinary dif\/ferential equations. Takens sequence gives us that if locally variational, $\varepsilon$ comes from a local Lagrangian of order~$3$. However, from the variational sequence we can get information about the variational order of $\varepsilon$, and hence a f\/iner result about the order of the local Lagrangians. Namely, the following cases may arise:
\begin{itemize}\itemsep=0pt
\item The variational order of $\varepsilon$ is (nontrivially) three. This means that in the variational sequence $\varepsilon$ is represented by a class $[\rho] \in \Lambda^3_2/ \Theta^3_2$. If $\varepsilon$ is locally variational, any minimal order Lagrangian is of order three (not further reducible).

\item The variational order of $\varepsilon$ is two. This means that in the variational sequence $\varepsilon$ is represented by a class $[\rho] \in \Lambda^2_2/ \Theta^2_2$. If locally variational, it has a (local) second-order Lagrangian.

\item The variational order of $\varepsilon$ is one. This means that in the variational sequence $\varepsilon$ is represented by a class $[\rho] \in \Lambda^1_2/ \Theta^1_2$. As follows from the previous example, if $\varepsilon$ is locally variational then the Tonti Lagrangian (which is of order three) is locally reducible to a~second-order Lagrangian af\/f\/ine in the second derivatives.
\end{itemize}
\end{Example}

\section{Lie derivative in the variational sequence}\label{section4}

One of the most important features of the geometric formulation of the calculus of variations
in jet bundles is the fact that {\em variations can be described by Lie derivatives of differential
forms with respect to prolongations of projectable vector fields}. A famous example
is the f\/irst variation formula which takes the integral form~\cite{Kru73} (cf.~\cite{GoSt73})
\begin{gather*}
\int_ {\Omega} J^r \gamma^* L_{J^r\Xi}\lambda = \int_ {\Omega} J^{2r-1} \gamma^*\big(J^{2r-1}\Xi \rfloor
d\rho\big) + \int_ {\Omega} J^{2r-1} \gamma^* d\big(J^{2r-1}\Xi \rfloor\rho\big) ,
\end{gather*}
where $\rho$ is a Lepage equivalent of $\lambda$, $\Xi$ is a projectable vector f\/ield on $\bY$ (``variation vector f\/ield''), and $\Omega \subset \bX$ is a compact connected $n$-dimensional submanifold with boundary.
It is, however, even more interesting (and fundamental) that with use of the dif\/ferential calculus in jet bundles, this formula can be (as mentioned in Section~\ref{secLep}), invariantly expressed in the ``inf\/initesimal form'' as a {\em formula for differential forms} on the manifold~$J^{2r}\bY$~\cite{Kru73}
\begin{gather} \label{1vf}
L_{J^r\Xi}\lam = h\big(J^{2r-1}\Xi \rfloor
d\rho\big) + hd\big(J^{2r-1}\Xi \rfloor\rho\big) ,
\end{gather}
and (with the account of $\lambda = h\rho$) it yields a remarkable property of the Lie derivative in the calculus of variations:
\begin{gather} \label{hL}
 L_{J^{2r}\Xi}h\rho= h L_{J^{2r-1}\Xi}\rho .
\end{gather}
Moreover, as discovered in \cite{KT}, the Lie derivative commutes with the Euler--Lagrange operator, as the Lie derivative of the Euler--Lagrange form of $\lambda$ is the Euler--Lagrange form of the transformed Lagrangian:
\begin{gather} \label{LieE}
L_{J^{2r}\Xi} E_\lambda = E_{L_{J^r\Xi}\lam} .
\end{gather}
Such properties of the Lie derivative provide an elegant intrinsic formulation (and an easy proof) of Noether theorem, Noether--Bessel-Hagen theorem \cite{BeHa21, Noe18}, conservation laws, and relationship between symmetries of a Lagrangian and its Euler--Lagrange form (for any order and any number of independent variables)~\cite{Kru83,Kru15}.

Having in mind the variational sequence and its representations, there arises a question about generalization of these properties to variational forms of any degree.

First we recall results obtained in \cite{KrKr08, KKPS05,KKPS07}. Let ${\cal C}_r$ denote the contact ideal of order~$r$.

\begin{Definition}
A vector f\/ield $Z$ on $J^r\bY$ is called a {\em contact symmetry} if it is a symmetry of the contact ideal ${\cal C}_r$ (meaning that for every contact form $\omega$ the Lie derivative $L_Z \omega$ is a contact form.
\end{Definition}

Contact symmetries were characterized in~\cite{KrKr08}. It is worth notice that for any projectable vector f\/ield~$\Xi$ on~$\bY$ the $r$-jet prolongation $Z = J^r\Xi$ is a contact symmetry, and conversely, if a~contact symmetry $Z$ on $J^rY$ is projectable onto $\bX$ then $Z = J^r\Xi$ for some $\Xi$ on $\bY$.

Let $k \geq 0$. As an immediate consequence of the def\/inition it follows that the Lie derivative of a $k$-contact form by a~contact symmetry is an {\em at least $k$-contact form}. This means, however, that the Lie derivatives by contact symmetries preserve the sheaves $\Theta^r_q$.

\begin{Proposition}
Let $Z$ be a contact symmetry defined on $J^r\bY$. Then for all $q \geq 1$, \mbox{$L_Z \Theta^r_q \subset \Theta^r_q$}.
\end{Proposition}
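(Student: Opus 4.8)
The plan is to unwind the definition of $\Theta^r_q$ and use the Leibniz-type behaviour of the Lie derivative with respect to contraction and exterior derivative. Recall that $\Theta^r_q = \Lambda^r_{q,c} + (d\Lambda^r_{q-1,c})$, so a general element of $\Theta^r_q$ has the shape $\omega + d\eta$ with $\omega \in \Lambda^r_{q,c}$ and $\eta \in \Lambda^r_{q-1,c}$ (more precisely, sections of the sheaf generated by such $d\eta$). Since the Lie derivative is linear and additive over sheaf-generating presheaves, it suffices to check separately that $L_Z\omega \in \Theta^r_q$ for $\omega$ contact of the appropriate strength, and that $L_Z(d\eta) \in \Theta^r_q$ for $\eta$ contact of degree $q-1$.

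For the first piece I would invoke the remark made just before the statement: if $Z$ is a contact symmetry, then $L_Z$ of a $k$-contact form is at least $k$-contact. For $q \le n$, $\Lambda^r_{q,c}$ is the sheaf of all contact (i.e.\ at least $1$-contact) $q$-forms, and $L_Z$ of such a form is again at least $1$-contact, hence contact, hence in $\Lambda^r_{q,c} \subset \Theta^r_q$. For $q > n$, $\Lambda^r_{q,c}$ is the sheaf of strongly contact $q$-forms, i.e.\ those $\omega$ with $p_{q-n}\omega = 0$ (equivalently, at least $(q-n+1)$-contact); a $q$-form is automatically at least $(q-n)$-contact, so strongly contact means at least $(q-n+1)$-contact, and $L_Z$ preserves this lower bound by the remark. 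Thus $L_Z\Lambda^r_{q,c}\subset\Lambda^r_{q,c}$ in all cases. For the second piece I would use that $L_Z$ commutes with $d$: $L_Z(d\eta) = d(L_Z\eta)$. Since $\eta \in \Lambda^r_{q-1,c}$ and, by the first piece applied to degree $q-1$, $L_Z\eta \in \Lambda^r_{q-1,c}$, we get $d(L_Z\eta) \in d\Lambda^r_{q-1,c} \subset \Theta^r_q$. Adding the two contributions gives $L_Z(\omega + d\eta) \in \Theta^r_q$, and passing to the generated sheaf completes the argument.

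The one subtlety — and what I expect to be the only real point requiring care — is the sheaf-theoretic bookkeeping: $(d\Lambda^r_{q-1,c})$ denotes the sheaf \emph{generated by} the presheaf $d\Lambda^r_{q-1,c}$, so a local section of $\Theta^r_q$ need not literally be a single $\omega + d\eta$ but only locally of that form. Since $L_Z$ is a local operator (it is $d\circ\iota_Z + \iota_Z\circ d$, both local), it respects this localization: if a section is locally in $\Lambda^r_{q,c} + d\Lambda^r_{q-1,c}$, its image under $L_Z$ is locally there too, hence lies in $\Theta^r_q$. One also needs that $Z$, defined on $J^r\bY$, genuinely acts on forms on $J^r\bY$ (no prolongation issue arises because everything — the symmetry, the forms, the ideal — lives at the fixed order $r$), which is exactly the hypothesis. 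With these observations in place the proof is essentially the two-line computation above. I would also note in passing that the statement is the natural prerequisite for defining the \emph{variational Lie derivative} on the quotient sheaves $\cV^r_q = \Lambda^r_q/\Theta^r_q$, since $L_Z\Theta^r_q \subset \Theta^r_q$ is precisely what makes $[\rho] \mapsto [L_Z\rho]$ well defined.
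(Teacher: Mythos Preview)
Your argument is correct and follows exactly the line the paper takes: the paper simply observes that, as an immediate consequence of the definition of contact symmetry, $L_Z$ sends a $k$-contact form to an at least $k$-contact form, and concludes from this that $\Theta^r_q$ is preserved. You have merely made explicit the two-summand decomposition $\Theta^r_q=\Lambda^r_{q,c}+(d\Lambda^r_{q-1,c})$, the use of $L_Z d=dL_Z$ on the second summand, and the sheaf-theoretic localisation, all of which the paper leaves implicit.
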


Equivalently we can claim that for any two forms $\rho_1$, $\rho_2$ belonging to the same class $[\rho]$ in the variational sequence, the Lie derivatives $L_Z \rho_1$, $L_Z \rho_2$ by any contact symmetry $Z$ are also equivalent. Thus, following~\cite{KKPS05} we can def\/ine the {\em Lie derivative} ${\cal L}_Z$ {\em of a~class}
\begin{gather} \label{Licl}
{\cal L}_Z [\rho] = [L_Z \rho].
\end{gather}

Now one can easily prove that the property (\ref{LieE}) extends for any contact symmetry, and through the whole variational sequence, for any morphism ${\cal E}_k\colon \Lambda^r_k/ \Theta^r_k \to \Lambda^r_{k+1}/ \Theta^r_{k+1}$ \cite{KKPS05} (see also~\cite{KKPS07}):

\begin{Theorem} \label{thmLie}
Let a vector field $Z$ on $J^r\bY$ be a contact symmetry. Then the Lie derivative $L_Z$ commutes with all morphisms in the variational sequence. More explicitly, for all $k \geq 1$
\begin{gather} \label{Limor}
{\cal L}_Z {\cal E}_k [\rho] = {\cal E}_k ({\cal L}_Z [\rho]) = {\cal E}_k [ Z \rfloor d\rho ] .
\end{gather}
\end{Theorem}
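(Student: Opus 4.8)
The plan is to reduce the statement to three elementary facts: the well-definedness of the operator ${\cal L}_Z$ on classes (the Proposition just stated), the commutation of the ordinary Lie derivative with the exterior derivative, $L_Z \com d = d \com L_Z$, and the Cartan magic formula $L_Z = Z\rfloor d + d(Z \rfloor \cdot)$. Since $Z$ is a vector field on $J^r\bY$ and we shall act on forms on $J^r\bY$, the operator $L_Z$ does not raise the jet order, so every expression below lives in the sheaves $\Lambda^r_q$ and no jet projections intervene.

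First I would fix $k \geq 1$ and a representative $\rho \in \Lambda^r_k$ of a class $[\rho] \in {\cal V}^r_k$. By definition of the morphisms, ${\cal E}_k[\rho] = [d\rho]$, and by the definition \eqref{Licl} of the Lie derivative of a class,
\[
{\cal L}_Z {\cal E}_k[\rho] = {\cal L}_Z[d\rho] = [L_Z d\rho] ;
\]
this is legitimate because $d\rho \in \Lambda^r_{k+1}$ and, by the Proposition, $L_Z$ maps $\Theta^r_{k+1}$ into itself, so $[L_Z d\rho]$ is independent of the chosen representative of $[d\rho]$. On the other side, ${\cal E}_k({\cal L}_Z[\rho]) = {\cal E}_k[L_Z\rho] = [d L_Z\rho]$. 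Since $L_Z$ commutes with $d$, one has $L_Z d\rho = d L_Z\rho$, which gives the first equality ${\cal L}_Z {\cal E}_k[\rho] = {\cal E}_k({\cal L}_Z[\rho])$.

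For the second equality I would expand by the Cartan formula, $L_Z\rho = Z\rfloor d\rho + d(Z\rfloor\rho)$, apply ${\cal E}_k$, and use $d^2 = 0$:
\[
{\cal E}_k({\cal L}_Z[\rho]) = [d L_Z\rho] = [d(Z\rfloor d\rho)] + [dd(Z\rfloor\rho)] = [d(Z\rfloor d\rho)] = {\cal E}_k[Z\rfloor d\rho] ,
\]
which is precisely \eqref{Limor}. The remaining arrow ${\cal E}_0\colon \Lambda^r_0 \to \Lambda^r_1/\Theta^r_1$ is dealt with by the identical computation, with $\rho$ replaced by a function $f$ and $L_Z f = Z\rfloor df$.

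The argument is essentially formal once the Proposition is available. The only point requiring care -- and the single place where the hypothesis that $Z$ is a contact symmetry, rather than an arbitrary vector field, is genuinely used -- is exactly the well-definedness of ${\cal L}_Z$ on the quotient sheaves, i.e.\ the inclusions $L_Z \Theta^r_q \subset \Theta^r_q$ for all $q$. These follow, as noted in the excerpt, from the fact that the Lie derivative of a $k$-contact form along a contact symmetry is at least $k$-contact, so the Proposition may simply be invoked; there is no analytic obstacle, and all the substance lies in having set up the quotient framework correctly.
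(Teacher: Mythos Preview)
Your proof is correct and follows essentially the same route as the paper's own proof: both reduce the statement to the definition ${\cal L}_Z[\rho]=[L_Z\rho]$, the commutation $L_Z d = d L_Z$, and the Cartan formula, with the contact-symmetry hypothesis entering only through the Proposition guaranteeing that $L_Z$ preserves $\Theta^r_q$. The paper merely compresses these steps into a single chain of equalities, whereas you spell them out explicitly; there is no substantive difference.
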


\begin{proof}	
Since the Lie derivative commutes with the exterior derivative, we have for any $k$-form~$\rho$ on~$J^r\bY$
\begin{gather*}
{\cal L}_Z [d\rho] = [L_Z d \rho] = [d L_Z \rho] = [ d Z \rfloor d\rho ] .
\end{gather*}
Writing this formula in terms of the morphism ${\cal E}_k$ we get (\ref{Limor}).
\end{proof}

From now on, we shall restrict to {\em $\pi_s$-projectable}
 contact symmetries (recall that then we have $Z = J^s\Xi$ for a projectable vector f\/ield $\Xi$ on $\bY$). From the def\/inition it is clear that the Lie derivative then preserves even individual contact components. Namely, if $Z$ is a projectable contact symmetry on $J^{r+1}\bY$ then $Z = J^{r+1}\Xi$ and
\begin{gather*} %\label{contLi}
L_{J^{r+1}\Xi} p_k\rho = p_k L_{J^{r}\Xi} \rho, \qquad k \geq 0,
\end{gather*}
generalizing (\ref{hL}) to variational forms of any degree. Note that we may write
\begin{gather*}
L_{J^{r+1}\Xi} \pi_{r+1,r}^* \rho = L_{J^{r+1}\Xi} \left( \sum_{k=0}^q p_k \rho \right) = \sum_{k=0}^q p_k (L_{J^{r}\Xi} \rho) .
\end{gather*}

Let us turn back to formula (\ref{Licl}). It means that the Lie derivative of classes of forms, i.e., {\em variational Lie derivative}, can be correctly def\/ined as the equivalence class of the standard Lie derivative of forms, and thus {\em represented by forms}. Such a point of view opens a possibility to express the inf\/initesimal f\/irst variation formula~(\ref{1vf}) in terms of the {\em variational morphisms}, and, in this way, to obtain a generalization of the f\/irst variation formula for {\em any degree} of forms (with higher degree analogs of Noether theorems, indeed).

Some results in this direction have been already achieved, namely explicit formulae for the quotient Lie derivative operators
were provided, as well as corresponding versions of Noether theorems interpreted in terms of
conserved currents for Lagrangians and Euler--Lagrange morphisms. However, only classes of forms up to degree $n+2$,
were considered (the latter assumed to be exact) \cite{FPV02}. The representation made use of intrinsic decomposition formulae for vertical morphisms due to Kol\'a\v{r} \cite{Kol83}, which express geometrically the integration by part procedure, however, in a dif\/ferent way compared to that based on the splitting of the Cartan form \cite{GoSt73,Kru73}. The decomposition formulae introduce local objects such as {\em momenta} which could be glo\-ba\-li\-zed by means of connections, and besides the usual momentum associated with a Lagrangian, a~`generalized' momentum is associated with an Euler--Lagrange type morphism. Its interpretation in the calculus of variations has not yet been exhaustively exploited; as conjectured in~\cite{PaVi01}, generalized momenta could play a r\^ole within the multisymplectic framework for f\/ield theories.

In the sequel of this section we propose an approach to the problem of representation of the Lie derivative of classes of forms which encompasses previous results and generalizes to any degree of forms.
Actually, representation of (non exact) classes of forms of any degree appears
relevant, especially in degree $n+3$, for the study of nonvariational dif\/ferential equations arising from Helmholtz-like forms (related with the problem of the existence of closed $n+2$ forms)~\cite{KrMa10, Ma09}. We shall exploit the {\em relation between the interior Euler operator and the Cartan formula for the Lie derivative of differential forms}. The representation of the variational Lie derivative will provide, in a quite simple and immediate way, `any degree' generalizations of~(\ref{1vf}), together with the generalized Noether theorems, as `{\em quotient Cartan formulae}'.

These results, which have an intrinsic importance from a theoretical point of view, as shown in Section $5$, have also various concrete applications; besides the ones presented here, it is worth mentioning that applications of iterated variational Lie derivatives in variational problems on gauge-natural bundles produced physically relevant results concerning existence and globality of conserved quantities and on existence of Higgs f\/ields on spinor bundles (see, e.g., \cite{FFPW08,FFPW11,PaWi03,PaWi07,PaWi08,PaWi13a}).

Let $\Xi$ be a projectable vector f\/ield on $\bY$.
We def\/ine the interior product of a prolongation of $\Xi$ with the equivalence class of $\rho$ by the formula
\begin{gather} \label{cocl}
J^r\Xi \rfloor [\rho] = \big[J^s\Xi \rfloor R_q([\rho])\big]
\end{gather}
(the equivalence class of the interior product of the vector f\/ield with
the representation of the equivalence class of~$\rho$). This
def\/inition is well posed since the representation $R_q$ is unique.

Given a $q$-form $\rho$
def\/ined (locally) on $J^r\bY$, we have a commutative diagram def\/ining an operator~$\hat{R}_q$
\begin{gather*}
\hat{R}_q (\cL_{J^r\Xi} [\rho] ) \equiv \hat{R}_q ( [
L_{J^r\Xi}\rho] ) = L_{J^s\Xi}R_q ([\rho]).
\end{gather*}
This
operator is uniquely def\/ined and is equal, respectively, to the
following expressions:
\begin{alignat*}{4}
& L_{J^s\Xi}h\rho, \qquad && 0\leq q\leq n, \qquad && s=r+1,&\\
& L_{J^s\Xi} \cI(\rho), \qquad && n+1\leq q\leq P, \qquad && s=2r+1,&\\
& L_{J^s\Xi} \rho, \qquad && q > P+1, \qquad&& s=r,&
\end{alignat*}
where $P$ denotes the corank of the Cartan distribution on $J^r\bY$.

This def\/inition enables us to deal with ordinary
Lie derivatives of forms on $\For^s_q$, thus we can apply the standard Cartan
formula.

We recall a result in \cite[Theorem~III.11]{Kr02}, see
also~\cite{KrMu05}.
\begin{Lemma} \label{Krbek}
Let $\Xi$ be a $\pi$-vertical vector field on
$\bY$ and $\rho$ a differential $q$-form on $J^r\bY$. Then the
following holds true for $i=1,\ldots , q$
\begin{gather*}
J^{r+2}\Xi\rfloor
p_i d p_i \rho = - p_{i-1}d \big(J^{r+1}\Xi \rfloor p_i\rho\big),
\end{gather*}
and
\begin{gather*}
L_{J^{r+2}\Xi}(\pi_{r+2,r+1})^* p_i\rho =
J^{r+2}\Xi\rfloor p_{i+1} d p_i \rho + p_{i}d \big(J^{r+1}\Xi \rfloor
p_i\rho\big).
\end{gather*}
\end{Lemma}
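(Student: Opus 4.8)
The plan is to reduce the whole statement to the Cartan magic formula $L_{J^{r+2}\Xi} = J^{r+2}\Xi\rfloor d + d\, (J^{r+2}\Xi\rfloor\,\cdot\,)$ applied to $p_i\rho$, followed by splitting both sides into contact components. First I would recall that since $\Xi$ is $\pi$-vertical, its prolongations are $\pi$-vertical contact symmetries, so $J^{s}\Xi\rfloor$ lowers the contact degree by exactly one on each homogeneous contact component: if $\alpha$ is $k$-contact then $J^{s}\Xi\rfloor\alpha$ is $(k-1)$-contact. Likewise $d$ raises the contact degree by $0$ or $1$, and $p_j$ is idempotent and commutes with pullbacks along the tower. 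These bookkeeping facts are what make the computation go through.

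For the \emph{first identity}, I would start from $d(p_i dp_i\rho)=0$ — more precisely, since $d$ commutes with $\pi^*$ and $p_i dp_i\rho$ is (at least) $i$-contact with $d$ of it being (at least) $i$-contact as well — and apply Cartan's formula to the closed-ish form $p_i dp_i\rho$. Concretely, $L_{J^{r+2}\Xi}(p_i dp_i\rho) = J^{r+2}\Xi\rfloor d(p_i dp_i\rho) + d(J^{r+2}\Xi\rfloor p_i dp_i\rho)$. The key structural input is that $p_{i+1}dp_i dp_i\rho = 0$ (the $(i+1)$-contact component of an exact form whose primitive is already $i$-contact after $dp_i$ — this is exactly the kind of identity underlying property \eqref{pdr} and the computations in the proof of Theorem~\ref{thmhC}). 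Projecting $L_{J^{r+2}\Xi}(p_i dp_i\rho)$ onto its $i$-contact component and using that $J^{r+2}\Xi\rfloor p_i dp_i\rho$ is $(i-1)$-contact so that $d$ of it contributes to the $(i-1)$- and $i$-contact parts, while the left side $L_{J^{r+2}\Xi}$ of an $i$-contact form is at least $i$-contact, I extract the relation $J^{r+2}\Xi\rfloor p_i dp_i\rho = - p_{i-1} d(J^{r+1}\Xi\rfloor p_i\rho)$ after identifying $J^{r+2}\Xi\rfloor p_i dp_i\rho$ with $-J^{r+2}\Xi\rfloor d(p_i\rho)$ modulo terms that the contraction kills. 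One has to be careful to track the jet orders: contraction with $J^{r+1}\Xi$ suffices against a form whose components live on $J^{r+1}\bY$, and $p_i\rho$ genuinely needs the prolongation $\pi_{r+1,r}$.

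For the \emph{second identity}, I would apply the Cartan formula directly to $\pi_{r+2,r+1}^* p_i\rho$: $L_{J^{r+2}\Xi}(\pi_{r+2,r+1}^* p_i\rho) = J^{r+2}\Xi\rfloor d(\pi_{r+2,r+1}^* p_i\rho) + d(J^{r+2}\Xi\rfloor \pi_{r+2,r+1}^* p_i\rho)$. Now $d(p_i\rho)$ decomposes into its $i$-contact part $p_i dp_i\rho$ and its $(i+1)$-contact part $p_{i+1}dp_i\rho$ (plus higher, but those are killed after one contraction into the relevant component), so $J^{r+2}\Xi\rfloor d p_i\rho = J^{r+2}\Xi\rfloor p_{i+1}dp_i\rho + J^{r+2}\Xi\rfloor p_i dp_i\rho$, where the first term is $i$-contact and the second is $(i-1)$-contact. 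Substituting the first identity (with $i\mapsto i$) to rewrite $J^{r+2}\Xi\rfloor p_i dp_i\rho = -p_{i-1}d(J^{r+1}\Xi\rfloor p_i\rho)$ gives $L_{J^{r+2}\Xi}(\pi^* p_i\rho) = J^{r+2}\Xi\rfloor p_{i+1}dp_i\rho - p_{i-1}d(J^{r+1}\Xi\rfloor p_i\rho) + d(J^{r+2}\Xi\rfloor p_i\rho)$. Since $L_{J^{r+2}\Xi}$ of an $i$-contact form is at least $i$-contact, and $J^{r+2}\Xi\rfloor p_i\rho$ is $(i-1)$-contact, the terms $-p_{i-1}d(\cdots)$ and the $(i-1)$-contact part of $d(J^{r+2}\Xi\rfloor p_i\rho)$ cancel, leaving exactly $L_{J^{r+2}\Xi}(\pi_{r+2,r+1}^* p_i\rho) = J^{r+2}\Xi\rfloor p_{i+1}dp_i\rho + p_i d(J^{r+1}\Xi\rfloor p_i\rho)$, as claimed.

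The main obstacle I anticipate is the careful contact-degree and jet-order bookkeeping in the cancellations: one must verify that the pullbacks line up (so that $p_{i-1}d(J^{r+1}\Xi\rfloor p_i\rho)$ is exactly the $(i-1)$-contact component of $d(J^{r+2}\Xi\rfloor \pi^* p_i\rho)$ after the appropriate further pullback), and that no higher-contact remainder terms sneak in when projecting. The cleanest route is probably to work entirely with the decomposition \eqref{splitting of forms} applied after the required pullback, and to invoke the fact — used repeatedly in the proof of Theorem~\ref{thmhC} — that $p_j$ commutes with the operations in play; then the two identities fall out by simply matching $i$-contact and $(i-1)$-contact components on each side. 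An alternative, if the direct route gets messy, is to first prove the $\pi$-vertical case in coordinates using the explicit form of $d_J$ and $\partial/\partial y^\sigma_J$, but the intrinsic argument above should be preferable.
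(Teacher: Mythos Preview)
The paper does not give its own proof of this lemma; it is merely recalled from \cite{Kr02,KrMu05}. So there is no proof to compare against, and I evaluate your proposal on its own merits.

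Your overall strategy is right, and your argument for the \emph{second} identity is essentially correct: apply Cartan's formula to $(\pi_{r+2,r+1})^*p_i\rho$, decompose $d(p_i\rho)=p_i dp_i\rho + p_{i+1}dp_i\rho$ and $d(J^{r+1}\Xi\rfloor p_i\rho)=p_{i-1}d(\cdots)+p_i d(\cdots)$, then use that $L_{J^{r+2}\Xi}$ of an $i$-contact form has no $(i-1)$-contact component. But you have tangled the logic. You do \emph{not} need the first identity as an input for the second: once you have written
\[
L_{J^{r+2}\Xi}(\pi^*p_i\rho)=\underbrace{J^{r+2}\Xi\rfloor p_i dp_i\rho}_{(i-1)\text{-contact}}+\underbrace{J^{r+2}\Xi\rfloor p_{i+1}dp_i\rho}_{i\text{-contact}}+\underbrace{p_{i-1}d(J^{r+1}\Xi\rfloor p_i\rho)}_{(i-1)\text{-contact}}+\underbrace{p_i d(J^{r+1}\Xi\rfloor p_i\rho)}_{i\text{-contact}},
\]
the vanishing of the $(i-1)$-contact part of the left-hand side \emph{is} the first identity, and the $i$-contact part \emph{is} the second. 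Both fall out simultaneously from one application of Cartan to $p_i\rho$.

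Your separate argument for the first identity, by contrast, is not salvageable as written. You apply Cartan to $p_i dp_i\rho$ rather than to $p_i\rho$, which leads you to objects like $d(J^{r+1}\Xi\rfloor p_idp_i\rho)$ that do not appear in the target formula. More seriously, the ``key structural input'' you invoke, $p_{i+1}dp_idp_i\rho=0$, is \emph{false}: from $d^2=0$ one gets $p_idp_idp_i\rho=0$ (this is $d_H^2=0$) and $p_{i+2}dp_{i+1}dp_i\rho=0$ (this is $d_V^2=0$), but the mixed term satisfies $p_{i+1}dp_idp_i\rho=-p_{i+1}dp_{i+1}dp_i\rho$, which is generically nonzero. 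This is not what underlies \eqref{pdr} or Theorem~\ref{thmhC}. Drop that paragraph entirely and simply read off the first identity as the $(i-1)$-contact component of the Cartan decomposition you already wrote down for the second.
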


As above, we denote $\omega_0 = dx^1 \land dx^2 \land \cdots \land dx^n$. Next, let us denote
\begin{gather*}
\omega_i = \frac{\partial}{\partial x^i} \rfloor \omega_0 , \qquad
\omega_{ij} = \frac{\partial}{\partial x^j} \rfloor \omega_i.
\end{gather*}

It is well known that in order to obtain a representation of classes of degree $n+1$ in the variational sequence by dynamical forms the following integration formula is used \cite{Kru90}
\begin{gather*}
\ome^{\alp}_{J i}\wed \ome_0 = - d\ome^{\alp}_{J}\wed \ome_i ,
\end{gather*}
and the corresponding representation is obtained by taking the~$p_1$ component obtained by ite\-ra\-ted integrations by parts.
In order to integrate by parts $(p+k)$-forms with $p<n$, we need to generalize as
\begin{gather*}
\gamma_{\alp}^{J [i j] } \omega^{\alp}_{J [ i } \wed \ome_{j]}
= - \gam_{\alp}^{J [i j] } d \ome^{\alp}_J \wed \ome_{i j} .
\end{gather*}
This enables us to generalize results given in \cite{KrMu05}.

\begin{Lemma}\label{integration by part}
Let $\rho \in \For^r_{p+k}$, $1 \leq p \leq n$. Let $p_{k} \rho=\sum\limits^{r}_{|J |=0} \ome^{\alp}_{J} \wed \eta^{J}_\alp$, with $\eta^{J}_\alp$ $(k-1)$-contact $(p+k-1)$-forms.
Then we have the decomposition
\begin{gather*}
p_k\rho= \mathfrak{I}(\rho)+p_{k}dp_{k} \mathfrak{R}(\rho) ,
\end{gather*}
where $\mathfrak{R}(\rho)$ is a local $k$-contact $(p+k-1)$-form such that
\begin{gather*}
 \mathfrak{I}(\rho)+p_{k}dp_{k} \mathfrak{R}(\rho) = \ome^{\alp}\wed \sum^{r}_{ |J |= 0}(-1)^{| J |}d_{J} \eta^{J}_\alp+ \sum^{r}_{| I |=1}d_{I}\big(\ome^{\alp}\wed \zet^{I}_{\alp}\big),
\end{gather*}
with
\begin{gather*}
\zet^{I}_{\alp} = \sum^{r- | I |}_{|J |=0}(-1)^{J} \binom{| I |+ |J |}{|J |}
d_{J} \eta^{J I}_{\alp}.
\end{gather*}
\end{Lemma}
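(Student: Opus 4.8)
The plan is to imitate, for a horizontal degree $p\le n$, the iterated integration by parts used for $p=n$ in \cite{Kr02,KrMu05}. Starting from the given local decomposition $p_{k}\rho=\sum_{|J|=0}^{r}\ome^{\alp}_{J}\wed\eta^{J}_{\alp}$, one transports every formal derivative hidden in the symbols $\ome^{\alp}_{J}$ onto the coefficient forms $\eta^{J}_{\alp}$, organising the bookkeeping so that what survives carries no derivatives on the contact factor, namely a term $\ome^{\alp}\wed(\dots)$, while everything else is a sum of total-derivative-exact terms $d_{I}(\ome^{\alp}\wed\zet^{I}_{\alp})$. The elementary move is the Leibniz rule for the formal derivative, $d_{i}(\ome^{\alp}_{J}\wed\bet)=\ome^{\alp}_{Ji}\wed\bet+\ome^{\alp}_{J}\wed d_{i}\bet$, i.e.\ $\ome^{\alp}_{Ji}\wed\bet=d_{i}(\ome^{\alp}_{J}\wed\bet)-\ome^{\alp}_{J}\wed d_{i}\bet$, which preserves the contact degree since $d_{i}\ome^{\alp}_{J}=\ome^{\alp}_{Ji}$ and $d_{i}\,dx^{j}=0$; for $p<n$, where $\eta^{J}_{\alp}$ is no longer a multiple of $\ome_{0}$, one first rewrites this identity with the help of the skew-symmetrised formula $\gam^{J[ij]}_{\alp}\ome^{\alp}_{J[i}\wed\ome_{j]}=-\gam^{J[ij]}_{\alp}\,d\ome^{\alp}_{J}\wed\ome_{ij}$ recalled above, so that the index being integrated is absorbed into one of the several horizontal slots now available in $\eta^{J}_{\alp}$. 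First I would fix a convention for peeling indices (say, always the last index of a non-decreasing multiindex), dispose of the base case $r=0$ (where $p_{k}\rho=\ome^{\alp}\wed\eta^{0}_{\alp}$, so $\mathfrak{I}(\rho)=p_{k}\rho$ and $\mathfrak{R}(\rho)=0$), and set up the recursion.

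Running the recursion on each summand and re-expanding the sub-terms that still carry a derivative on the $\ome$-factor produces an expansion
\begin{gather*}
\ome^{\alp}_{J}\wed\eta^{J}_{\alp}=\sum_{I\sqcup J'=J}(-1)^{|J'|}\,c(I,J')\,d_{I}\big(\ome^{\alp}\wed d_{J'}\eta^{J}_{\alp}\big),
\end{gather*}
the sum running over splittings of the multiindex $J$ into an ``integrated'' part $I$ and a ``differentiated'' part $J'$, with $c(I,J')$ counting the admissible reduction paths. The combinatorial heart of the proof is the claim $c(I,J')=\binom{|I|+|J'|}{|J'|}$ (for $n=1$ this is exactly the coefficient appearing in the classical variational-derivative expressions quoted in the Introduction); this follows by a Pascal-rule induction on $|J|$, equivalently it drops out of solving recursively the Leibniz expansion of $d_{J}(\ome^{\alp}\wed\gam)$ for the term $d_{J}\ome^{\alp}\wed\gam$. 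Summing over all $J$ with $|J|\le r$ and grouping by the integrated multiindex $I$ then yields the asserted identity, with $\mathfrak{I}(\rho)$ the $I=\emptyset$ contribution $\ome^{\alp}\wed\sum_{|J|=0}^{r}(-1)^{|J|}d_{J}\eta^{J}_{\alp}$ and $\zet^{I}_{\alp}=\sum_{|J'|=0}^{r-|I|}(-1)^{|J'|}\binom{|I|+|J'|}{|J'|}d_{J'}\eta^{J'I}_{\alp}$. It then remains to pass from the formal-derivative-exact remainder $\sum_{|I|\ge1}d_{I}(\ome^{\alp}\wed\zet^{I}_{\alp})$ to the form $p_{k}dp_{k}\mathfrak{R}(\rho)$: one exhibits $\mathfrak{R}(\rho)$ as a local $k$-contact $(p+k-1)$-form built from the $\zet^{I}_{\alp}$ by the same alternating-sum recipe illustrated in the Example for $n=1,\ k=1$, and checks by a direct computation that $p_{k}dp_{k}\mathfrak{R}(\rho)$ reproduces that remainder, consistency being verified exactly as in the case $p=n$; its contact degree and locality are read off from the construction, since each $\ome^{\alp}\wed\zet^{I}_{\alp}$ is $k$-contact and everything is written through the locally given $\eta^{J}_{\alp}$.

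The main obstacle is that for $p<n$ two essentially independent pieces of bookkeeping must be run simultaneously: the alternating binomial coefficients $\binom{|I|+|J'|}{|J'|}$, which for repeated derivative indices require care with the ordered-versus-multiindex conventions in the sums $\sum_{|J|}$ and in $d_{J}$; and the combinatorics of which of the horizontal covectors $\ome_{i_{1}\cdots}$ absorbs each integrated index $dx^{i}$, governed by the skew-symmetrised identity above. Decoupling these — for instance by first establishing the identity for $p=n$, where the second issue is absent, and then checking that passing to $p<n$ merely relabels the horizontal part while leaving the contact and derivative structure untouched — is where I expect the real work to lie; the remainder is the lengthy but routine verification already modelled on \cite{Kr02,KrMu05}.
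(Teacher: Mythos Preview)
Your approach is correct and is essentially what the paper has in mind: the paper does not give a proof of this lemma at all, but states it as the direct generalisation of the $p=n$ decomposition in \cite{Kr02,KrMu05} obtained by iterated integration by parts, using the skew-symmetrised identity $\gamma_{\alp}^{J[ij]}\ome^{\alp}_{J[i}\wed\ome_{j]}=-\gamma_{\alp}^{J[ij]}d\ome^{\alp}_{J}\wed\ome_{ij}$ displayed just before the lemma. The only additional content the paper provides is the Example immediately following, which carries out explicitly for $p=n-1$ the step you describe last --- rewriting $\sum_{|I|\ge 1}d_{I}(\ome^{\alp}\wed\zet^{I}_{\alp})$ as $p_{k}dp_{k}\mathfrak{R}(\rho)$ --- so your plan to build $\mathfrak{R}(\rho)$ from the $\zet^{I}_{\alp}$ and verify the identity directly is exactly in line with the paper.
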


Notice that $d_{J} \eta^{J}_\alp$ are also $(k-1)$-contact $p$-horizontal (i.e., containing the wedge of $p$ of the forms $dx^i$) $(p+k-1)$-forms.
It is therefore well def\/ined a local splitting of $p_k\rho$, $\rho \in \For^r_{p+k}$, $1 \leq p \leq n$ and
$k > 1$.

Of course, $\mathfrak{I}=\cI$ and $\mathfrak{R}=\cR$ in the case $p = n$.

\begin{Example}
Let $p=n-1$.
We can write $\ome^{\alp}\wed \zet^{I}_{\alp}= \chi ^{I l} \wed \ome_l$, where $\chi ^{I l}$ are some local $k$-contact $k$-forms on $J^{2r}\bY$. Therefore
\begin{gather*}
 \sum^{r}_{| I |=1}d_{I}\big(\ome^{\alp}\wed \zet^{I}_{\alp}\big) = \sum^{r}_{| I |=1}d_{I}\chi^{I l}\wed \ome_{l}
 = d_i\sum^{r-1}_{| I |=0}d_{I}\chi^{I li}\wed \ome_{l}=
d_i\sum^{r-1}_{|I |=0}d_{I}\chi^{I li}\wed dx^j \wed \ome_{lj} \\
= (-1)^{(n-1)+k-1}d_i\left[\sum^{r-1}_{|I |=0}(-1)^k d_{I}\chi^{{I}{[lj]}}\wed \ome_{lj}\right]\wed dx^i =
d_H \left[\sum^{r-1}_{| I |=0}(-1)^k d_{I}\chi^{I [lj]}\wed \ome_{lj}\right] .
\end{gather*}
We denote by $\mathfrak{R}(\rho)=\sum\limits^{r-1}_{|I |=0}(-1)^k d_{I}\chi^{I [lj]}\wed \ome_{lj}$ so that
\begin{gather*}
\sum^{r}_{|I |=1}d_{I}\big(\ome^{\alp}\wed \zet^{I}_{\alp}\big)=p_{k}dp_{k}\mathfrak{R}(\rho) .
\end{gather*}
\end{Example}

In the following we shall use such a splitting in order to split the vertical dif\/ferential of a~$p$-density and def\/ine a corresponding `momentum form'.

\begin{Theorem}\label{1}
Let $0\leq q\leq n$, $\Xi$ a $\pi$-projectable vector
field on $\bY$ and $\rho$ a differential $q$-form on $J^r\bY$. We
have
\begin{enumerate}\itemsep=0pt
\item[$1.$] For $0\leq q\leq n-1$
\begin{gather*}
 \hat{R}_q(\cL_{ J^r \Xi}[\rho] )\equiv L_{ J^{r+1}\Xi}h\rho \\
\hphantom{\hat{R}_q(\cL_{ J^r \Xi}[\rho] )}{} = J^{r+2} \Xi_H \rfloor d_H h\rho + J^{r+2}\Xi_{V} \rfloor \mathfrak{I }(d\rho)
 +d_H\big(J^{r+1}\Xi_V\rfloor \tilde{p}_{d_V h\rho} +J^{r+1} \Xi_H \rfloor h\rho\big),
\end{gather*}
where a `generalized momentum' is defined by
\begin{gather} \label{mom}
\tilde{p}_{d_V h\rho} = - p_1\mathfrak{R}(d\rho) .
\end{gather}

\item[$2.$] For $q=n$
\begin{gather*}
 \hat{R}_n(\cL_{ J^r \Xi}[\rho] ) \equiv L_{ J^{r+1}\Xi}h\rho = J^{r+2} \Xi_V \rfloor E_n
(h\rho) + d_H \big(J^{r+1}\Xi_V\rfloor p_{d_V h\rho} + J^{r+1}\Xi_H \rfloor h\rho\big),
\end{gather*}
where
\begin{gather} \label{momn}
p_{d_V h\rho} = - p_1 \cR(d\rho).
\end{gather}
\end{enumerate}
\end{Theorem}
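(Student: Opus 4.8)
The plan is to compute $\hat{R}_q(\cL_{J^r\Xi}[\rho]) = L_{J^{r+1}\Xi}(h\rho)$ directly from the Cartan ``magic formula'' $L_Z\eta = Z\rfloor d\eta + d(Z\rfloor\eta)$, and then to rewrite the outcome through the intrinsic operators $\mathfrak{I},\mathfrak{R}$ of Lemma~\ref{integration by part} (which reduce to $\cI,\cR$ when $q=n$), using formula~\eqref{pir} and Lemma~\ref{Krbek}. Since the interior product is additive and $\Xi$ is $\pi$-projectable, I would split the prolongation into its horizontal and vertical parts, $J^s\Xi = J^s\Xi_H + J^s\Xi_V$ (with $s$ taken, as usual, one order higher when the contact decomposition of $d(h\rho)$ is invoked), and evaluate the two contributions
\begin{gather*}
J^s\Xi_H\rfloor d(h\rho) + d\big(J^s\Xi_H\rfloor h\rho\big)
\qquad\text{and}\qquad
J^s\Xi_V\rfloor d(h\rho) + d\big(J^s\Xi_V\rfloor h\rho\big)
\end{gather*}
separately; their sum is $L_{J^{r+1}\Xi}(h\rho)$, and the claim is that the first reproduces $J^{r+2}\Xi_H\rfloor d_H h\rho$ together with $d_H(J^{r+1}\Xi_H\rfloor h\rho)$, and the second reproduces $J^{r+2}\Xi_V\rfloor\mathfrak{I}(d\rho)$ together with $d_H(J^{r+1}\Xi_V\rfloor\tilde{p}_{d_V h\rho})$.

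For the \emph{horizontal contribution} I would use that, $\Xi$ being $\pi$-projectable, $\Xi_H = \xi^i d_i$ with $\xi^i=\xi^i(x)$, so $L_{J^s\Xi_H}$ maps horizontal forms to horizontal forms; hence the first bracket equals its own $h$-component. Applying $h$, using $h\com d = d_H$ and that $h$ annihilates any $k$-contact form with $k\geq 1$, that bracket collapses to $J^{r+2}\Xi_H\rfloor d_H h\rho + d_H(J^{r+1}\Xi_H\rfloor h\rho)$, the first summand being absent when $q=n$ because $d_H h\rho$ then has horizontal degree $n+1$. For the \emph{vertical contribution}, $\Xi_V\rfloor h\rho = 0$ (a vertical field contracted into a horizontal form) and $\Xi_V\rfloor d_H h\rho = 0$, so only $J^{r+2}\Xi_V\rfloor p_1 d(h\rho)$ survives, $d(h\rho)$ carrying no contact component of degree $>1$. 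I would then decompose, by Lemma~\ref{integration by part} (resp.~\eqref{pir} for $q=n$), $p_1 d(h\rho) = \mathfrak{I}(d(h\rho)) + p_1 dp_1\mathfrak{R}(d(h\rho))$; the first summand contracts to $J^{r+2}\Xi_V\rfloor\mathfrak{I}(d\rho)$ — which for $q=n$ is $J^{r+2}\Xi_V\rfloor E_n(h\rho)$ — while the second, by Lemma~\ref{Krbek} with $i=1$, equals $-p_0 d\big(J^{r+1}\Xi_V\rfloor p_1\mathfrak{R}(d(h\rho))\big) = d_H\big(J^{r+1}\Xi_V\rfloor\tilde{p}_{d_V h\rho}\big)$ with $\tilde{p}_{d_V h\rho} = -p_1\mathfrak{R}(d\rho)$, i.e.\ \eqref{mom} (resp.\ \eqref{momn}, with $\mathfrak{R}=\cR$, for $q=n$). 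Summing the two contributions gives the two asserted formulas.

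I expect the main obstacle to be the justification of two identifications rather than the formal manipulation. First, $\mathfrak{R}$ (and $\cR$) is neither unique nor global, so I must verify that the expressions $d_H(J^{r+1}\Xi_V\rfloor p_1\mathfrak{R}(\cdot))$ and $\mathfrak{I}(\cdot)$ occurring above are well defined and may be written with argument $d\rho$ in place of $d(h\rho)$: this rests on the intrinsic character of $p_1 d(h\rho)$ and $\mathfrak{I}(d(h\rho))$ — hence of $p_1 dp_1\mathfrak{R}(d(h\rho))$ — on the identity $[d\rho]=[d(h\rho)]$ in $\cV^r_{q+1}$ valid for $q\leq n$, and on $\cI$ (hence $\mathfrak{I}$) descending to classes and being idempotent. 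Second, one must keep track of the prolongation orders claimed in the statement; the key point there is that $\mathfrak{I}(d\rho)$, in particular the Euler--Lagrange form $E_n(h\rho)$, is $\ome^\sigma$-generated, so in $J^{r+2}\Xi_V\rfloor\mathfrak{I}(d\rho)$ only the $\der/\der y^\sigma$-part of the prolonged field contributes. The remaining ingredient — that for $\pi$-projectable $\Xi$ the $1$-contact terms produced by the horizontal contribution cancel, equivalently that $L_{J^s\Xi_H}$ preserves horizontality — is a routine coordinate check.
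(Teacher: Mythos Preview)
Your proposal is correct and follows essentially the same route as the paper: Cartan formula for $L_{J^{r+1}\Xi}h\rho$, split into horizontal and vertical parts, reduce the vertical piece to $J^{r+2}\Xi_V\rfloor d_V h\rho$, decompose $d_V h\rho$ via $\mathfrak{I}+p_1dp_1\mathfrak{R}$ (resp.\ $\cI+\cR$ for $q=n$), and convert the residual term into a $d_H$ of the momentum using Lemma~\ref{Krbek}. The only organizational difference is that you argue the horizontal contribution via ``$L_{\Xi_H}$ preserves horizontality, so take $h$'', whereas the paper expands the cross terms $(\Xi_H+\Xi_V)\rfloor(d_H+d_V)+(d_H+d_V)(\cdots)$ directly and drops those that are contact or vanish; these are equivalent. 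Your flagged concern about writing $\mathfrak{I}(d\rho)$, $\mathfrak{R}(d\rho)$ rather than $\mathfrak{I}(d(h\rho))$, $\mathfrak{R}(d(h\rho))$ is well taken---the paper silently identifies $d_V h\rho$ with $p_1 d\rho$, which is literally correct only after passing to the class, so your justification via $[d\rho]=[d(h\rho)]$ and the fact that $\mathfrak{I}$ descends to classes is exactly what is needed to make the stated formula precise.
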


Notice, that the latter formula, written as
\begin{gather*}
L_{ J^{r+1}\Xi} \lambda = J^{r+2} \Xi_V \rfloor E_\lambda + d_H \phi_{\lambda, \Xi} ,
\end{gather*}
where
\begin{gather*}
\phi_{\lambda, \Xi} = J^{r+1}\Xi_V\rfloor p_{d_V h\rho} + J^{r+1}\Xi_H \rfloor h\rho ,
\end{gather*}
is obviously the inf\/initesimal f\/irst variation formula (\ref{1vf}). We observe that $p_{d_V h\rho} $ are momentum forms, and $\phi_{\lambda, \Xi}$ is the Noether current related with $\Xi$ and $\lambda$.

\begin{proof}
For all cases
by the Cartan formula
\begin{gather*}
L_{ J^{r+1}\Xi}h\rho = J^{r+1}\Xi \rfloor d h\rho +
d\big( J^{r+1}\Xi \rfloor h\rho\big)
\end{gather*}
we have
\begin{gather*}
 L_{ J^{r+1}\Xi} h\rho= \big(J^{r+2}\Xi_H + J^{r+2}\Xi_V\big)\rfloor (d_V h\rho + d_H h\rho) + (d_V+d_H)\big(\big(J^{r+1}\Xi_H + J^{r+1}\Xi_V\big)\rfloor h\rho\big) \\
\hphantom{L_{ J^{r+1}\Xi} h\rho}{}
 = d_{H}\big( J^{r+1}\Xi_{H} \rfloor h\rho\big)+J^{r+2}\Xi_{H} \rfloor d_{H}h\rho
+ J^{r+2}\Xi_{V} \rfloor d_{V}h\rho.
\end{gather*}
This expression can be further characterized in more detail.

Let $0 \leq q \leq n-1$. Since $d_V h\rho = p_1d\rho = \mathfrak{I }(d\rho) +
p_1 d p_1\mathfrak{R}(d\rho)$, and since by Lemma~\ref{Krbek},
\begin{gather*}
J^{r+2}\Xi_V\rfloor p_1dp_1 \mathfrak{R}(d\rho) = - p_0
d\big( J^{r+1}\Xi_V \rfloor p_1 \mathfrak{R}(d\rho) \big) = - d_H\big( J^{r+1}\Xi_V \rfloor p_1 \mathfrak{R}(d\rho) \big),
\end{gather*}
we can def\/ine a {\em momentum form associated with a density of degree $<n$} by (\ref{mom}) and obtain
\begin{gather*}
 L_{ J^{r+1}\Xi} h\rho=
d_{H}\big( J^{r+1}\Xi_{H} \rfloor h\rho\big)+J^{r+2}\Xi_{H} \rfloor d_{H}h\rho
+ J^{r+2}\Xi_{V} \rfloor (\mathfrak{I }(d\rho) +
p_1 d p_1\mathfrak{R}(d\rho)) \\
\hphantom{L_{ J^{r+1}\Xi} h\rho}{} = J^{r+2}\Xi_{V} \rfloor \mathfrak{I }(d\rho) + J^{r+2} \Xi_H \rfloor d_H
h\rho +d_H(J^{r+1}\Xi_V\rfloor \tilde{p}_{d_V h\rho} + J^{r+1}\Xi_H \rfloor
h\rho).
\end{gather*}

Let $q=n$. Then $d_Hh\rho = 0$, hence $J^{r+2} \Xi_H \rfloor d_{H}h\rho=0$. Next, again since $d_V h\rho = p_1d\rho = \cI (d\rho) + p_1dp_1\cR(d\rho)$ and since by Lemma~\ref{Krbek}, $J^{r+2}\Xi_V\rfloor p_1dp_1 \cR(d\rho) = - d_H( J^{r+1}\Xi_V \rfloor p_1 \cR(d\rho) )$, we def\/ine a local momentum form by~(\ref{momn}), and obtain
\begin{gather*}
 L_{ J^{r+1}\Xi} h\rho = J^{r+2}\Xi_{V} \rfloor (\cI (d\rho) - d_H\big( J^{r+1}\Xi_V \rfloor p_1 \cR(d\rho) \big) + d_{H}( J^{r+1}\Xi_{H} \rfloor h\rho)
\\
\hphantom{L_{ J^{r+1}\Xi} h\rho}{}
= J^{r+2}\Xi_{V} \rfloor \cI (d\rho) + d_{H}\big( J^{r+1}\Xi_{V} \rfloor p_{d_{V}h\rho}+ J^{r+1}\Xi_H \rfloor h\rho\big)\\
\hphantom{L_{ J^{r+1}\Xi} h\rho}{}
= J^{r+2}\Xi_{V} \rfloor \cI (d\rho) + d_{H} \phi_{\lambda, \Xi} .
\end{gather*}
However, $\phi_{\lambda, \Xi}$ is a horizontal form, and in Takens representation, $\cI (d\rho) = E_n(h\rho)$
and $d_{H} \phi_{\lambda, \Xi} = hd\phi_{\lambda, \Xi} = E_{n-1}( \phi_{\lambda, \Xi})$,
proving the theorem.
\end{proof}

\begin{Theorem}\label{HigherLieDer0}
Let $q= n + k$, $ k \geq 1$. Let $ \Xi$
be a $\pi$-vertical vector field on $\bY$. We have
\begin{gather*}
\hat{R}_{n+k}(\cL_{ J^r \Xi}[\rho]) = J^{s+1}\Xi \rfloor p_{k+1}
d(\cI(\rho) ) + p_k d \big( J^{s}\Xi \rfloor \cI( \rho)\big) .
\end{gather*}
\end{Theorem}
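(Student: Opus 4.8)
The plan is to mimic the $q=n$ part of the proof of Theorem~\ref{1}, but now working with the interior Euler operator $\cI$ in place of $h$ and with the $k$-contact level in place of the $0$-contact level. Recall that $\hat R_{n+k}(\cL_{J^r\Xi}[\rho]) = L_{J^s\Xi}\cI(\rho)$ with $s = 2r+1$ (here $\Xi$ is $\pi$-vertical, so $\Xi = \Xi_V$ and there is no horizontal contribution to worry about from the vector field). First I would apply the Cartan magic formula to the representing form $\cI(\rho)$, which is a perfectly ordinary (global, source) $(n+k)$-form:
\begin{gather*}
L_{J^s\Xi}\cI(\rho) = J^s\Xi \rfloor d\cI(\rho) + d\big(J^s\Xi \rfloor \cI(\rho)\big).
\end{gather*}

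Next I would decompose both terms into contact components by applying the splitting \eqref{splitting of forms} after a suitable jet projection. Since $\cI(\rho)$ is $k$-contact and $\Xi$ is $\pi$-vertical, the contraction $J^s\Xi \rfloor \cI(\rho)$ is $(k-1)$-contact, so $d(J^s\Xi \rfloor \cI(\rho))$ has its lowest contact component in degree $k-1$; I want to extract its $k$-contact part, which is $p_k d(J^s\Xi \rfloor \cI(\rho))$. Likewise $d\cI(\rho)$ has lowest contact degree $k$ (its $k$-contact part is $\cI(d\rho)$ by Theorem~\ref{ThLepdeg}(2) applied to the Lepage form $\cI(\rho)$, or directly, since $\cI(\rho)$ is a canonical source form so $p_kd\cI(\rho) = \cI(d\cI(\rho)) = \cI(d\rho)$), and contracting with the vertical $\Xi$ drops this by one, giving lowest contact degree $k-1$. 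The key point is that the left-hand side $L_{J^s\Xi}\cI(\rho)$, being a class representative via $\hat R_{n+k}$, is itself $k$-contact up to projection (it is $R_{n+k}$ applied to a class, hence $\cI$ of something, hence a source form of degree $n+k$, which is $k$-contact). Therefore taking the $p_{k}$ component of both sides of the Cartan formula, the right-hand side becomes
\begin{gather*}
p_k\big(J^s\Xi \rfloor d\cI(\rho)\big) + p_k d\big(J^s\Xi \rfloor \cI(\rho)\big).
\end{gather*}
For the first of these I would invoke Lemma~\ref{Krbek} (the second displayed identity with $i = k$ and the form $\cI(\rho)$, suitably pulled back): $L_{J^{s'}\Xi}(\pi^*)p_k\cI(\rho) = J^{s'}\Xi \rfloor p_{k+1}dp_k\cI(\rho) + p_k d(J\Xi \rfloor p_k\cI(\rho))$. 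Since $p_k\cI(\rho) = \cI(\rho)$, this rearranges to identify $J^s\Xi\rfloor p_{k+1}d\cI(\rho)$ as the $k$-contact piece coming from $J^s\Xi\rfloor d\cI(\rho)$; noting $p_{k+1}d\cI(\rho) = p_{k+1}d(\cI(\rho))$ is exactly the term appearing in the statement.

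Assembling, $p_k(J^s\Xi\rfloor d\cI(\rho)) = J^{s+1}\Xi\rfloor p_{k+1}d(\cI(\rho))$ modulo the $d(J\Xi\rfloor p_k\cI(\rho))$ term that Lemma~\ref{Krbek} produces — but that term combines with the second summand $p_k d(J^s\Xi\rfloor\cI(\rho))$ (since $p_k\cI(\rho)=\cI(\rho)$), and one must check the bookkeeping of jet orders ($s \to s+1$) is consistent, which is where the canonical projections quietly enter. The main obstacle, as usual with these arguments, is getting the contact-degree truncation and the jet-prolongation orders to line up exactly: one has to be careful that the equality $L_{J^s\Xi}\cI(\rho) = p_k L_{J^s\Xi}\cI(\rho)$ holds only after a pullback $\pi^*_{s+1,s}$, and that Lemma~\ref{Krbek} is being applied at the correct order so that the two $p_k d(\dots)$ contributions genuinely cancel/combine rather than merely agreeing up to contact terms. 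Everything else is a direct substitution of $\cI(\rho)$ into the Cartan formula followed by the contact decomposition, exactly parallel to the $q=n$ case already carried out in Theorem~\ref{1}.
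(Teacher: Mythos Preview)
Your approach is correct and in fact more direct than the paper's. The cleanest version of what you describe is simply to apply the second identity of Lemma~\ref{Krbek} with $i=k$ directly to the form $\cI(\rho)$: since $\cI(\rho)$ is $k$-contact, $p_k\cI(\rho)=\cI(\rho)$, and the lemma reads (up to the usual pull-backs)
\[
L_{J^{s+1}\Xi}\,\cI(\rho)
= J^{s+1}\Xi\rfloor p_{k+1}d\bigl(\cI(\rho)\bigr)
+ p_k d\bigl(J^{s}\Xi\rfloor \cI(\rho)\bigr),
\]
which is exactly the claim. Your detour through the ordinary Cartan formula followed by taking $p_k$ and then invoking Lemma~\ref{Krbek} is harmless but redundant: Lemma~\ref{Krbek} \emph{is} the Cartan formula already projected to the $k$-contact level, so once you cite it there is nothing left to combine or cancel. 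The identification $p_k\bigl(J^s\Xi\rfloor d\cI(\rho)\bigr)=J^{s+1}\Xi\rfloor p_{k+1}d\cI(\rho)$ that you worry about is immediate from the fact that contraction with a vertical vector field lowers the contact degree by exactly one.

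By contrast, the paper takes a longer route: it writes $\cI(\rho)=p_k\rho - p_kdp_k\cR(\rho)$, first proves separately that $L_{J^{s+1}\Xi}\bigl(p_kdp_k\cR(\rho)\bigr)=0$ (using both identities of Lemma~\ref{Krbek}), then applies Lemma~\ref{Krbek} to $p_k\rho$ and substitutes $p_k\rho=\cI(\rho)+p_kdp_k\cR(\rho)$ back in, cancelling the residual contributions. Your argument bypasses the residual operator $\cR$ entirely, which is a genuine simplification; the paper's version has the minor advantage of making explicit that the Lie derivative annihilates the ``correction'' term $p_kdp_k\cR(\rho)$, but that fact is not needed for the statement itself.
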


\begin{proof}
For simplicity of notation, in what follows we omit the projections (so we write $p_k \rho$ instead of $\pi_{s,r+1}^*p_k \rho$, etc.).

Since $R_{n+k}([\rho])= \cI(\rho)= p_k \rho - p_k dp_k \cR(\rho)$, we have
\begin{gather*}
 \hat{R}_{n+k}(\cL_{ J^r\Xi}[\rho])= L_{ J^{s} \Xi}\cI(\rho) =
L_{ J^{s}\Xi} p_k \rho - L_{ J^{s}\Xi}p_k dp_k\cR(\rho) .
\end{gather*}

By applying Lemma \ref{Krbek} to $dp_k\cR(\rho)$ we have
\begin{gather*}
L_{ J^{s+1} \Xi} p_k dp_k\cR(\rho)= J^{s+1}
\Xi \rfloor p_{k+1} d p_{k} d p_{k} \cR(\rho) + p_k d\big( J^{s}\Xi
\rfloor p_k d p_k \cR(\rho)\big) .
\end{gather*}
Simple manipulations show that
\begin{gather*}
p_k d( J^{s} \Xi \rfloor p_k d p_k \cR(\rho)) = - J^{s+1} \Xi \rfloor p_{k+1} d p_{k} d p_{k} \cR(\rho),
\end{gather*}
so that
\begin{gather*}
L_{ J^{s+1} \Xi} p_k dp_k\cR(\rho)=0 .
\end{gather*}
On the other hand
\begin{gather*}
 L_{ J^{s}\Xi} p_k \rho= J^{s+1}\Xi \rfloor p_{k+1} d p_k \rho +
 p_k d \big( J^{s}\Xi\rfloor p_k \rho\big),
\end{gather*}
but, since $p_k\rho = \cI(\rho) +p_k dp_k\cR(\rho)$, by substituting
we get
\begin{gather*}
 J^{s+1}\Xi \rfloor p_{k+1} d p_k \rho= J^{s+1}\Xi \rfloor p_{k+1}d(\cI(\rho) +p_k dp_k\cR(\rho)) \\
\hphantom{J^{s+1}\Xi \rfloor p_{k+1} d p_k \rho}{}= J^{s+1}\Xi \rfloor p_{k+1} d(\cI(\rho) ) + J^{s+1}\Xi \rfloor p_{k+1} d(p_k dp_k\cR(\rho)) \\
\hphantom{J^{s+1}\Xi \rfloor p_{k+1} d p_k \rho}{}= J^{s+1}\Xi \rfloor p_{k+1} d(\cI(\rho) ) - p_k d \big(J^{s}\Xi \rfloor p_k dp_k\cR(\rho)\big).
\end{gather*}
Again by the same substitution in $p_k d ( J^{s}\Xi\rfloor p_k \rho)$, we get
\begin{gather*}
 p_k d \big( J^{s}\Xi \rfloor \cI( \rho)\big) + p_k d \big( J^{s}\Xi\rfloor p_k dp_k\cR( \rho)\big).
\end{gather*}
Summing up we f\/inally obtain
\begin{gather*}
 \hat{R}_{n+k}(\cL_{ J^{r}\Xi}[\rho])= L_{ J^{s}\Xi} p_k \rho= J^{s+1}\Xi \rfloor p_{k+1} d(\cI(\rho) ) +
p_k d ( J^{s}\Xi \rfloor \cI( \rho)) .\tag*{\qed}
\end{gather*}
\renewcommand{\qed}{}
\end{proof}

By the above theorem we get a generalization of the famous {\em Noether equation} and {\em Noether--Bessel-Hagen equation} (i.e., equation for symmetries of a~Lagrangian and of the Euler--Lagrange form) to a~{\em source form of any degree}. Indeed, a projectable vector f\/ield~$\Xi$ on~$\bY$ is a symmetry of $\cI(\rho)$, i.e., $L_{ J^s\Xi} \cI(\rho)= 0$ if and only if $L_{ J^{s+1}\Xi_V} \cI(\rho)= 0$, that is if and only if
\begin{gather*}
J^{s+1}\Xi _V \rfloor p_{k+1} d(\cI(\rho) ) + p_k d \big( J^{s}\Xi _V\rfloor \cI(\rho)\big) = 0 .
\end{gather*}

Theorems~\ref{1} and \ref{HigherLieDer0} are the desired generalization of the inf\/initesimal f\/irst variation formula~(\ref{1vf}) to forms of any degree. The `boundary term' on the right-hand side in Theorem~\ref{1} contains `generalized momenta', and provides a lower-degree analog of Noether theorem in the same way as in the classical case $q = n$ $(k = 0)$.

\begin{Lemma}
Let $q= n + k$, $ k \geq 1$, and let $\Xi$ be a $\pi$-projectable vector field on~$\bY$. We have
for every class~$[\rho]$ of degree~$n+k$
\begin{gather*}
\cL_{ J^{r+1} \Xi_{H} } [\rho] = 0.
\end{gather*}
\end{Lemma}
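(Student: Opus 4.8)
The plan is to reduce the statement to the fact that the horizontal Lie derivative acts trivially on contact classes. First I would recall that for a $\pi$-projectable vector field $\Xi$ on $\bY$, the prolongation $J^{r+1}\Xi$ is a contact symmetry, hence so is its horizontal part in the sense that $J^{r+1}\Xi_H = \xi^i d_i$ acts along the projection; by Proposition (the statement before Theorem~\ref{thmLie}) the Lie derivative $L_{J^{r+1}\Xi}$ preserves $\Theta^r_q$, and since $L_{J^{r+1}\Xi} = L_{J^{r+1}\Xi_H} + L_{J^{r+1}\Xi_V}$ with $L_{J^{r+1}\Xi_V}$ also preserving $\Theta^r_q$ (by the $p_k$-commutation relation displayed after Theorem~\ref{thmLie}), the horizontal piece $L_{J^{r+1}\Xi_H}$ preserves $\Theta^r_q$ as well, so $\cL_{J^{r+1}\Xi_H}[\rho]$ is well defined.

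Next I would compute $\cL_{J^{r+1}\Xi_H}[\rho]$ using the Cartan formula and the representation $R_{n+k}$. By definition $\hat R_{n+k}(\cL_{J^{r+1}\Xi_H}[\rho]) = L_{J^{s}\Xi_H}\cI(\rho)$ for the appropriate order $s$. Since $\Xi_H = \xi^i d_i$ is a horizontal vector field, $J^s\Xi_H \rfloor \cI(\rho)$ is obtained by contracting the $\der/\der x^i$ directions, and because $\cI(\rho)$ is $\omega^\sigma$-generated and $k$-contact of degree $n+k$, the contraction $d_i \rfloor \cI(\rho)$ is an $\omega^\sigma$-generated $k$-contact $(n+k-1)$-form whose $p_k$-component vanishes — indeed a source form of degree $n+k$ has its $n$ horizontal legs exhausted by $\omega_0$, so removing one $dx^i$ leaves a strongly contact form. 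Alternatively, and more cleanly, I would invoke Theorem~\ref{HigherLieDer0} with $\Xi_H$ in place of $\Xi$: the right-hand side is $J^{s+1}\Xi_H \rfloor p_{k+1} d(\cI(\rho)) + p_k d(J^s \Xi_H \rfloor \cI(\rho))$, and I would argue each term lies in the kernel of $R_{n+k+1}$, hence represents the zero class; but the sharper claim here is that $L_{J^s\Xi_H}\cI(\rho)$ is itself strongly contact, so its class is literally $0$ in $\cV^r_{n+k}$.

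The cleanest route is therefore: write $L_{J^s\Xi_H}\cI(\rho) = J^{s+1}\Xi_H \rfloor d\,\cI(\rho) + d(J^s\Xi_H\rfloor \cI(\rho))$; the second summand is $d$ of a strongly contact form hence strongly contact; for the first summand, observe $d\,\cI(\rho) = \cI(d\rho) + (\text{at least }(k+2)\text{-contact})$ (a Lepage-type decomposition, cf.\ the properties of $\cI$), and contracting the horizontal vector field $J^{s+1}\Xi_H$ into a $k$-contact form lowers the horizontal degree, producing again a strongly contact $(n+k)$-form. Thus $L_{J^s\Xi_H}\cI(\rho) \in \Theta^s_{n+k}$, i.e.\ $\cL_{J^{r+1}\Xi_H}[\rho] = 0$. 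The main obstacle is bookkeeping the contact degrees under contraction by a horizontal vector field along the projection — one must be careful that $\Xi_H$ is a vector field \emph{along} $\pi_{r+1,r}$, not an ordinary vector field, so the contraction and the $p_k$-operators do not commute naively; handling this requires pulling back to a high enough jet order so that all operations are defined, exactly as in the proof of Theorem~\ref{1}. Once that is set up, the vanishing is immediate because source forms are ``maximally horizontal'' among $k$-contact forms of their degree, so any horizontal contraction kills the leading contact component.
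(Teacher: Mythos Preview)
Your approach is essentially the paper's: both apply the Cartan formula to $L_{J^s\Xi_H}\cI(\rho)$ and argue that each summand lands in $\Theta^{s+1}_{n+k}$ by contact-degree counting. Two small corrections are in order, though. First, the decomposition $d\cI(\rho) = \cI(d\rho) + (\text{at least }(k+2)\text{-contact})$ is neither correct nor needed: the remainder $d\cI(\rho) - \cI(d\rho) = p_{k+1}dp_{k+1}\cR(d\cI(\rho))$ is only $(k+1)$-contact, not $(k+2)$-contact. The paper avoids this detour by observing directly that $\cI(\rho)$ is already $n$-horizontal, so $d_H\cI(\rho)=0$ and hence $d\cI(\rho)=d_V\cI(\rho)=p_{k+1}d\cI(\rho)$ is purely $(k+1)$-contact; contracting with the horizontal $\Xi_H$ (which annihilates contact $1$-forms) then preserves this degree, yielding a strongly contact $(n+k)$-form. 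Second, your phrase ``$d$ of a strongly contact form hence strongly contact'' is not literally true --- $d$ of a $k$-contact $(n+k-1)$-form may have a nonzero $p_k$ component --- but the conclusion you need is the weaker (and true) statement that it lies in $\Theta^{s+1}_{n+k} = \Lambda^{s+1}_{n+k,c} + (d\Lambda^{s+1}_{n+k-1,c})$ by definition. With these two fixes your argument coincides with the paper's.
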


\begin{proof}
In fact, $ J^{s+1} \Xi_H \rfloor d _H \cI(\rho) = 0$ because $d_H \cI(\rho)$ is
the horizontal dif\/ferential of an $n$-horizontal form. On the
other hand since $d_V \cI(\rho) = p_{k+1 }d (\cI(\rho))$ and $p_{k+1
}d \circ p_{k+1 }d = 0$, we have $ J^{s+1} \Xi_H \rfloor d_V \cI(\rho)$ $=$ $J^{s+1} \Xi_H \rfloor p_{k+1 }d (\cI(\rho))$ (a $(k+1)$-contact and
$(n-1)$-horizontal $(n+k)$-form); therefore $\hat{R}_{n+k} (\cL_{ J^{r+1}
\Xi_{H} } [\rho] ) $ $=$ $J^{s+1} \Xi_H \rfloor p_{k+1 }d (\cI(\rho))) +
 p_k d(J^{s+1} \Xi_H \rfloor \cI(\rho)) \in \Thd_{n+k}^{s+1}$.
\end{proof}

Since $p_{k+1} d(\cI(\rho) )= \cI(d(\cI(\rho) )) =R_{q+1} (\cE_{q}([\rho] ))$, taking into account Theorem~\ref{ThRd}, we f\/inally obtain a `quotient' Cartan formula for the variational Lie derivative of $n+k$ classes, with $k\geq 1$.
\begin{Theorem}\label{HigherLieDer}
Let $ k \geq 1$. Let $\Xi$ be a $\pi$-projectable vector field on $\bY$.
For every class $[\rho] \in \Lambda^r_{n+k} / \Theta^r_{n+k}$ we have $($up to pull-backs$)$
\begin{gather*}
\cL_{ J^r \Xi}[\rho] = J^{r+1}\Xi_V \rfloor \cE_{n+k}([\rho] ) +
\cE_{n+k-1}\big( J^{r+1} \Xi_V \rfloor [\rho] \big).
\end{gather*}
\end{Theorem}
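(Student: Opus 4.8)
The plan is to bootstrap from the $\pi$-vertical case already settled in Theorem~\ref{HigherLieDer0}, reinterpret its two boundary-type terms as interior products of prolonged vector fields with variational classes, and then pass from $\pi$-vertical to $\pi$-projectable $\Xi$ by killing the horizontal part with the Lemma immediately preceding the statement.

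\emph{Step 1: the vertical case on classes.} Let first $\Xi$ be $\pi$-vertical. Theorem~\ref{HigherLieDer0} gives
\begin{gather*}
\hat{R}_{n+k}\big(\cL_{J^r\Xi}[\rho]\big) = J^{s+1}\Xi\rfloor p_{k+1}d(\cI(\rho)) + p_k d\big(J^s\Xi\rfloor\cI(\rho)\big),
\end{gather*}
and since the left-hand side is the canonical representative $R_{n+k}$ of the class $\cL_{J^r\Xi}[\rho]$, I would pass to classes on both sides. For the first summand I would invoke the identity $p_{k+1}d(\cI(\rho)) = \cI(d(\cI(\rho))) = R_{n+k+1}(\cE_{n+k}([\rho]))$ recalled just before the statement (which, via $d\Theta^r_q\subseteq\Theta^r_{q+1}$ and Theorem~\ref{ThRd}, expresses the commutativity of the variational morphisms with the representations); together with the definition~\eqref{cocl} of the interior product of a prolongation with a class this identifies $\big[J^{s+1}\Xi\rfloor p_{k+1}d(\cI(\rho))\big]$ with $J^{r+1}\Xi\rfloor\cE_{n+k}([\rho])$. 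For the second summand, $J^s\Xi\rfloor\cI(\rho)$ is a $(n+k-1)$-form whose class is, by~\eqref{cocl} again, exactly $J^{r+1}\Xi\rfloor[\rho]$; hence the class of its exterior derivative is $\cE_{n+k-1}\big(J^{r+1}\Xi\rfloor[\rho]\big)$, and since any $(n+k)$-form is automatically at least $k$-contact while its contact components of order $>k$ are strongly contact, hence lie in $\Theta^s_{n+k}$, one gets $\big[p_kd(J^s\Xi\rfloor\cI(\rho))\big]=\big[d(J^s\Xi\rfloor\cI(\rho))\big]=\cE_{n+k-1}\big(J^{r+1}\Xi\rfloor[\rho]\big)$. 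As here $\Xi=\Xi_V$, this already proves the formula for $\pi$-vertical $\Xi$.

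\emph{Step 2: from vertical to projectable.} For an arbitrary $\pi$-projectable $\Xi$ I would use the splitting $J^{r+1}\Xi = J^{r+1}\Xi_H + J^{r+1}\Xi_V$ of the prolonged field into its horizontal and vertical parts along $\pi_{r+1,r}$ (Section~\ref{section2}) together with additivity of the Lie derivative (Cartan's formula), so that $\cL_{J^r\Xi}[\rho] = \cL_{J^{r+1}\Xi_H}[\rho] + \cL_{J^{r+1}\Xi_V}[\rho]$. The first term vanishes by the Lemma. For the second, the argument of Step~1 (equivalently of Theorem~\ref{HigherLieDer0}) applies verbatim with $\Xi_V$ in place of a vertical field on $\bY$, since it only uses Lemma~\ref{Krbek}, which concerns the vertical part and is already applied to $\Xi_V$ in the proof of Theorem~\ref{1}. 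Hence $\cL_{J^r\Xi}[\rho] = J^{r+1}\Xi_V\rfloor\cE_{n+k}([\rho]) + \cE_{n+k-1}\big(J^{r+1}\Xi_V\rfloor[\rho]\big)$, all equalities being understood up to the pull-backs forced by the various jet orders.

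\emph{Expected obstacle.} The substantive work has already been packaged into Theorem~\ref{HigherLieDer0} and the preceding Lemma, so the remaining difficulty is essentially bookkeeping: keeping track of the orders $r$, $s=2r+1$, $s+1$ so that the pull-backs match, and treating $\Xi_H,\Xi_V$ as fields along projections when applying~\eqref{cocl} and Lemma~\ref{Krbek}, which are stated for fields on $\bY$. The one point needing a genuine (if short) argument rather than a citation is that replacing a $(n+k)$-form by its lowest contact component $p_k$ does not alter the variational class — this is precisely the description of $\Theta^r_{n+k}$ as strongly contact $(n+k)$-forms plus $d$ of strongly contact $(n+k-1)$-forms, which I would spell out when identifying $[p_kd(J^s\Xi\rfloor\cI(\rho))]$ with $\cE_{n+k-1}(J^{r+1}\Xi\rfloor[\rho])$.
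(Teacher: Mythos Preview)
Your proposal is correct and follows essentially the same route as the paper: split $\cL_{J^r\Xi}[\rho]$ into horizontal and vertical parts, annihilate the horizontal contribution via the preceding Lemma, and read off the two terms from Theorem~\ref{HigherLieDer0} using the identity $p_{k+1}d(\cI(\rho)) = R_{n+k+1}(\cE_{n+k}([\rho]))$ stated just before the theorem together with the definition~\eqref{cocl}. You have in fact spelled out more of the translation between the form-level identities of Theorem~\ref{HigherLieDer0} and the class-level statement than the paper does, which simply says ``the result follows from the above lemma and theorem''; your explicit handling of $[p_kd(\cdot)] = [d(\cdot)]$ via the description of $\Theta^s_{n+k}$ is a welcome clarification.
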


\begin{proof}
Since $\Xi$ is assumed to be a $\pi$-projectable vector f\/ield on~$\bY$, up to projections,
\begin{gather*}
\hat{R}_{n+k}(\cL_{ J^{r}\Xi}[\rho]) = \hat{R}_{n+k}(\cL_{ J^{r+1}\Xi_V} [\rho]) + \hat{R}_{n+k}(\cL_{ J^{r+1}\Xi_H }[\rho])
\end{gather*}
holds true and the result follows from the above lemma and theorem (the latter applied for the vertical part of~$J^r \Xi$).
\end{proof}

\section{Some applications}\label{section5}

\subsection{Helmholtz forms in mechanics} \label{Hel}

As mentioned above, for $k \geq 2$, a class $[\rho] \in \cV^r_{n+k}$ in the variational sequence can be represented by dif\/ferent source forms. We shall illustrate this feature on the image $\cE_{n+1}([\varepsilon]) = [d \varepsilon] \in \cV^2_{n+2}$, where $\varepsilon$ is a second-order dynamical form, in the case when $n = 1$ (mechanics).

Let $n = \dim X = 1$; as above, denote local f\/ibred coordinates on $\bY$ by $(t,q^\sigma)$ and the associated coordinates on $J^3\bY$ by $(t,q^\sigma,\dot q^\sigma, \ddot q^\sigma, \dddot q^\sigma)$, and let
\begin{gather*}
\omega^\sigma = dq^\sigma - \dot q^\sigma dt,
\qquad
\dot \omega^\sigma = d \dot q^\sigma - \ddot q^\sigma dt,
\qquad
\ddot \omega^\sigma = d \ddot q^\sigma - \dddot q^\sigma dt.
\end{gather*}
Consider a dynamical form $\varepsilon \in \Lambda^2_2$, $\varepsilon = E_\sigma \omega^\sigma \land dt$. Then $[d\varepsilon]$ is represented by the following canonical source form (called {\em canonical Helmholtz form}), obtained by a straightforward calculation:
\begin{gather*}
H_{\varepsilon} = \cI(d \varepsilon) = \frac{1}{2} \left( \left(\frac{\der E_\sigma}{\der q^\nu} - \frac{\der E_\nu}{\der q^\sigma} - \frac{1}{2} \frac{d}{dt} \left(\frac{\der E_\sigma}{\der \dot q^\nu} - \frac{\der E_\nu}{\der \dot q^\sigma}\right) +
\frac{1}{2} \frac{d^2}{dt^2} \left(\frac{\der E_\sigma}{\der \ddot q^\nu} - \frac{\der E_\nu}{\der \ddot q^\sigma}\right) \right) \omega^\nu\right.
\\
\left.\hphantom{H_{\varepsilon} =}{}
+ \left(\frac{\der E_\sigma}{\der \dot q^\nu} + \frac{\der E_\nu}{\der \dot q^\sigma} - 2 \frac{d}{dt} \frac{\der E_\nu}{\der \ddot q^\sigma}\right) \dot \omega^\nu +
\left(\frac{\der E_\sigma}{\der \ddot q^\nu} - \frac{\der E_\nu}{\der \ddot q^\sigma} \right) \ddot \omega^\nu
\right) \land \omega^\sigma \land dt .
\end{gather*}
By construction $H_\varepsilon \in \Lambda^5_3$, however, it is projectable to $J^4\bY$.

From the exactness of the representation sequence we get that the condition $E_2(\varepsilon) = H_\varepsilon = 0$ is necessary and suf\/f\/icient for existence of a local Lagrangian $\lambda$ such that $\varepsilon = E_1(\lambda) = E_\lambda$. Expressed in terms of the components of $H_\varepsilon$ the condition $H_\varepsilon = 0$ gives the celebrated {\em Helmholtz conditions} of local variationality for dynamical forms, in the context of the variational sequence called also {\em source constraints}.

The canonical Helmholtz form is not the unique source form providing Helmholtz conditions. Another distinguished Helmholtz form is~\cite{Kru04}
\begin{gather*}
\bar H_\varepsilon = \frac{1}{2} \left( \left(\frac{\der E_\sigma}{\der q^\nu} - \frac{\der E_\nu}{\der q^\sigma} - \frac{1}{2} \frac{d}{dt} \left(\frac{\der E_\sigma}{\der \dot q^\nu} - \frac{\der E_\nu}{\der \dot q^\sigma}\right) \right) \omega^\nu\right.
\\
\left. \hphantom{\bar H_\varepsilon =}{}
+ \left(\frac{\der E_\sigma}{\der \dot q^\nu} + \frac{\der E_\nu}{\der \dot q^\sigma} - \frac{d}{dt} \left(\frac{\der E_\sigma}{\der \ddot q^\nu} + \frac{\der E_\nu}{\der \ddot q^\sigma} \right) \right) \dot \omega^\nu +
\left(\frac{\der E_\sigma}{\der \ddot q^\nu} - \frac{\der E_\nu}{\der \ddot q^\sigma} \right) \ddot \omega^\nu
\right) \land \omega^\sigma \land dt .
\end{gather*}
$\bar H_\varepsilon$ is of order $3$, and is, indeed, equivalent with $H_\varepsilon$. Note that it provides Helmholtz conditions in an equivalent and more simple form. The relationship between the canonical and the ``reduced'' Helmholtz form is as follows~\cite{KP07}:
\begin{gather*}
\bar H_\varepsilon = H_\varepsilon + p_2d\eta,
\end{gather*}
where
\begin{gather*}
\eta = - \frac{1}{4} \frac{d}{dt}\left(\frac{\der E_\sigma}{\der \ddot q^\nu} - \frac{\der E_\nu}{\der \ddot q^\sigma} \right) \omega^\nu \land \omega^\sigma .
\end{gather*}

It is worth note that in the most frequent case when the components of $\varepsilon$ are af\/f\/ine in the second derivatives with {\em symmetric coefficients}, i.e.,
\begin{gather*}
E_\sigma = A_\sigma + B_{\sigma \nu} \ddot q^\nu, \qquad B = B^T ,
\end{gather*}
the canonical Helmholtz form $H_\varepsilon$ is projectable to $J^3\bY$ and coincides with the ``reduced'' form~$\bar H_\varepsilon$.

Asking about Helmholtz forms of {\em variational order~$1$}, that is, coming from the variational sequence of order~$1$, we get the following result: {\em The Helmholtz form $H_\varepsilon$ has variational order~$1$ if and only if it is projectable onto~$J^2\bY$, and this is if and only if the following conditions are satisfied:}
\begin{gather} \label{HB}
B = B^T , \qquad
\frac{\partial B_{\sigma \nu}}{\partial \dot q^\rho} = \frac{ \partial B_{\sigma \rho}}{\partial \dot q^\nu} .
\end{gather}
Note that:
\begin{itemize}\itemsep=0pt
\item Conditions (\ref{HB}) are a part of the Helmholtz conditions, i.e., they are necessary for $\varepsilon$ be locally variational.

\item Conditions (\ref{HB}) are integrability conditions, guaranteeing existence of a function $L$ such that
\begin{gather*}
B_{\sigma \nu} = - \frac{\partial^2L}{\partial \dot q^\sigma \partial \dot q^\nu}
\end{gather*}
(the minus sign is chosen just for convenience). $L$ is not generally a Lagrangian for $\varepsilon$ (indeed, the form $\varepsilon$ need not be variational, $H_\varepsilon$ need not be equal to~$0$). However, with help of the Euler--Lagrange form $E_\lambda$ of $\lambda = L dt$ we get $\varepsilon = E_\lambda - F$, so that in coordinates equations represented by $\varepsilon$ take the form
\begin{gather*}
\frac{\partial L}{\partial y^\sigma} - \frac{d}{dt} \frac{\partial L}{\partial \dot y^\sigma} = F_\sigma .
\end{gather*}
We can conclude that the requirement that~$H_\varepsilon$ has variational order one means that the corresponding equations are so-called {\em Lagrange equations of the second kind}.

\item It can be easily shown that the above equations are variational, i.e., $H_\varepsilon = 0$ if and only if the force $F$ has the form of a Lorentz-like force (as, e.g., Lorentz force, Coriolis force, etc.).
\end{itemize}

\subsection{Inverse problems in the variational sequence}

Perhaps the best known problems which are solved with help of variational sequences and bicomplexes are the {\em local and global inverse problem of the calculus of variations}, and the problem of {\em local and global triviality of Lagrangians}. Let us show how these problems appear and are solved with help of the variational sequence and the representation sequences.

$\bullet$ {\em Variationally trivial Lagrangians}, also called {\em null-Lagrangians} are Lagrangians which give rise to identically zero Euler--Lagrange expressions (meaning that the Euler--Lagrange form $E_\lambda$ identically vanishes). This problem concerns the classes of degree~$n-1$, $n$ and~$n+1$ and the corresponding morphisms ${\cal E}_{n-1}\colon {\cal V}^r_{n-1} \to {\cal V}^r_n$, and
${\cal E}_{n}\colon {\cal V}^r_{n} \to {\cal V}^r_{n+1}$ (the Euler--Lagrange mapping).

Consider the variational sequence $0 \to \R_{\bY} \to {\cal V}^r_*$, and assume that a class $[\rho] \in {\cal V}^r_{n}$ satisf\/ies ${\cal E}_n([\rho]) = 0$. Then by exactness of the variational sequence there exists a class $[\eta] \in {\cal V}^r_{n-1}$ such that ${\cal E}_{n-1}([\eta]) = [\rho] = [d\eta]$. The condition ${\cal E}_n([\rho]) = 0$ is the local variational triviality condition and determines the structure of Lagrangians whose
Euler--Lagrange forms vanish identically: every variationally trivial Lagrangian locally is a closed $n$-form $d\eta$ modulo contact forms, in other words, $\lambda = hd\eta = d_H \eta$. In coordinates, if $\eta= f^i \omega_i + \cdots$ then $L = d_i f^i$. If, in addition, $H^{n}_{\rm dR}\bY= \{0\}$ then $\eta$ may be chosen globally def\/ined on
$J^r\bY$.

In Takens representation the same result follows by the following arguments: If $[\rho] \in {\cal V}^r_{n}$ satisf\/ies ${\cal E}_n([\rho]) = 0$ then $\lambda = R_n([\rho]) = h\rho$ satisf\/ies the constraint on local variational triviality $E_n(h\rho) = 0$, which means that locally $h\rho = R_{n-1}[(d\eta]) = hd\eta$. A corresponding local Lagrangian is then $\lambda = R_n({[\eta]}) = h\eta$. Condition $H^{n}_{\rm dR}\bY= \{0\}$ then guarantees existence of a global $\eta$, hence of a global Lagrangian~$h\eta$.

In Lepage representation, the local triviality condition ${\cal E}_n([\rho]) = 0$ means that (any) Lepage $n$-form
$\rho \in [\rho] $ (Lepage equivalent of $\lambda = h\rho$) is closed. Hence locally $\rho = d\eta$, i.e., $\lambda = hd\eta =( d_H \eta =) hdh\eta = d_if^i \omega_0$, and $\eta$, resp.\ the relevant part $h\eta$, is constructed from $\rho$ with help of the Poincar\'e contact homotopy operator as
$\eta = {\cal A} \rho$, resp.\ $h\eta = f^i \omega_i = {\cal A} \lambda$.

$\bullet$ By the {\em inverse problem of the calculus of variations}, here one understands the question about local and global variationality of a dynamical form $\varepsilon$. This means the problem to determine constraints on variationality (Helmholtz conditions), and for a locally variational form, to construct a local Lagrangian. This problem concerns the classes of degree $n$, $n+1$ and $n+2$ in the variational sequence, and the corresponding morphisms ${\cal E}_{n}\colon {\cal V}^r_{n} \to {\cal V}^r_{n+1}$ (the Euler--Lagrange morphism) and ${\cal E}_{n+1}\colon {\cal V}^r_{n+1} \to {\cal V}^r_{n+2}$ (the Helmholtz morphism).

Consider the variational sequence $0 \to \R_{\bY} \to {\cal V}^r_*$, and assume that a class $[\rho] \in {\cal V}^r_{n+1}$ satisf\/ies ${\cal E}_{n+1}([\rho]) = 0$. Then by exactness of the variational sequence there exists a class $[\eta] \in {\cal V}^r_{n}$ such that ${\cal E}_{n}([\eta]) = [\rho] = [d\eta]$. The condition ${\cal E}_{n+1}([\rho]) = 0$ is the local variationality condition and determines the structure of locally variational forms (giving rise to variational equations): Every locally variational form $\varepsilon$ of {\em variational order} $r$ is a closed $(n+1)$-form $d\eta$ modulo strongly contact forms and exterior derivatives of contact forms of order $r$. Remarkably, this factorization procedure determines the structure of locally variational forms of the {\em variational order} $r$, i.e., produces constraints for a locally variational form of order $s$ to come from a local Lagrangian of order $r$ ({\em order reduction constraints}). If, in addition, $H^{n+1}_{\rm dR}\bY= \{0\}$ then $\varepsilon$ is globally variational.

In Takens representation, if $[\rho] \in {\cal V}^r_{n+1}$ satisf\/ies ${\cal E}_{n+1}([\rho]) = 0$ then the dynamical form $\varepsilon = R_{n+1}([\rho] )= {\cal I}\rho$ satisf\/ies $E_{n+1}(\varepsilon) = 0$ ({\em Helmholtz conditions}). Hence we can say:

{\em A dynamical form $\varepsilon$ is locally variational if and only if its Helmholtz form $H_\varepsilon = E_{n+1} (\varepsilon)$ vanishes.}

Then locally $\varepsilon = R_{n}[(d\eta]) = {\cal I}d\eta$, providing a Lagrangian for $\varepsilon$.

In Lepage representation, the local variationality condition ${\cal E}_{n+1}([\rho]) = 0$ means that (any) Lepage form $\rho \in [\rho] $ (Lepage equivalent of $\varepsilon = {\cal I} \rho = p_1d\rho$) is closed. Hence locally $\rho = d\eta$, i.e., $\varepsilon = p_1 \rho = p_1d\eta$, and $\eta$, respectively, a Lagrangian $\lambda$ for $\varepsilon$ is constructed from $\rho$ with help of the Poincar\'e contact homotopy operator as $\eta = A\rho$, respectively, $\lambda = {\cal A}\varepsilon$ ({\em Tonti Lagrangian}).

In \cite{KrMa10,Ma09} the same questions concerning $(n+2)$-forms have been studied for the case of mechanics ($\dim \bX = 1$). Among others, explicit formulae for constraints on local `variational triviality' of Helmholtz-like forms, $E_{n+2}(\rho) = 0$, generalizing Helmholtz conditions to forms of degree $n+2$, as well as Tonti-like formulas for the corresponding (nonvariational) dynamical forms, have been found.

In the same way, the inverse problems, the equivalence problems, and the corresponding order reduction problems in the local and global form extend to {\em each column of the variational sequence}, and are solved by the exactness of the sequence. A practical question, however, is to f\/ind {\em explicit formulae for source constraints}, i.e., constraints on a source form (of any degree) to be variationally trivial (i.e., to belong to the kernel of the corresponding morphism in the variational sequence). The problem in its full generality is solved easily with help of the Lepage representation of the variational sequence, where it is transformed to the Poincar\'e lemma:

\begin{Theorem}[general inverse variational problem] \label{GHC}
Let $\varepsilon$ be a source form of degree $n+k$, $k \geq 1$.
The following conditions are equivalent:
\begin{enumerate}\itemsep=0pt
\item[$(1)$] $\varepsilon$ is locally variationally trivial, i.e., belongs to $\Ker E_{n+k}$,

\item[$(2)$] $\varepsilon$ has a closed Lepage equivalent,

\item[$(3)$] every Lepage equivalent $\alpha$ of $\varepsilon$ satisfies $p_{k+1}d\alpha = 0$.
\end{enumerate}

If $\varepsilon$ is locally variationally trivial then a corresponding local primitive source $(n+k-1)$-form~$\eta$ $($satisfying $E_{n+k-1}(\eta) = \varepsilon)$ is $\eta = {\cal A} \varepsilon$.

If $H_{\rm dR}^{n+k} \bY = \{0\}$ then $\varepsilon$ is globally variationally trivial, with a global primitive of order $\leq 2r+1$, where $r$ is the variational order of $\varepsilon$.
\end{Theorem}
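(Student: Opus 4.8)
The plan is to establish the equivalence of the three conditions by connecting them through the Lepage representation and then reduce the existence of primitives to the Poincar\'e lemma for the exterior derivative. First I would recall from Theorem~\ref{ThLepdeg} that a source form $\varepsilon$ of degree $n+k$ (being $k$-contact, hence $p_k\varepsilon = \varepsilon$ and $\cI(\varepsilon) = \varepsilon$ if $\varepsilon$ is canonical; in general one replaces $\varepsilon$ by its canonical representative and uses Proposition~\ref{propsource}) admits local Lepage equivalents, all of the form $\alpha = \theta_{\varepsilon} + d\nu + \mu$ as in~\eqref{Lepeqdeg}, and that they all satisfy $p_{k+1}d\alpha = \cI(d\varepsilon) = \cI(d(\cI(\varepsilon)))$ by~\eqref{Lepeqdeg2}. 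Since in the Takens representation $E_{n+k}(\varepsilon) = \cI(d(\cI(\varepsilon)))$, this immediately gives the equivalence $(1)\Leftrightarrow(3)$: $\varepsilon \in \Ker E_{n+k}$ iff $\cI(d\varepsilon) = 0$ iff $p_{k+1}d\alpha = 0$ for every (equivalently, for one) Lepage equivalent $\alpha$.

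Next I would prove $(1)\Rightarrow(2)$ and $(2)\Rightarrow(3)$. For $(2)\Rightarrow(3)$: if $\alpha_0$ is a \emph{closed} Lepage equivalent of $\varepsilon$, then $p_{k+1}d\alpha_0 = 0$, and since by Theorem~\ref{ThLepdeg}(4) any two Lepage equivalents of the same canonical source form differ by $d\nu + \mu$ with $\nu$ at least $(k+1)$-contact and $\mu$ at least $(k+2)$-contact, the quantity $p_{k+1}d\alpha = \cI(d\varepsilon)$ is independent of the choice of Lepage equivalent; hence $p_{k+1}d\alpha = 0$ for all of them. For $(1)\Rightarrow(2)$: assume $E_{n+k}(\varepsilon) = \cI(d\varepsilon) = 0$. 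Take any local Lepage equivalent $\alpha = \theta_{\varepsilon} + d\nu + \mu$. Then $p_{k+1}d\alpha = 0$, so $d\alpha$ is strongly contact; more precisely $d\alpha \in \Theta^s_{n+k+1}$, and because $d(d\alpha) = 0$, the contact homotopy operator $\cA$ (or the argument in the theorem preceding Corollary on page about closed Lepage equivalents) yields a correction. Concretely, one can write (locally) $d\alpha = d\mu'$ for an at least $(k+2)$-contact form $\mu'$ obtained via $\cA$, so $\alpha - \mu'$ is a closed Lepage equivalent of $\varepsilon$ (it is still Lepage, differing from $\alpha$ by an at least $(k+2)$-contact form, and has the same $p_k$ component $\varepsilon$). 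This establishes $(1)\Rightarrow(2)$ and closes the cycle of equivalences.

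For the primitive: assuming $\varepsilon$ is locally variationally trivial, condition $(2)$ gives a closed Lepage equivalent $\alpha$, hence locally $\alpha = d\beta$ by the Poincar\'e lemma; applying the contact decomposition to $\beta$ adapted via the contact homotopy operator gives $\beta = \cA\alpha$ with $p_{k-1}\beta = \cA\varepsilon$, and the theorem on closed Lepage equivalents (the one stating that a closed Lepage equivalent of a canonical source $(n+k)$-form is $d\alpha$ for a Lepage equivalent $\alpha$ of a canonical source $(n+k-1)$-form) identifies $\eta = \cA\varepsilon$ as the sought primitive source form, i.e.\ $E_{n+k-1}(\eta) = \varepsilon$. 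Finally, for the global statement: when $H^{n+k}_{\rm dR}\bY = \{0\}$, the cohomology of the variational sequence in degree $n+k$ (identified with $H^{n+k}_{\rm dR}\bY$ by Krupka's theorem) vanishes, so the class $[\rho]$ representing $\varepsilon$ has a global preimage class $[\eta']$ with $\cE_{n+k-1}([\eta']) = [\rho]$; taking its Takens (or Lepage) representative yields a global primitive, whose order is controlled by the explicit orders appearing in the interior Euler operator and the Lepage construction, namely $\leq 2r+1$ when $\varepsilon$ has variational order $r$.

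The main obstacle I expect is the bookkeeping of contact degrees and jet orders in the step $(1)\Rightarrow(2)$, i.e.\ verifying that the correction term $\mu'$ produced by the contact homotopy operator $\cA$ is indeed at least $(k+2)$-contact so that $\alpha - \mu'$ remains a genuine Lepage equivalent with unchanged $p_k$-component, and that $\cA$ does not spoil the $\omega^\sigma$-generated / source-form structure; this is precisely where one must invoke carefully the properties $\cA p_j\omega = p_{j-1}\cP\omega$ together with properties (2)--(5) of $\cI$ rather than argue by hand. The global order bound is a secondary technical point, handled by tracking that $\cI\colon \Lambda^r_{n+k}\to\Lambda^{2r+1}_{n+k}$ and that $\cA$ raises order by at most one on the relevant soft sheaves.
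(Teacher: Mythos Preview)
Your argument is correct and complete, though it differs from the paper's in the step $(1)\Rightarrow(2)$. The paper argues structurally: if $\varepsilon=\cI(\rho)$ with $[\rho]\in\Ker\cE_{n+k}$, then by exactness of the variational sequence $[\rho]=[d\nu]$ for some $\nu$, and by Theorem~\ref{ThRd} (Lepage representation commutes with $d$) the class $\tilde R_{n+k}([\rho])=\tilde R_{n+k}([d\nu])$ consists of exterior derivatives of Lepage equivalents of $R_{n+k-1}([\nu])$, which are closed. You instead work locally and hands-on: starting from any Lepage equivalent $\alpha$ with $p_{k+1}d\alpha=0$, you subtract $\mu'=\cA(d\alpha)$ to obtain a closed form. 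The paper's route is cleaner conceptually (it explains the result as a direct manifestation of the Lepage representation being a sequence with morphism $d$), while yours is more self-contained and avoids invoking the exactness of the abstract sequence.

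One small correction that does not affect your conclusion: in your $(1)\Rightarrow(2)$ step, the form $\mu'=\cA(d\alpha)$ is only guaranteed to be at least $(k+1)$-contact, not $(k+2)$-contact, since $\cA$ lowers the contact degree by one and $d\alpha$ is at least $(k+2)$-contact. This is still enough: $p_k(\alpha-\mu')=p_k\alpha=\varepsilon$, and $\alpha-\mu'$ is closed, hence automatically Lepage (every closed form is Lepage). So the ``main obstacle'' you flag is in fact no obstacle at all---you just need closedness plus $p_k\mu'=0$, both of which hold. Your treatment of the primitive via $\eta=\cA\varepsilon$ and of the global statement via the identification of variational-sequence cohomology with de Rham cohomology matches the paper's.
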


\begin{proof}
Let $\varepsilon \in \Ker E_{n+k}$. This means that $\varepsilon = {\cal I}(\rho)$ where $[\rho] \in \Ker {\cal E}_{n+k}$. Hence $[\rho] = [d\nu] = {\cal E}_{n+k-1}([\nu])$, and we have $\{\Lep_{n+k}(\varepsilon)\} = \tilde R_{n+k}(\rho) = \tilde R_{n+k}(d\nu) = \{\Lep_{n+k} (R_{n+k}(d\nu) ) \} = \{0\}$, proving~(2).

Next, assume $d\{\Lep_{n+k}(\varepsilon)\} = \{0\}$. This means that every Lepage equivalent $\alpha$ of $\varepsilon$ satisf\/ies $d\alpha = d\mu$ where $\mu$ is at least $(k+2)$-contact. Hence $p_{k+1}d\alpha = 0$, as desired.

Finally, assume (3). Since $\alpha = \theta + d\nu + \mu$ where $\mu$ is at least $(k+2)$-contact, we can see that $p_{k+1} d\theta = 0$. Hence $\{\alpha\} = \tilde R({[\rho]})$ where $\{d\alpha\} = \tilde R({[d\rho]}) = \{0\}$.
This means that $[d\rho] = [0]$, i.e., $[\rho] \in \Ker {\cal E}_{n+k}$, and $\varepsilon = {\cal I}(\rho) \in \Ker E_{n+k}$.

The remaining assertions follow immediately from the properties of ${\cal A}$, ${\cal I}$, and the variational sequence.
\end{proof}

Condition~(3) in Theorem~\ref{GHC} can be also expressed by means of the Cartan $(n+k)$-form $\theta_\varepsilon$ of the source $(n+k)$-form $\varepsilon$ as
\begin{gather*} %\label{GHC2}
p_{k+1}d\theta_\varepsilon = 0.
\end{gather*}
It represents necessary and suf\/f\/icient {\em source constraints} (constraints on a source form to be locally variationally trivial) for source forms of any degree $q > n$, and of any order. For $q = n+1$ this is exactly an intrinsic form of the celebrated {\em Helmholtz conditions} mentioned above.

\subsection{Bosonic string}

A bosonic string is described as a minimal immersion of a $2$-dimensional surface in a $4$-dimen\-sio\-nal pseudo-Riemannian manifold~$(M,g)$. Hence the `motion equation' is the minimal surface equation arising from the area functional (known as {\em Nambu--Goto action}~\cite{Goto}).

The corresponding f\/ibred manifold is $\pi\colon \bY \to \bX$ where $\bX = \R^2$ and $\bY = \R^2 \times M$, so that $\dim \bY = 6$. Base coordinates $\tau^0$ and $\tau^1$ represent the time evolution parameter and the position on the string, respectively, f\/ibred coordinates $(x^\mu)$, $\mu=0,1,2,3$, are local coordinates on~$M$. We denote by $(\tau^i,x^\mu,x^\mu_i)$, where $i = 0,1$ and $\mu=0,1,2,3$, associated local coordinates on $J^1\bY$, and put $\omega^\mu = dx^\mu - x^\mu_i d\tau^i$.
If $\iota\colon \R^2 \to \Sigma \subset M$ is an immersion of $\R^2$ to~$M$, and $g=g_{\mu\nu}dx^\mu\otimes dx^\nu$ is a (symmetric regular pseudo-Riemannian) metric on $M$, we have the induced metric on $\Sigma$
\begin{gather*}
\iota^*g \equiv h_{ij} d \tau^i\otimes d\tau^j = (g_{\mu\nu} \circ \iota) \frac{\partial x^\mu}{\partial \tau^i} \frac{\partial x^\nu}{\partial \tau^j} d\tau^i \otimes d\tau^j
\end{gather*}
and the area element
\begin{gather*}
d\Sigma =\sqrt{-\det (h_{ij})}d\tau^0\wedge d\tau^1 = \sqrt{-\det \left( (g_{\mu\nu} \circ \iota) \frac{\partial x^\mu}{\partial \tau^i} \frac{\partial x^\nu}{\partial \tau^j} \right) } d\tau^0\wedge d\tau^1,
\end{gather*}
hence the Nambu--Goto action reads
(in units with light speed $c=1$)
\begin{gather*}
S = -T \int_\Sigma d\Sigma,
\end{gather*}
where $T$ is the (constant) string tension. In the f\/ibred setting, we write the action as
$S = \int_\Omega J^1\gamma^*\lambda$, where~$\Omega$ is a domain in~$\R^2$, $\gamma = (\id_{\R^2}, \iota)$ is a~local section of~$\pi$, and the Lagrangian (now considered as a horizontal form on~$J^1\bY$) is
\begin{gather*}
\lambda = -T \sqrt{-\det \big(g_{\mu\nu} x^\mu_i x^\nu_j \big) } d\tau^0\wedge d\tau^1
\end{gather*}
(note that the action is reparametrization invariant).
Let us denote
\begin{gather*}
L\big(\tau^i,x^\mu,x^\mu_i\big) = -T\sqrt{-D} = -T \sqrt{-\det \big(g_{\mu\nu} x^\mu_i x^\nu_j \big) }
 = -T\sqrt{(g_{\alpha\beta} g_{\mu\nu}-
g_{\alpha\mu} g_{\beta\nu}) x^\alpha_0x^\mu_0x^\beta_1x^\nu_1}.
\end{gather*}
The Cartan form of the Lagrangian $\lambda$ is the $2$-form
\begin{gather*}
 \theta_\lambda = L d\tau^0\wedge d\tau^1+
 \frac{\partial L}{\partial x^\mu_0} \omega^\mu\wedge d \tau_1 -
\frac{\partial L}{\partial x^\mu_1} \omega^\mu\wedge d \tau_0 \\
\hphantom{\theta_\lambda}{}
= T\sqrt{-D} d\tau^0\wedge d\tau^1
 + \frac{T}{\sqrt{- D}} (g_{\alpha\mu} g_{\beta\nu}-g_{\alpha\beta} g_{\mu\nu})
d x^\mu\wedge \big(x^\alpha_0 x^\beta_0 x^\nu_1 d\tau^0-x^\alpha_1 x^\beta_1 x^\nu_0 d\tau^1\big).
\end{gather*}
We introduce momenta
\begin{gather*}
 p_\mu^0 = \frac{\partial L}{\partial
x^\mu_0}=-\frac{T}{\sqrt{-D}}(g_{\alpha\beta}g_{\mu\nu}-g_{\alpha\mu}g_{\beta\nu})
x^\alpha_0x^\beta_1x^\nu_1, \\
 p_\mu^1= \frac{\partial L}{\partial
x^\mu_1}=-\frac{T}{\sqrt{-D}}(g_{\alpha\beta}g_{\mu\nu}-g_{\alpha\mu}g_{\beta\nu})
x^\alpha_1x^\beta_0x^\nu_0,
\end{gather*}
and notice that $p^0_\mu x^\mu_0 = p^1_\mu x^\mu_1=-T\sqrt{-D}$, and
$p^0_\mu x^\mu_1 = p^1_\mu x^\mu_0 = 0$.

As we have seen, a projectable vector f\/ield $\Xi$ on $\bY$ is a symmetry of $\lambda$ if $\Xi$ satisf\/ies
the Noether equation $L_{J^1\Xi}\lambda=0$. The corresponding Noether current is then the horizontal $1$-form $\Psi=h (J^1\Xi\rfloor\theta_\lambda)$, and, by Noether theorem, it is closed along critical sections (on every extremal surface $\Sigma$): $d(\Psi \circ J^1\gamma) = 0$.

To make explicit computations, denote
\begin{gather*}
\Xi=\xi^i\frac{\partial}{\partial
\tau^i}+\Xi^\mu\frac{\partial}{\partial x^\alpha} , \qquad
J^1\Xi=\xi^i\frac{\partial}{\partial
\tau^i}+\Xi^\mu\frac{\partial}{\partial
x^\mu}+\left(d_i\Xi^\mu-x^\mu_j\frac{\partial\xi^j}{\partial \tau^i}
\right)\frac{\partial}{\partial x^\mu_i} .
\end{gather*}
Contracting $\theta_\lambda$ and taking the horizontal part we obtain
\begin{gather*}
\Psi = \Xi^\mu \big(p^0_\mu d\tau^1 -p^1_\mu d\tau^0\big) .
\end{gather*}
This means that to every symmetry $\Xi$ of $\lambda$ we have on shell a conservation law
\begin{gather*}
\Xi^\mu \left( \frac{\partial p^0_\mu}{\partial \tau^0} + \frac{\partial p^1_\mu}{\partial \tau^1} \right) = 0 .
\end{gather*}

Symmetries of our Lagrangian can be explicitly determined from the above Noether equation. Since $D$ does not depend on $\tau^0$ and $\tau^1$, with help of momenta the Noether equation yields
\begin{gather*}
 T \frac{\partial\sqrt{-D}}{\partial
x^\mu} \Xi^\mu - p_\mu^i d_i\Xi^\mu =0 .
\end{gather*}
We can see that the equation does not contain the projection $\xi$ of $\Xi$ (which is a vector f\/ield on the base $\bX = \R^2$). This means that {\em every} vector f\/ield on $\R^2$ is a symmetry of~$\lambda$, in other words, we recover the fact that the action is reparametrization invariant. Now se can take into account only {\em vertical} symmetries (which are vector f\/ields on the space-time~$(M,g)$). If, moreover,
$(M,g)$ is the Minkowski space-time, i.e., $\R^4$ with the metric $\operatorname{diag} g=(1, -1, -1, -1)$, then~$D$ does not depend on the $x^\mu$'s, and we immediately obtain that
the Poincar\'{e} group is a symmetry group of~$\lambda$. The symmetries and the corresponding
Noether currents are ($s=1, 2, 3$, $\mu=0, 1, 2, 3$)
\begin{alignat*}{3}
& \Xi=\frac{\partial}{\partial x^\mu}, \qquad &&
\Psi=-p^1_\mu d\tau^0+p^0_\mu d\tau^1, &\\
& \Xi=x^s\frac{\partial}{\partial x^0}+x^0\frac{\partial}{\partial x^s}, \qquad &&
\Psi=\big({-}p_0^1x^s-p^1_sx^0\big) d\tau^0+\big(p^0_0x^s+p^0_sx^0\big) d\tau^1, &\\
 & \Xi=x^2\frac{\partial}{\partial x^1}-x^1\frac{\partial}{\partial x^2}\; \mbox{cycl}, \qquad &&
 \Psi=\big({-}p^1_1x^2+p^1_2x^1\big) d\tau^0+\big(p^0_1x^2-p^0_2x^1\big) d\tau^1 .&
\end{alignat*}

\subsection{Variational Lie derivative and cohomology}

The variational Lie derivative can be seen as a local
dif\/ferential operator which acts on cohomology classes trivializing them \cite{PaWi11}.
This fact has important consequences for symmetries and conservation
laws associated with {\em local variational problems} generating global
Euler--Lagrange expressions; see, e.g., \cite{BFFP03,FePaWi11,FrPaWi13}.

As an immediate example of that feature, let $\del(\cdot)$ denote a cohomology class def\/ined by a closed class in the variational sequence, and let $\varepsilon$ be a dynamical form locally variational: $E_{n+1} (\varepsilon) = 0$ (i.e., locally, on an open set $\bU_\iota$, $\varepsilon\equiv E_{\lam_{\iota}} = E_n (\lam_\iota)$). In the case $q=n+1$, Theorem \ref{HigherLieDer} implies that for every projectable vector
f\/ield $\Xi$ on $\bY$, $L_{J^s\Xi} \varepsilon =E_{n} (J^s \Xi_V \rfloor
\varepsilon)$, that is, the Euler--Lagrange form of the {\em global} Lagrangian $ \lambda = J^s \Xi_V \rfloor
\varepsilon$, which implies $\del (L_{J^s\Xi}E_{\lam_{\iota}})=
\del(E_{L_{J^r\Xi}\lam_{\iota}}) = 0$ even if the cohomology class
of $E_{\lam_{\iota}}$ is nontrivial. In particular, since
$E_{L_{J^r\Xi}\lam_{\iota}} \equiv E_n (L_{J^r\Xi}\lam_{\iota})= E_{n} (J^s\Xi_V \rfloor E_{\lam_{\iota}})$ we
see that Euler--Lagrange equations of the local problem def\/ined by
$L_{J^r\Xi}\lam_{\iota}$ are the same as the Euler--Lagrange equations of the
global problem def\/ined by $\lambda = J^s\Xi_V \rfloor E_{\lam_{\iota}}$. Summarizing,
a problem def\/ined by a family of local Lagrangians $L_{J^r\Xi}\lam_{\iota}$ is variationally equivalent to a global one~\cite{PaWi11,PaWiGa11}.

Analogous results can be obtained at any higher degree in the variational sequence. Another case of particular importance, the degree $n+2$, will be discussed below.

\subsection{Symmetries of Helmholtz forms and variational dynamical forms}

Invariance properties of classes in the variational sequence suggested to Krupka et al.~\cite{KrKr08,KKPS05,KKPS07} the idea that there should be a close relationship between the notions of local
variational triviality of a dif\/ferential form and invariance of its exterior derivative. Namely, `translating' Theorem~\ref{thmLie} to the Takens representation, we get~\cite{KKPS05,KKPS07}:

\begin{Theorem}
The Lie derivative of a source $(n+k)$-form $\varepsilon$ by a contact symmetry~$Z$ is locally variationally trivial if and only if the flow of~$Z$ leaves invariant the source $(n+k+1)$-form $E_{n+k}(\varepsilon)$.
\end{Theorem}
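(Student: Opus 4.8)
The plan is to recognize this statement as the Takens‑representation shadow of Theorem~\ref{thmLie}: once one passes from the class $[\varepsilon]$ to its canonical source‑form representative and keeps careful track of the interior Euler operator $\cI$, the claimed equivalence is just the equivalence $\cL_Z[\varepsilon]\in\Ker\cE_{n+k}\Leftrightarrow\cE_{n+k}(\cL_Z[\varepsilon])=0$ read through the representation maps. So first I would set up notation. Let $\varepsilon$ be a canonical source $(n+k)$‑form of variational order $r$, i.e.\ $\varepsilon=R_{n+k}([\varepsilon])$ for a class $[\varepsilon]\in\cV^r_{n+k}$; then, by property~(3) of $\cI$ and the commutative diagram~\eqref{diagram1} defining $E_{n+k}$ in the Takens representation, $E_{n+k}(\varepsilon)=\cI(d\varepsilon)$ is exactly the canonical representative of $\cE_{n+k}([\varepsilon])=[d\varepsilon]$. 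Since $Z$ is a contact symmetry, the Proposition $L_Z\Thd^r_q\subset\Thd^r_q$ shows that $\cL_Z$ descends to classes and that $[L_Z\varepsilon]=\cL_Z[\varepsilon]$, while $[L_ZE_{n+k}(\varepsilon)]=\cL_Z[d\varepsilon]$ (using $E_{n+k}(\varepsilon)-d\varepsilon\in\Thd$ and $L_Z\Thd\subset\Thd$).

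Next I would use that $Z=J^s\Xi$ for a $\pi$‑projectable $\Xi$ (the standing assumption of Section~\ref{section4} from the point where individual contact components are discussed). Then $L_Z$ preserves every $k$‑contact component, and it also preserves the $\omega^\sigma$‑generated property: indeed $J^1\Xi$ is a symmetry of the Cartan distribution $\cC(\pi_1)$, hence of its annihilator $\cC^0(\pi_1)$, so $L_{J^s\Xi}$ sends order‑$1$ contact forms to order‑$1$ contact forms and therefore $\omega^\sigma$‑generated forms to $\omega^\sigma$‑generated forms. Consequently both $L_Z\varepsilon$ and $L_ZE_{n+k}(\varepsilon)$ are again source forms. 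Now I apply Theorem~\ref{thmLie}, which gives $\cE_{n+k}(\cL_Z[\varepsilon])=\cL_Z(\cE_{n+k}[\varepsilon])=\cL_Z[d\varepsilon]$, hence
\begin{gather*}
\cL_Z[\varepsilon]\in\Ker\cE_{n+k}\quad\Longleftrightarrow\quad\cL_Z[d\varepsilon]=0 .
\end{gather*}
It remains to translate both sides. On the left, $L_Z\varepsilon$ represents $\cL_Z[\varepsilon]$, so by the definition of variational triviality in the Takens representation (membership in $\Ker E_{n+k}$, i.e.\ $\cI(d\,L_Z\varepsilon)=0$) the left‑hand condition is precisely ``$L_Z\varepsilon$ is locally variationally trivial''. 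On the right, $R_{n+k+1}=\cI$ is the unique‑representative map and $\cL_Z[d\varepsilon]=[L_ZE_{n+k}(\varepsilon)]$, so $\cL_Z[d\varepsilon]=0$ iff $\cI(L_ZE_{n+k}(\varepsilon))=0$; but $L_ZE_{n+k}(\varepsilon)$ is a source form, and on source forms $\cI(\eta)=0$ forces $\eta=0$, because $\Ker\cI=\Thd^s_{n+k+1}$ consists of strongly contact forms whereas a source form coincides with its own $(k+1)$‑contact component. Hence $\cL_Z[d\varepsilon]=0$ iff $L_ZE_{n+k}(\varepsilon)=0$, i.e.\ iff the flow of $Z$ leaves $E_{n+k}(\varepsilon)$ invariant; chaining the two equivalences proves the theorem.

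The only non‑formal ingredients are the two just isolated: that a $\pi$‑projectable contact symmetry maps source forms of degree $n+k$ to source forms of degree $n+k$ (preservation of contact degree is already recorded in Section~\ref{section4}; preservation of the $\omega^\sigma$‑generated ideal is the remark that a symmetry of $\cC(\pi_1)$ is a symmetry of $\cC^0(\pi_1)$), and the ``injectivity'' $\cI(\eta)=0\Leftrightarrow\eta=0$ on source forms, which follows from $\Ker\cI=\Thd^s_{n+k+1}$ together with the description of $\Thd^s_{n+k+1}$ as strongly contact forms. I expect the main obstacle to be entirely interpretational rather than computational: one must be careful that ``$L_Z\varepsilon$ is locally variationally trivial'' is read at the level of the class it represents (equivalently, of its canonical representative $\cI(L_Z\varepsilon)=R_{n+k}(\cL_Z[\varepsilon])$), since $L_Z\varepsilon$ need not itself be canonical; with that understood, no further work beyond Theorem~\ref{thmLie} and the formal properties of $\cI$ is required.
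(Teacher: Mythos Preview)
Your approach is exactly the paper's: the theorem is stated there simply as ``translating Theorem~\ref{thmLie} to the Takens representation'', with no further argument, and you carry out precisely that translation. The forward and backward passage between $\cL_Z[\varepsilon]\in\Ker\cE_{n+k}$ and $\cL_Z[d\varepsilon]=0$ via Theorem~\ref{thmLie} is correct, as is your identification $[L_ZE_{n+k}(\varepsilon)]=\cL_Z[d\varepsilon]$.

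There is, however, one genuine gap. You assert that ``$\Ker\cI=\Thd^s_{n+k+1}$ consists of strongly contact forms'', and use this to conclude that a source form $\eta$ with $\cI(\eta)=0$ must vanish. But $\Thd^s_{n+k+1}=\Lambda^s_{n+k+1,c}+(d\Lambda^s_{n+k,c})$ is \emph{not} the sheaf of strongly contact forms: it also contains exterior derivatives of strongly contact forms, and these may have a nonzero $(k{+}1)$-contact component. In fact Proposition~\ref{propsource} shows that for $k\geq 1$ the difference $\rho-\cI(\rho)=\tfrac{k}{k+1}\,\omega^\sigma\wedge p_k d p_k\cR(\eta_\sigma)$ of a source $(n{+}k{+}1)$-form and its canonical representative is itself an $\omega^\sigma$-generated $(k{+}1)$-contact form lying in $\Thd$; generically it is nonzero. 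So ``$\cI(\eta)=0\Rightarrow\eta=0$ on source forms'' fails for $k\geq 1$, and your passage from $\cI(L_ZE_{n+k}(\varepsilon))=0$ to $L_ZE_{n+k}(\varepsilon)=0$ is not justified as written. The clean fix is the one you already anticipate in your closing caveat: read \emph{both} sides at the level of classes (equivalently, of canonical representatives), which is how the paper intends the statement and is all that Theorem~\ref{thmLie} actually delivers. With that reading nothing beyond Theorem~\ref{thmLie} and property~(5) of $\cI$ is needed, and your proof goes through.
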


Applying this fact to $(n+1)$-forms we then get a rather surprising assertion:

\begin{Corollary}[\cite{KKPS05,KKPS07}]
Contact symmetries of Helmholtz forms transfer nonvariational dynamical forms to locally variational ones.
\end{Corollary}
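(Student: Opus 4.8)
The plan is to obtain the corollary by specializing the preceding theorem to $k=1$. First I would set up the data. A \emph{Helmholtz form} is, by definition, an element of the image of the Helmholtz mapping $E_{n+1}$ (as opposed to a mere Helmholtz-like form, i.e.\ an arbitrary source $(n+2)$-form); so a given Helmholtz form $H$ can be written $H=E_{n+1}(\varepsilon)=H_\varepsilon$ for some dynamical form $\varepsilon$. The assertion is only interesting when $\varepsilon$ is \emph{not} locally variational, and by exactness of the Takens representation sequence at the arrow $E_n$ (equivalently, by the Helmholtz conditions) this is precisely the condition $H_\varepsilon=E_{n+1}(\varepsilon)\neq 0$. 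Thus the object to be ``transferred'' is this nonvariational $\varepsilon$, and the transferring map is the variational Lie derivative by a contact symmetry of $H_\varepsilon$.

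Next I would apply the theorem. Let $Z$ be a contact symmetry whose flow leaves $H=E_{n+1}(\varepsilon)$ invariant (so that, in particular, the variational Lie derivative ${\cal L}_Z$ of classes is defined as in~\eqref{Licl}, and the jet orders are matched as in the theorem). The theorem, taken in the case $k=1$, then says exactly that the Lie derivative of $\varepsilon$ by $Z$ is locally variationally trivial, i.e.\ the class ${\cal L}_Z[\varepsilon]$ lies in $\Ker {\cal E}_{n+1}$; equivalently, its Takens representative $\cI(L_Z\varepsilon)$ lies in $\Ker E_{n+1}$. The final ingredient, specific to degree $n+1$, is the identification of ``locally variationally trivial'' with ``locally variational'': a source $(n+1)$-form belongs to $\Ker E_{n+1}$ precisely when it is locally the Euler--Lagrange form of some Lagrangian. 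Moreover, in degree $n+1$ the canonical source form in a class is unique (the Corollary following Proposition~\ref{propsource}), so $\cI(L_Z\varepsilon)$ \emph{is} the dynamical form produced from $L_Z\varepsilon$, and nothing is lost in passing through the class. Hence this dynamical form is locally variational, while $\varepsilon$ itself is not, which is exactly the claimed transfer.

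I do not expect any genuine obstacle here — the statement is a direct specialization of the theorem — so the only care needed is bookkeeping. One must keep straight the distinction between a form, its variational class, and the unique canonical source form representing that class; one must invoke the uniqueness of the dynamical form in a class (the case $k=1$) precisely so that the conclusion about ${\cal L}_Z[\varepsilon]$ descends to a statement about an honest dynamical form; and one must use the defining property of a \emph{Helmholtz form}, namely membership in $\operatorname{Im}E_{n+1}$, to manufacture the dynamical form $\varepsilon$ that is fed into the theorem. Everything else is immediate.
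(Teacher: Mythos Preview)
Your proposal is correct and follows exactly the paper's approach: the paper derives the corollary simply by ``applying this fact to $(n+1)$-forms,'' i.e., specializing the preceding theorem to $k=1$, which is precisely what you do. Your additional bookkeeping (distinguishing classes from representatives, invoking uniqueness of the dynamical form in a class) is sound and more explicit than the paper's one-line derivation, but the argument is the same.
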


This result has an interesting application: namely, to a system of equations which is not variational as it stands, and even does not possess any variational multiplier one can f\/ind a~va\-riatio\-nal system related by a contact transformation
Examples showing this interesting feature are elaborated in the papers~\cite{KKPS05,KKPS07}.

Furthermore, one can obtain interesting results in study of {\em global properties} of variational problems. Let us consider a variationally trivial Helmholtz-like form~$\eta$, i.e., let $E_{n+2} (\eta) =0$, and denote the corresponding {\em local} system of dyna\-mi\-cal forms by~$\varepsilon_\iota$, i.e., let $\eta = E_{n+1}(\varepsilon_\iota)$. If a~projectable vector f\/ield $\Xi$ on~$Y$ is a symmetry of $\eta$ then, by the above, each of the local dyna\-mi\-cal forms $\bar \varepsilon_\iota = L_{J^s\Xi} \varepsilon_\iota$ is locally variational (meaning that locally it is the Euler--Lagrange form of a~Lagrangian). In fact, applying Theorem~\ref{HigherLieDer}, case $k=2$, using the Takens representation with $\eta = R_{n+2}([\rho])$, and formula~(\ref{cocl}), we obtain
\begin{gather*}
 0 = L_{ J^s\Xi} \eta = L_{ J^s\Xi} E_{n+1}(\varepsilon_\iota) = E_{n+1} (L_{ J^s\Xi} \varepsilon_\iota) = 0 +
R_{n+2}\cE_{n+1}( J^{s} \Xi_V \rfloor [\rho] ) \\
 \hphantom{0}= E_{n+1} R_{n+1}(J^{s}\Xi_V \rfloor [\rho] ) = E_{n+1} R_{n+1} ( [J^{s}\Xi_V \rfloor \eta ] ) =
E_{n+1} {\cal I} (J^{s}\Xi_V \rfloor \eta) ,
\end{gather*}
so that $E_{n+1} (L_{ J^s\Xi} \varepsilon_\iota) = E_{n+1} {\cal I} (J^{s}\Xi_V \rfloor \eta) = 0$,
meaning that the system of local Euler--Lagrange forms $L_{J^s\Xi}\varepsilon_\iota$ is variationally equivalent (in the sense that they have the same Helmholtz form) with the {\em global dynamical form}
$ \varepsilon = {\cal I} (J^{s}\Xi_V \rfloor \eta)$. {\em Moreover}, $\varepsilon$ is locally variational, and therefore locally it is the Euler--Lagrange form of a local Lagrangian~\cite{PaWi12}.

We conclude that the variational problem def\/ined by a system of local Euler--Lagrange forms~$\bar \varepsilon_\iota$ is variationally equivalent to the global variational problem def\/ined by the Euler--Lagrange form~$\varepsilon$. Remarkably, in mechanics this `globalization procedure' means to add to the local equations a variational force, `sewing the equations together' on overlaps of their domains.

\subsection{Generalized symmetries generating Noether currents}

Representing the variational sequence by dif\/ferential forms is of fundamental importance for understanding the geometry of Noether theorems.

Let $\varepsilon = E_{\lambda_\iota}$ be a locally variational form on $J^s\bY$, and let a projectable vector f\/ield $\Xi$ on $\bY$ be a generalized symmetry, meaning that $L_{J^s\Xi} \varepsilon = 0$. Recall that by the properties of the Lie derivative this means that $E_n(L_{J^r\Xi} \lambda_{\iota}) = 0$, hence {\em locally} $L_{J^r\Xi} \lam_{\iota} = d_H \beta_{\iota}$. On the other hand, by Theorem \ref{HigherLieDer}, $L_{J^s\Xi} \varepsilon =E_{n} (J^s \Xi_V \rfloor
\varepsilon)$, hence $E_n (J^s \Xi_V\rfloor\varepsilon) = 0$ , and from here
we have {\em locally} $J^s\Xi_V\rfloor \varepsilon =d_H\nu_\iota$, where $\nu_\iota$ are local
currents which are conserved on-shell (i.e., along critical sections). Denoting by
$\eps_\iota$ the usual {\em canonical} Noether current def\/ined by Proposition \ref{1}, case $q=n$, i.e., $\eps_\iota =
J^s\Xi_V\rfloor p_{d_V\lam_\iota} + J^s \Xi_H \rfloor \lam_\iota$, and putting $\beta_\iota = \nu_\iota + \eps_\iota$,
we can write $J^s\Xi_{V} \rfloor \varepsilon + d_{H}( \eps_\iota - \beta_{\iota} )$
$=$ $0$. We call the
(local) current $\eps_{\iota} - \beta_{\iota}$ a {\em Noether--Bessel-Hagen current}.
Thus Noether--Bessel-Hagen currents $\eps_{\iota} - \beta_{\iota}$ are local currents
associated with a generalized symmetry (conserved along critical
sections); in \cite{FrPaWi13} it was proved that a
Noether--Bessel-Hagen current is variationally equivalent to a
global current if and only if $0 =\delta_{n-1} (\Xi_{V} \rfloor
\cE_{n}(\lam_{i})) \in H^{1}(\bY, \ker d_H)$.

It is of interest to investigate under which conditions a Noether--Bessel-Hagen current
is variationally equivalent to a Noether conserved current for a
suitable Lagrangian. A Noether--Bessel-Hagen current $\eps_{\lam_\iota} -\bet_\iota$ associated
with a generalized symmetry of $\varepsilon = E_{\lam_\iota}$ is a Noether
conserved current if it is of the form $\eps_{\lam_\iota} - L_{J^s\Xi}
\mu_\iota $, with $\mu_\iota$ a current satisfying
$L_{J^s\Xi}(\lam_\iota - d_H\mu_\iota)=0$ (i.e., $\Xi$ is a symmetry of the Lagrangian ${\bar \lambda}_\iota = \lam_\iota - d_H\mu_\iota$ for~$\varepsilon$)~\cite{PaWi14}. The reader can check that taking for an invariant Lagrangian instead of the canonical current ${J^s \Xi} \rfloor\theta$ a current, related to a Lepage equivalent dif\/ferent than the Cartan form $\theta$, we get a current which provides the same conservation law, see~\cite{olga09-2}.
We have the on shell conservation laws $d_H(\eps_{\lam_\iota} -\bet_\iota )=0$. It can be proved that $\eps_{\lam_\iota} - L_{J^s\Xi}
\mu_\iota= \eps_{\lam_\iota-d_H \mu_\iota}$ is not only closed but also exact on shell (for a detailed discussion see, e.g.,~\cite{CaPaWi16}). In the following we shall give an example of application concerning symmetries of Jacobi type equations for a Chern--Simons $3D$ gauge theory.

\begin{Example}
Let us then consider the $3$-dimensional Chern--Simons Lagrangian
\begin{gather*}
\lambda_{\rm CS}(A)=\frac {\kappa}{4\pi}\epsilon^{ijm} \operatorname{Tr} \left(A_i d_j A_m+\frac{2}{3} A_i A_j A_m \right) \ome_0 ,
\end{gather*}
where $\ome_0$ is a $3$-dimensional volume density, $\kappa$ a constant, while $A_i=A^\mu_i J_\mu$ are the coef\/f\/icients of the connection $1$-form $A=A_i dx^i$ taking their values in any Lie algebra $\mathfrak g$ with genera\-tors~$J_\mu$. By f\/ixing $\mathfrak g={\mathfrak{sl}}(2,\mathbb{R})$ and choosing the generators
$J_\nu=\frac{1}{2}\sig_\nu$, whith $\sig_\nu$ Pauli matrices, we have $[J_\mu,J_\nu]=\eta^{\sig\rho} \epsilon_{\rho\mu\nu} J_\sig$ and $\operatorname{Tr}(J_\mu J_\nu)=\frac{1}{2}\eta_{\mu\nu}$, with $\eta=\operatorname{diag} (-1,1,1)$ and $\epsilon_{012}=1$. Hence, we can explicitly write
$\lambda_{\rm CS}(A)$ $=$
$\frac{\kappa}{16\pi}\epsilon^{ijk} (
\eta_{\mu\nu} F^\mu_{ij} A^\nu_\rho- \frac{1}{3}\epsilon_{\mu\nu\rho}
A^\mu_i A^\nu_j A^\rho_k )ds$,
where $ F^\mu_{ij}=d_i A^\mu_j-d_j A^\mu_i+ \epsilon^\mu_{\nu\rho}A^\nu_i A^\rho_j$ is the so-called f\/ield strength.
Note that, while Chern--Simons equations of motion are covariant with respect to spacetime dif\/feomorphism and with respect to gauge transformations, the Chern--Simons Lagrangian, instead, is not gauge invariant. Indeed for a~pure gauge tranformation generated by a vertical vector f\/ield $\xi$ -- acting on the gauge potential as
$L_{\xi} A^{\mu}_{i} $ $=$ $ \nabla_i \xi^{\mu} $, where $ \nabla_i $ is the covariant derivative with respect to a principal connection~$\ome^\mu$ def\/ined by~$A^\mu$ in a usual way through the introduction of new variables~\cite{AFR03}~-- we have
$ \cL_{\xi } \lambda_{\rm CS}(A)$ $=$ $d_i \big(\frac{\kappa}{8\pi}\epsilon^{ijk} \eta_{\mu\nu} A^{\mu}_j d_k \xi^{\nu} \big)$.

Let $\nu_\iot + \eps_\iot$ be a $0$-cocycle of strong Noether currents for the Chern--Simons Lagrangian (here~$\iot$ refers to an open set $U_\iot$ in a good covering and in the following we shall use the identif\/ication $\lambda_{\rm CS}(A)=\lambda_{\rm CS}(\iot)$ and use them indif\/ferently); it is known~\cite{FrPaWi12} that
if $j^s\Xi$ is a symmetry of the Chern--Simons dynamical form,
the global conserved current
\begin{gather*}
\Xi_H \rfloor \cL_{\Xi} \lam_{{\rm CS}_{\iot}} + j^s\Xi_{V} \rfloor p_{d_{V}\cL_{\Xi} \lam_{{\rm CS}_{\iot}}} ,
\end{gather*}
is associated with the invariance of the Chern--Simons equations and it is
{\em variationally equivalent} to the variation of the strong Noether currents
$\nu_\iot+\eps_\iot$.
Generators of such a global current lie in the kernel of the second variational derivative and are symmetries of the variationally trivial Lagrangian $\cL_{j^s \Xi_V} \lambda_{\rm CS}(\iot)$.

Let now require $j^s \Xi$ to be a symmetry of the Euler--Lagrange class of the `deformed' Lag\-rangian $\cL_{j^s\Xi} \lam_{\rm CS}(\iot)$, but not of the Chern--Simons class $E_n(\lambda_{\rm CS}(\iot))$ (this means that~$j^s\Xi$ {\em is not} a gauge-natural lift).
Note that in this case $\cL_{j^s\Xi} \lam_{\rm CS}(\iot)$ needs not to be a local divergence, in general. By naturality and by Theorem \ref{HigherLieDer} this means that we require
\begin{gather}\label{higher Bessel}
\cL_{j^s\Xi} E_n (\Xi_V \rfloor E_n(\lambda_{\rm CS}(\iot)) =0,
\end{gather}
i.e., $j^s\Xi$ is a symmetry of a Jacobi type dynamical form (or Euler--Lagrange class of the deformed Lagrangian). Let now $j^s\Xi_V$ be the generator of a gauge transformation.
Since we know that $\cL_{j^s\Xi_V} \lambda_{\rm CS}(\iot) = d_H \gam$, it is clear that such a $j^s\Xi$ is a Noether symmetry of the Lagrangian
\begin{gather*}
\Xi_V \rfloor E_n(\lambda_{\rm CS}(\iot)- d_H\big(\gam - j^s\Xi_V\rfloor p_{d_{V}\lam_{\rm CS}}\big),
\end{gather*}
this comes from equation~\eqref{higher Bessel} and from
\begin{gather*}
\cL_{ j^s\Xi}(\cL_{j^s\Xi} \lambda_{\rm CS}(\iot) - d_H\eps_{\lam_{\rm CS}})= \cL_{j^s\Xi}( d_H\gam - d_H (\Xi_V\rfloor p_{d_{V}\lam_{\rm CS}})).
\end{gather*}
By applying the above results concerning Noether--Bessel-Hagen currents to the `deformed' Lag\-rangian we can then state that along sections pulling back to zero the Jacobi type dynamical form (note that solutions of Chern--Simons equations are among such sections) we have that
\begin{gather*}
d_H \big[j \Xi_{V} \rfloor p_{d_{V}(\cL_{\Xi_V} \lam_{{\rm CS}_{\iot}}-d_H \Xi_V\rfloor p_{d_{V}\lam_{\rm CS}})} \big]\equiv 0 ,
\end{gather*}
i.e., the corresponding Noether--Bessel-Hagen current must be exact on shell which implies a~constraint for the connection $1$-form $A$.
\end{Example}

\subsection[Sharp obstruction to the existence of global solutions in 3D Chern--Simons theories]{Sharp obstruction to the existence of global solutions\\ in 3D Chern--Simons theories}

We now explore the relationship between the existence of global critical sections and the existence of global Noether--Bessel-Hagen currents.
We denote by $[[ \cdot ]]_{\rm dR}$ a class in de Rham cohomology. As already recalled, the Noether--Bessel-Hagen current is variationally equivalent to a global current if and only if $0 =\delta_{n-1} (\Xi_{V} \rfloor
\varepsilon)\sim [[\Xi_{V} \rfloor \varepsilon]]_{\rm dR} \in H^{n}_{\rm dR}(\bY)$.

In the following we shall need to see certain cohomology classes in $H^{n}_{\rm dR}(\bY)$ as the pull-back of cohomology classes in $H^{n}_{\rm dR}(\bX)$, i.e., def\/ined by {\em closed differential forms on the base manifold}.

Let us then assume $\sig$ be a {\em global section} of $\pi\colon \bY\to\bX$; for any global section we have of course $\sig^{*}\circ \pi^{*} =\operatorname{id}_{H^{n}_{\rm dR}(\bX)}$. Let furthermore $H^{n}_{\rm dR}(\bY) \sim \pi^{*} (H^{n}_{\rm dR}(\bX))$, i.e., {\em the pull-back $\pi^{*}\colon H^{n}_{\rm dR}(\bX)$ $ \to $ $H^{n}_{\rm dR}(\bY)$ is an isomorphism of cohomology groups}.

We need to compare the cohomology of the base manifold with the cohomology of the total space, and while $H^{n}_{\rm dR}(\bY)\simeq H^{n}_{VS}(\bY)$, in principle we do not have a similar isomorphism for the cohomology groups $H^{n}_{\rm dR}(\bX)$. This is overcome by the peculiar structure of quotient space in the variational sequence: since for all sections
$J^r \sig^{*} (\Thd_{*}^{r})=0$, we can correctly def\/ine $J^r \sig^{*} [[\Xi_{V} \rfloor\varepsilon]]_{\rm dR} \doteq [[ J^r\sig^{*} (\Xi_{V} \rfloor\varepsilon) ]]_{\rm dR}$; in other words,
since $\Thd_{n}^{r}$ consists of contact $n$-forms the pull-back $\sig ^{*}$ factorizes through $\For_{n}^{r}/\Thd_{n}^{r} $.
Furthermore, there always exists a closed form $\alp \in \For_{n}^{r}$, e.g., a~Lepage equivalent, which represents the cohomology class $[[ \Xi_{V} \rfloor \varepsilon]]_{\rm dR}$ and which projects onto $\Xi_{V} \rfloor \varepsilon \in \For_{n}^{r} / \Thd_{n}^{r} $. Thus, if $\sig$ is a critical section, $J^r \sigma ^{*} (\alpha) = 0$ and the corresponding class vanishes in $H^{n}_{\rm dR}(\bX)$; therefore, by our assumption on the $n$th cohomology groups $[[\Xi_V \rfloor \varepsilon ]]_{\rm dR} = 0$ and there exist global Noether--Bessel-Hagen conserved currents for all generalized symmet\-ries~$\Xi$~\cite{FrPaWi13}.

We can use the result above also conversely to def\/ine a cohomological obstruction to the existence of global solutions for a given problem.
Let then $\sigma$ be a section of $\bY$ over $\bX$. Let $\alp$ be a~closed dif\/ferential form which represents
$[[ \Xi_{V} \rfloor \varepsilon ]]_{\rm dR}$ $\in$ $H^{n}_{\rm dR}(\bY)$, the obstruction to the existence of global conserved quantities for conservation laws associated with the problem $\varepsilon$ and its symmet\-ry~$\Xi$. Since $\sig$ (as well as any other section) def\/ines a projection $\sig^{*}\colon H^{n}_{\rm dR}(\bY) \mapsto H^{n}_{\rm dR}(\bX)$, if the class $J^r\sig^{*} [[\alp ]]_{\rm dR}\in H^{n}_{\rm dR}(\bX)$ does not vanish, then there are no global solutions.

In the case we drop the above condition on the isomorphism of cohomology groups, if the pull-back with a~given section of the cohomology class $[[\Xi_{V} \rfloor \varepsilon ]]_{\rm dR}$ is not zero, no other {\em homotopic} section can be a~critical section.
Note that, in this case, the obstruction {\em does depend} on the section; $J^r \sig^{*} [[\alp]]_{\rm dR}$ may vanish for sections from one homotopy class, but not for those from another~\cite{FrPaWi13}.

It is also important to note that this obstruction is not identically zero, in general.

As an example, let us consider the classical $3$-dimensional Chern--Simons theory. It is a~classical f\/ield theory for principal connections on an arbitrary principal bundle $\bP$ over a $3$-dimensional manifold~$\bM$.
It is well known that
the equations can be written $F = 0$ ($F$~the curvature);
these equations admit solutions if and only if the cohomology class $[[\operatorname{tr} F_{A} ]]_{\rm dR}$ (``f\/irst Chern class'') vanishes for an arbitrary (and, thus, for every) connection~$A$.
Thus, we will consider only principal bundles $\bP$ with group ${\rm U}(1)$ or $\C^*$ (taking the trace of the curvature corresponds to ``taking the determinant of $\bP$'').

To apply our result, we have to formulate Chern--Simons theory in terms of the bundle of connections:
 the bundle of connections is def\/ined as
 \begin{gather*}
 \mathcal{C} := J^{1}\bP/\bG \mapsto \bP/\bG \sim \bX ,
 \end{gather*}
 the principal connections on $\bP$ are in one to one correspondence with the (global) sections of the bundle $\mathcal{C} \mapsto \bX$;
Chern--Simons theory can then be formulated as a variational theory on~$J^{1}\mathcal{C}$.
In this case is it advantageous to lift the dynamical form of the problem to the de Rham complex:
we can f\/ind a dif\/ferential form $\Sigma$ on $J^{1}\mathcal{C}$ such that a section $\sigma$ of
$\mathcal{C} \mapsto \bX$ is critical if and only if
\begin{gather*}
J^{1}\sigma^{*}(\Xi \rfloor \Sigma)=0,
\end{gather*}
for all vector f\/ields $\Xi$ on $J^{1}\mathcal{C}$.

We have that the contact structure of $J^{1}\bP$ def\/ines a connection on the principal bundle $J^{1}\bP \mapsto \mathcal{C}$; its curvature is given by $\mathcal{F} = dA_{j} \wedge dx^{j}$ (structure group ${\rm U}(1)$ or $\C^*$).
This connection is universal in the sense that every principal connection on $\bP \mapsto \bX$ is ``induced'' by it via a~section $\sigma\colon \bX \mapsto \mathcal{C}$, in particular $F_{\sigma} = \sigma^{*}\mathcal{F}$.

 We can set now
\begin{gather*}
 \Sigma := \pi_{1,0}^{*} \mathcal{F} \wedge \mathcal{F},
\end{gather*}
where $\Sigma$ is, of course, the exterior dif\/ferential of the Poincar\'e--Cartan form.
Notice that a~projectable vector f\/ield $\Xi$ is a symmetry of $\Sigma$ if and only if $d(\Xi \rfloor \Sigma)=0$.
The corresponding class to $[[\Xi_{V} \rfloor \varepsilon]]_{\rm dR}$ for a generic $\varepsilon$ is now
 $[[\Xi \rfloor \Sigma]]_{\rm dR}= 2\pi_{1,0}^{*} ((\Xi \rfloor \mathcal{F}) \wedge \mathcal{F})$.

If the Euler--Lagrange equations admit global solutions, then $\mathcal{F} = d\gamma$ and, thus,{\samepage
\begin{gather*}
[[\Xi \rfloor \Sigma]]_{\rm dR} = 2 \pi_{1,0}^{*} ((\Xi \rfloor \mathcal{F}) \wedge d\gamma) = 2d(\pi_{1,0}^{*} (\Xi \rfloor \mathcal{F} \wedge \gamma).
\end{gather*}
If the equations do not admit global solutions then $[[\mathcal{F}]]_{\rm dR} \neq 0$.}

Now, if $\bX$ is a closed manifold, then there is a closed cohomologically nontrivial $1$-form~$\alpha$ such that $\alpha \wedge F_{\sigma}$ and, hence, $(\pi^{*} (\alpha)) \wedge \mathcal{F}$ are cohomologically non trivial (Poincar\'e duality).

Since $\bX$ is parallelizable, we can easily f\/ind a {\em vertical} vector f\/ield $\Phi$ such that $\Phi\rfloor \mathcal{F} = \pi^{*} (\alpha)$ and
\begin{gather*}
[[\Phi\rfloor \Sigma]]_{\rm dR} = 2 \pi_{1,0}^{*} (\pi^{*} (\alpha) \wedge \mathcal{F}) .
\end{gather*}
Thus, for closed $3$-manifolds Chern--Simons theories admit solutions if and only if
\begin{gather*}
[[\Phi \rfloor \Sigma]]_{\rm dR}=0 .
\end{gather*}

For non compact manifolds this is, of course, no longer true; thus, for non compact $3$-manifolds the question, whether the obstruction is sharp, depends on the existence of a suitable compactif\/ication, see also~\cite{PaWi15}.

\subsection*{Acknowledgements}

Research supported by Department of Mathematics, University of Torino, research projects {\em Geometric methods in mathematical physics and applications} 2014 and 2015, GNSAGA-INdAM and by grant no 14-02476S {\em Variations, Geometry and Physics} of the Czech Science Foundation. MP and EW are grateful to
M.~Ferraris for useful discussions. OR also wishes to acknowledge travel support by the IRSES project LIE-DIFF-GEOM (EU FP7, nr~317721) and the hospitality of the Department of Mathematics, Ghent University, Belgium.

%\cite{BVH,Gr99}

\pdfbookmark[1]{References}{ref}
\LastPageEnding

\end{document}